\def\BibTeX{{\rm B\kern-.05em{\sc i\kern-.025em b}\kern-.08em
    T\kern-.1667em\lower.7ex\hbox{E}\kern-.125emX}}
\newcommand{\lp}{\left(}
\newcommand{\rp}{\right)}
\newcommand{\lsb}{\left[}
\newcommand{\rsb}{\right]}
\newcommand{\mb}{\,\middle|\,}
\let\oldforall\forall
\renewcommand{\forall}{\oldforall\;}
\newcommand{\pad}{\addvspace{1.5ex}}
\newcommand{\Sp}{\mathcal S}
\newcommand{\proofover}{\hfill\IEEEQED\par\addvspace{1.5ex}}
\newcommand{\E}{\mathop{\mathbb{E}}\nolimits} 
\newcommand{\Prob}{\mathbb{P}} 
\newcommand{\R}{\mathbb{R}} 
\newcommand{\N}{\mathbb{N}} 
\theoremstyle{remark}
\newtheorem{remark}{Remark}
\newmdtheoremenv[
  linewidth=0pt,
  topline=true,
  bottomline=true,
  leftline=true,
  rightline=true,
  linecolor=black,
  backgroundcolor=white, 
  innertopmargin=6pt,
  innerbottommargin=6pt,
  innerleftmargin=6pt,
  innerrightmargin=6pt,
  skipabove=10pt,
  skipbelow=10pt,
]{lemma}{\upshape{\textbf{Lemma}}}
\newmdtheoremenv[
  linewidth=0pt,
  topline=true,
  bottomline=true,
  leftline=true,
  rightline=true,
  linecolor=black,
  backgroundcolor=white, 
  innertopmargin=6pt,
  innerbottommargin=6pt,
  innerleftmargin=6pt,
  innerrightmargin=6pt,
  skipabove=10pt,
  skipbelow=10pt,
]{claim}{\upshape{\textbf{Claim}}}
\newmdtheoremenv[
  linewidth=0.8pt,
  topline=true,
  bottomline=true,
  leftline=true,
  rightline=true,
  linecolor=black,
  backgroundcolor=white, 
  innertopmargin=6pt,
  innerbottommargin=6pt,
  innerleftmargin=6pt,
  innerrightmargin=6pt,
  skipabove=10pt,
  skipbelow=10pt,
]{proposition}{\upshape{\textbf{Proposition}}}
\newmdtheoremenv[
  linewidth=0.8pt,
  topline=true,
  bottomline=true,
  leftline=true,
  rightline=true,
  linecolor=black,
  backgroundcolor=white, 
  innertopmargin=6pt,
  innerbottommargin=6pt,
  innerleftmargin=6pt,
  innerrightmargin=6pt,
  skipabove=10pt,
  skipbelow=10pt,
]{corollary}{\upshape{\textbf{Corollary}}}
\newmdtheoremenv[
  linewidth=1pt,
  topline=true,
  bottomline=true,
  leftline=true,
  rightline=true,
  linecolor=black,
  backgroundcolor=white, 
  innertopmargin=6pt,
  innerbottommargin=6pt,
  innerleftmargin=6pt,
  innerrightmargin=6pt,
  skipabove=10pt,
  skipbelow=10pt,
]{theorem}{\upshape{\textbf{Theorem}}}
\begin{document}
\title{Optimal Service Mode Assignment in a Simple Computation Offloading System: Extended Version
\thanks{This work was supported by the National Science Foundation (NSF) under Grant Nos. CNS-2148183 and CNS-2148128.}}

\author{
\IEEEauthorblockN{Darin Jeff and Eytan Modiano}
\IEEEauthorblockA{LIDS, Massachusetts Institute of Technology, Cambridge, MA, USA\\
Email: \{djeff, modiano\}@mit.edu}
}

\maketitle

\begin{abstract}
We consider a simple computation offloading model where jobs can either be fully processed in the cloud or be partially processed at a local server before being sent to the cloud to complete processing.  Our goal is to design a policy for assigning jobs to service modes, i.e., full offloading or partial offloading, based on the state of the system, in order to minimize delay in the system.  We show that when the cloud server is idle, the optimal policy is to assign the next job in the system queue to the cloud for processing.  However, when the cloud server is busy, we show that, under mild assumptions, the optimal policy is of a threshold type, that sends the next job in the system queue to the local server if the queue exceeds a certain threshold. Finally, we demonstrate this policy structure through simulations.
\end{abstract}

\begin{IEEEkeywords}
computation offloading, delay-optimal control, dynamic scheduling, switch-type policies, cloud computing.
\end{IEEEkeywords}
\section{Introduction}\label{sec:intro}
\subsection{Motivation}
Modern computational workflows are increasingly resource-intensive and reliant on cloud computing infrastructure. This trend has accelerated significantly with the widespread adoption of Large Language Models (LLMs), such as ChatGPT and Grok, whose memory and compute intensive architectures typically surpass the capabilities of individual user devices.
Contemporary systems often rely almost exclusively on cloud-based processing, with little to no local computations. At the same time, processing capacities at intermediate network nodes, such as user devices, edge servers, and intermediate wireless nodes continue to improve. Leveraging these intermediate resources can enable data compression, reducing demands on communication links and enhancing the computational capacity of the system. Such hybrid strategies open opportunities to enhance overall system capacity and reduce end-to-end delay.

Figure~\ref{fig:motivation}  illustrates a representative service pipeline commonly encountered in modern offloading workflows. A recurring theme in such pipelines is that increased processing at an upstream stage reduces the workload at subsequent stages. In general, incoming jobs can be processed under multiple service modes, each placing different levels of load on the various stages. When one stage becomes relatively congested, a backpressure-driven intuition suggests that selecting service modes which shift load away from that stage can facilitate implicit load balancing. However, this introduces a fundamental delay trade-off: assigning more processing to an upstream stage can reduce downstream load and delay at this stage, but will increase the load and delay at upstream stages.

\begin{figure}[t]
\centering
\includegraphics[width=0.48\textwidth]{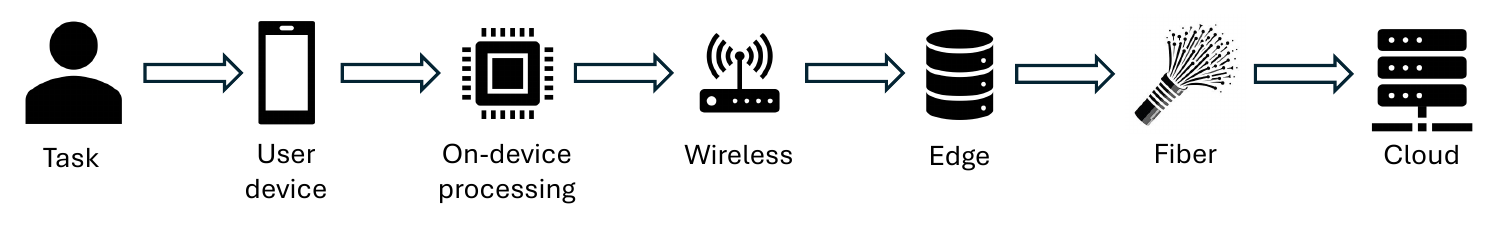}
\caption{Illustration of a typical offloading pipeline with multiple computationally capable network nodes.}
\label{fig:motivation}
\end{figure}

\subsection{Contributions}
We formally analyze this trade-off by modeling a two-stage system with two distinct service modes: (i) pure offloading, where the job is entirely processed at the second stage, and (ii) mixed service, where the job undergoes partial processing at the first stage before being forwarded. The system is modeled as a continuous-time Markov Decision Process (CTMDP) with the objective of minimizing average job delay (i.e. sojourn time). To characterize delay-optimal scheduling policies, we formulate the problem within a dynamic programming framework.

Our objective is to show that the optimal policy admits a simple structural characterization. We show that when the second-stage (cloud) resources are idle, it is always optimal to offload the next job. When the cloud is busy, under mild conditions, the optimal decision follows a \textit{switch-type} structure: assign the job to the mixed-processing mode if queue lengths exceed certain thresholds, and idle otherwise. Our analysis leverages a coupling-based approach to establish this structure, and we verify our results through simulation.

\subsection{Related work}
Trade-offs in processing networks have been studied from various perspectives. In~\cite{LucaLucaLuca}, the authors investigate optimal estimation under computation-communication trade-offs. The work in~\cite{LucaVish} considers the co-design problem for computation and communication, where multiple agents transmit updates to a common base station with the goal of minimizing Age of Information (AoI). In the context of service distribution,~\cite{Hao} proposes an extension of the Lyapunov drift-plus-penalty method that achieves throughput optimality while minimizing link-related costs.

Closer to our setting, scheduling in distributed computing networks with both packet transmission and processing requirements is addressed in~\cite{Jianan}. This work also focuses on throughput-optimal dynamic control policies, though delay minimization is not explicitly addressed.

In the limiting regime of our model -- where one service mode does all the processing locally and the other service mode offloads all the processing to the cloud, our formulation reduces to the classical heterogeneous server problem initially proposed in~\cite{Larsen}. Here, the delay-optimal policy was conjectured to have a threshold structure - a result later proved using policy iteration techniques in~\cite{LinK}, and subsequently simplified via coupling arguments in~\cite{Walrand}. While threshold-type policies are known to be optimal for two heterogeneous servers, extending this result to larger systems has remained an open problem. Several attempts have been made in this direction~\cite{Rykov,Luh}, but later scrutiny revealed issues in their proofs~\cite{Francis,Koole}.

\medskip
\noindent\textbf{Paper Structure.}
The remainder of the paper is organized as follows. Section~\ref{sec:sys_model} introduces the system model. In Section~\ref{sec:analysys}, we analyze the optimal policy and establish its structural properties. Section~\ref{sec:simulations} presents simulation results that verify our theoretical findings and compare the delay performance of the optimal policy with baseline strategies.  Section~\ref{sec:conclusion} concludes with open directions.

\section{System Model}\label{sec:sys_model}
\subsection{Operational Regime}
We consider an offloading system comprising two sequential servers: a local server with processing capacity $\mu_0$, and a cloud server with capacity $K\mu_0$, where $K>1$. This setup is illustrated in Figure~\ref{fig:sys_model}. Jobs arrive according to a Poisson process with rate $\lambda$. The system employs two representative service strategies:
\begin{itemize}
\item \textbf{Service Mode 1 (Offloading):} Jobs are sent directly to the cloud without local processing. The service time at the cloud under this mode is exponential with rate $\mu_{c1} = K\mu_0$.
\item \textbf{Service Mode 2 (Mixed):} Jobs receive a fraction $f$ of their total service requirement at the local server, and the remainder at the cloud. The service times at the local and cloud servers under this mode are independent exponential random variables with rates $\mu_{l2} = \mu_0/f$ and $\mu_{c2} = K\mu_0/(1-f)$, respectively.
\end{itemize}

Jobs first enter a centralized \textit{base queue}, as shown in Figure~\ref{fig:sys_model}, and remain there until assigned a service mode by a \textit{Dynamic Mode Selector} (DMS). The control decisions of mode assignment are taken at \textit{transition instants} — namely, upon a job arrival or a service completion at either server. Possible control actions include idling, assigning a job from the base queue to Service Mode 1 (SM1), or assigning it to Service Mode 2 (SM2).
The servers are non-idling and maintain separate queues for each service modes. The cloud uses Shortest Expected Processing Time (SEPT) scheduling, prioritizing SM2 over SM1 jobs.

Once assigned a service mode, a job must complete its service under that mode. To avoid premature commitment without compromising optimality, the DMS assigns jobs to SM1 (or SM2) only if no other SM1 (or SM2) jobs are queued at the cloud (or local server, respectively). Consequently, the local SM2 queue and cloud SM1 queue each require only unit storage capacity.

To enable effective load balancing between the servers, the fraction of processing assigned locally in SM2 must satisfy:
\[f > \frac{1}{K+1}\]
Given the server capacities $\mu_0$ and $K\mu_0$, the term $1/(K+1)$ represents the capacity of the local server relative to the total system capacity. Since SM1 jobs are processed entirely at the cloud, this condition enables the system to redistribute the load to the local server by assigning SM2.

To simplify our analysis, we introduce a slightly stronger assumption:
\[f > \frac{1}{K} \label{eqn:dominance}\]
which is equivalent to requiring that $\mu_{c1} > \mu_{l2}$. Since cloud resources typically dominate ($K \gg 1$), this condition only marginally tightens the requirement for load balancing and facilitates performance analysis without compromising the model's generality. 

\begin{figure*}[t]
	\centering
	\includegraphics[width=0.9\textwidth]{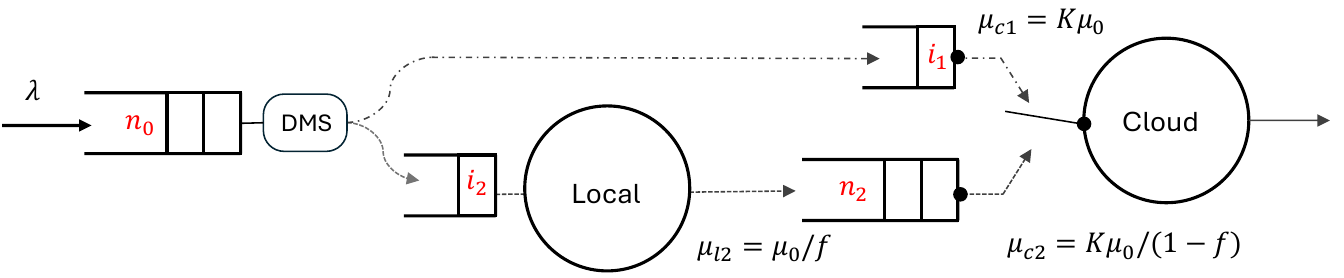}
	\caption{System model with two service modes: SM1 (pure offloading) and SM2 (mixed processing). Jobs are assigned to modes by the scheduler (DMS) based on system state. The cloud server maintains separate queues and prioritizes SM2 jobs using SEPT scheduling.}
	\label{fig:sys_model}
\end{figure*}

\begin{remark} \textbf{(Heterogenous case)}
While the system description above focuses on computational tasks, where job processing is split across the two stages in fractions $f$ and $1-f$, our model naturally extends to heterogeneous settings. In particular, the first and second stages may represent different types of operations, such as computation and communication, respectively, as in the on-device \textendash~wireless segment of the offloading pipeline shown in Figure~\ref{fig:motivation}.
In this generalized case, jobs under SM1 are served at rate $\mu_{c1}$ at the second stage, while SM2 jobs receive service at rates $\mu_{l2}$ and $\mu_{c2}$ at the first and second stages, respectively, with $\mu_{c2}>\mu_{c1}$. This system can be cast in our modeling framework with parameters $\mu_0 = \frac{\mu_{l2}(\mu_{c2}-\mu_{c1})}{\mu_{c2}}$, $K = \frac{\mu_{c1}\mu_{c2}}{\mu_{l2}(\mu_{c2}-\mu_{c1})}$ and $f = \frac{\mu_{c2}-\mu_{c1}}{\mu_{c2}}$. 
\end{remark}


\subsection{State and admissible controls}
To facilitate the analysis of this system using a dynamic programming framework, we introduce notation to rigorously describe the system state, and controls. Let $\mathbb{N}_0$ denote the set of non-negative integers. We define the system state as a 4-tuple
\[s \coloneqq (n_0, i_2, i_1, n_2) \in \Sp,\]
with state space $\mathcal{S} = \mathbb{N}_0 \times \{0,1\}^2 \times \mathbb{N}_0$. 

The interpretation of each component is summarized in Table~\ref{tab:statevars}.
We use 
\[n(s)\coloneqq n_0+i_2+i_1+n_2\]
to denote the total number of jobs in the system in state $s$.
\begin{table}[t]
\caption{Interpretation of State Variables}
\label{tab:statevars}
\centering
\begin{tabular}{|c|l|}
\hline
\textbf{Symbol} & \textbf{Description} \\
\hline
$n_0$ & Number of jobs in the base queue \\
$i_2$ & Indicator for a local SM2 job in service \\
$i_1$ & Indicator for a cloud SM1 job (in queue or service) \\
$n_2$ & Number of SM2 jobs at the cloud \\
\hline
\end{tabular}
\end{table}

Given a state $s$, admissible control actions at the DMS are:
\[
\mathcal{U}(s) = 
\begin{cases} 
\{\mathtt{idle,\!~SM1,\!~SM2}\}, 
  & \text{if } n_0 \!\geq\! 1,\ i_2 = i_1 = 0 \\
\{\mathtt{idle,\!~SM1}\}, 
  & \text{if } n_0 \!\geq\! 1,\ i_2 = 1,\ i_1 = 0 \\
\{\mathtt{idle,\!~SM2}\}, 
  & \text{if } n_0 \!\geq\! 1,\ i_2 = 0,\ i_1 = 1 \\
\{\mathtt{idle}\}, 
  & \text{otherwise}
\end{cases}
\]

with each control action defined as follows:
\begin{description}
    \item \texttt{idle}: No service mode assignment is made.
    \item \texttt{SM1}: A job from the base queue is assigned to the SM1 cloud queue.
    \item \texttt{SM2}: A job from the base queue is assigned to the SM2 local queue.
\end{description}

\noindent
\textbf{Note:} We use '\texttt{SM1}' to refer to the control action selected by the DMS, and plain 'SM1' to denote the corresponding mode or queue.

To concisely describe state transitions due to arrivals, service completions, and service mode assignments, we define state transition and service-mode operators in Table~\ref{tab:state_transitions}. These operators map a state $s\in\Sp$ to another state $s'\in\Sp$. For example, if a system is in state $s_0\in\Sp$ and experiences an arrival, transitioning to state $s_1 = s_0 + (1,0,0,0)$, and subsequently assigns a job using \texttt{SM1}, transitioning to $s_2 = s_1 + (-1,0,1,0)$, we can concisely represent this as $s_1 = As_0$ and $s_2 = U_1As_0$.

\begin{table*}[t]
\centering
\caption{State Transition Operators}
\label{tab:state_transitions}
\begin{tabular}{l l l l}
\toprule
\textbf{Operator} & \textbf{Definition} & \textbf{Domain} & \textbf{Description} \\
\midrule
$A$  & $(n_0, i_2, i_1, n_2) \mapsto (n_0 + 1, i_2, i_1, n_2)$ 
& $\Sp$  & Arrival \\
$D_1$  & $(n_0, i_2, i_1, n_2) \mapsto (n_0, i_2, (i_1 - 1)^+, n_2)$ 
& $i_1 = 1$ & Cloud SM1 departure \\
$D_2$ & $(n_0, i_2, i_1, n_2) \mapsto (n_0, i_2, i_1, (n_2 - 1)^+)$ 
& $n_2 \in \N$ & Cloud SM2 departure \\
$D_L$ & $(n_0, i_2, i_1, n_2) \mapsto (n_0, (i_2 - 1)^+, i_1, n_2 + i_2)$ 
& $i_2 = 1$ & Local SM2 completion \\
$U_1$ & $(n_0, i_2, i_1, n_2) \mapsto (n_0 - 1, i_2, i_1 + 1, n_2)$ 
& $n_0 \in \N,\ i_1 = 0$ & Assign to SM1 \\
$U_2$ & $(n_0, i_2, i_1, n_2) \mapsto (n_0 - 1, i_2 + 1, i_1, n_2)$ 
& $n_0 \in \N,\ i_2 = 0$ & Assign to SM2 \\
\bottomrule
\end{tabular}
\end{table*}


\subsection{Dynamic Programming Formulation}
To quantitatively compare delay performance under various system policies, we utilize a continuous-time dynamic programming (DP) framework. The DP formulation involves the state space $\mathcal S$, the state-dependent action space $\mathcal U(s)$, cost rate function $c(s) = n(s)$ and discount factor $\beta$.

A \textit{system policy}, $\pi$ prescribes state-dependent control actions for the Dynamic Mode Selector (DMS). Formally, given the state $s \in \mathcal{S}$, the policy specifies an action:
\[\pi(s) \in \mathcal{U}(s)\]
We focus exclusively on stationary Markovian policies since they are known from standard DP theory (e.g., \cite{Bertsekas}), to be optimal for minimizing the expected discounted cost.\\

\noindent
\textbf{Note:} If a policy wants to assign two jobs from the base queue one to SM1 and one to SM2, it can do this by picking the respective control actions in immediate succession. Without loss of generality, we require control action \texttt{SM1} to be \textbf{picked first} in such cases.\\

Given an initial state $s_0\in\Sp$ and a policy $\pi$, the \textit{discounted cost-to-go} is defined as:
\[J_\beta(s_0; \pi) = \mathbb{E}_\pi\left[\int_0^\infty e^{-\beta t} n(t) \, dt \mid s(0) = s_0\right]\]
and the corresponding \textit{value function} at state $s_0$ is:
\[V_\beta(s_0) = \min_{\pi} J_\beta(s_0; \pi).\]

Finally, a policy $\pi^*$ is termed \textit{delay optimal} if it minimizes the expected discounted delay cost among all stabilizing policies, within the stability region, i.e.
\[J_\beta(s_0; \pi^*) = V_\beta(s_0) \quad \text{for all } s_0 \in \mathcal{S}.\]


\subsection{Special Policies}
In our main result, we identify specific structural properties of the optimal policy. To formalize these properties, we introduce two special classes of policies.

First, define
\[\mathcal{S}_1 \coloneqq \{s = (n_0, i_2, i_1, n_2) \in \mathcal{S} \mid n_0 \in \mathbb{N}, \, i_1 = n_2 = 0\},\]
representing states where the cloud queues are empty and jobs are available in the base queue. Since the cloud prioritizes SM2 jobs, assigning a job to SM1 in a state in $\Sp\setminus\Sp_1$ (where the SM2 cloud queue is non-empty), leads to that job being blocked until all SM2 jobs are processed. Hence an optimal policy doesn't assign SM1 in states outside $\Sp_1$.
Knowing that SM1 is not assigned outside $\Sp_1$, we now define a policy to be \textit{cloud-first} if it assigns SM1 for all states in $\Sp_1$.

\begin{definition}[{\normalfont\itshape Cloud-First Policy}] \label{def:cloud_first}
A policy $\pi$ is said to be \textit{cloud-first} if
\[\pi(s) = \mathtt{SM1} \quad \forall s \in \Sp_1.\]
where,
\[\mathcal{S}_1 \coloneqq \{s = (n_0, i_2, i_1, n_2) \in \mathcal{S} \mid n_0 \in \mathbb{N}, \, i_1 = n_2 = 0\},\]
\end{definition}

Identifying an optimal policy as cloud-first fully characterizes its SM1 assignments, reducing the remaining specification to states in $\Sp\setminus\Sp_1$, where possible actions are only either assigning SM2 or idling.

We now define a class of policies that have a switching structure in their use of control \texttt{SM2}.

\begin{definition}[{\normalfont\itshape Switch-Type Policy}] \label{def:switch-type}
A policy $\pi$ is said to be \textit{switch-type} if it satisfies the following monotonicity property:
\[\pi(s) = \mathtt{SM2} \Rightarrow \pi(s + (x_0, 0, 0, x_2)) = \mathtt{SM2}\]
for all $s \in \Sp$ and $(x_0, x_2) \in \mathbb N_0^2$.
\end{definition}

This property means that if a policy assigns SM2 in a given state, it also assigns SM2 in states derived by adding jobs to either the base queue or the SM2 queue.


\section{Analysis}\label{sec:analysys}
\subsection{Coupling Framework}\label{sec:coupling_framework}
Before stating our main result, we introduce a path-wise coupling argument to compare system trajectories under different policies. Similar arguments have been used in \cite{Walrand} to show the delay optimality of threshold policies in the M/M/2 system with heterogenous servers.

To show that a policy $\pi$ is suboptimal in this framework, we consider a system, $S$ operating under $\pi$ with system state $S(t)$. We then consider a modified policy, $\tilde\pi$ governing a coupled system, $\tilde S$ with state process $\tilde S(t)$. We use $n(t), \tilde n(t)$ to denote the number of jobs in $S,\tilde S$ respectively. Each job $J$ arriving into either system is associated with a triplet of service times: $(\sigma_{c1}^J, \sigma_{l2}^J, \sigma_{c2}^J)$, where the elements denote:
\begin{itemize}
    \item $\sigma_{c1}^J$: Service time for $J$ at the cloud under SM1,
    \item $\sigma_{l2}^J$: Service time for $J$ at the local server under SM2,
    \item $\sigma_{c2}^J$: Service time for $J$ at the cloud under SM2.
\end{itemize}

Since only one of the service modes ($\sigma_{c1}^J$ or $(\sigma_{l2}^J, \sigma_{c2}^J)$) is realized along any given sample path of a system, we couple $\sigma^J_{l2}$ and $\sigma_{c1}^J$ by setting:
\begin{equation}\sigma^J_{l2} = \frac{\mu_{c1}}{\mu_{l2}} \sigma^J_{c1}\label{eqn:job_self}\end{equation}
This coupling is valid due to properties of exponential random variables, allowing simple scaling to match distributions.
Finally, we couple the systems to have the same arrival process such that each job $J$ arriving into $S$ and the corresponding coupled job, $\tilde J$ arriving into $\tilde S$ have the same triplet of service times.
We remark that under this coupling:
\begin{itemize}
	\item Coupled jobs assigned the same service mode in both systems will have the same service time
	\item For coupled jobs assigned SM1 in one system and SM2 in the other, the processing time for the job assigned SM2 is strictly greater than the coupled job assigned SM1 from \eqref{eqn:job_self} and since $\mu_{c1}>\mu_{l2}$.
\end{itemize}
The sub-optimality of $\pi$ is established by showing
\[\tilde n(t) \leq n(t) \quad \forall t\in\R^+\]
with the inequality being strict over a non-empty interval with positive probability.

\subsection{Main Result}\label{sec:main_result}
Here, we state the main result of our analysis.
\begin{theorem}
The delay-optimal policy is \textit{cloud-first}. Furthermore, assuming the \textit{urgency-monotonicity} conjecture, the policy is \textit{switch-type}.
\end{theorem}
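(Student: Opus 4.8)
The plan is to prove both structural claims within the pathwise coupling framework of Section~\ref{sec:coupling_framework}, handling the two assertions in turn.

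\textbf{Cloud-first.} I would argue by contradiction. Suppose an optimal policy $\pi$ is not cloud-first; pick a state $s = (n_0, i_2, 0, 0) \in \Sp_1$ with $\pi(s) \neq \texttt{SM1}$ and run the reference system $S$ under $\pi$ from $s_0 = s$. Define the competitor $\tilde S$ under a policy $\tilde\pi$ that selects \texttt{SM1} at this first epoch and thereafter mimics $\pi$ through an explicit job correspondence. Two sub-cases arise. If $\pi$ idles, then $\tilde S$ immediately begins serving a base-queue job at the otherwise-idle cloud (rate $\mu_{c1}$) while $S$ leaves it in the base queue; the total counts coincide at the epoch (applying $U_1$ only relocates a job, so $n(U_1 s) = n(s)$), but $\tilde S$ sheds this job no later than $S$, since $S$ can only start it at a later epoch. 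If instead $\pi$ plays \texttt{SM2} (possible only when $i_2 = 0$), then $S$ routes the job through the two-stage SM2 path --- local rate $\mu_{l2}$ followed by a second cloud stage --- whereas $\tilde S$ routes it to the cloud at rate $\mu_{c1}$; using the coupling $\sigma^J_{l2} = (\mu_{c1}/\mu_{l2})\sigma^J_{c1}$ together with $\mu_{c1} > \mu_{l2}$, the tagged job departs $\tilde S$ strictly earlier. In both sub-cases I would carry the invariant $\tilde n(t) \leq n(t)$ for all $t$, with strict inequality on the positive-probability interval during which $\tilde S$ has already cleared the tagged job but $S$ has not. This contradicts the optimality of $\pi$ and forces every optimal policy to assign \texttt{SM1} throughout $\Sp_1$.

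\textbf{Switch-type.} By the first part I may restrict to cloud-first policies, and since the excerpt already rules out assigning \texttt{SM1} outside $\Sp_1$ (the job would be blocked behind SM2), the only genuine choice outside $\Sp_1$ is \texttt{SM2} versus \texttt{idle}, available exactly when $n_0 \geq 1$ and $i_2 = 0$. Writing the uniformized Bellman equation and letting $W$ denote the post-assignment value, \texttt{SM2} is optimal at $s$ iff the assignment advantage $g(s) \coloneqq W(s) - W(U_2 s)$ is nonnegative. The switch-type property is then equivalent to $g$ being nondecreasing along the admissible directions, i.e. $g(s) \geq 0 \Rightarrow g\big(s + (x_0,0,0,x_2)\big) \geq 0$ for all $(x_0,x_2) \in \mathbb{N}_0^2$. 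I would establish monotonicity of $g$ in the base-queue coordinate $x_0$ by a backlog-augmenting coupling, checking that heavier base backlog only makes the SM2 decision more attractive; monotonicity in the cloud-side coordinate $x_2$ --- equivalently, that the urgency of prepping a job locally grows with the cloud SM2 backlog --- is precisely the content of the urgency-monotonicity conjecture, which I would invoke to conclude $g\big(s + (x_0,0,0,x_2)\big) \geq g(s)$ and hence the threshold structure.

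\textbf{Main obstacle.} The delicate step is maintaining the domination invariant after the cloud-first deviation. Once $\tilde S$ fills its unit-capacity cloud SM1 slot ($i_1 = 1$), it can no longer replicate an \texttt{SM1} assignment that $\pi$ might subsequently make in $S$, and the SEPT priority of SM2 over SM1 at the cloud, together with the two-stage nature of SM2, complicates the correspondence between departures in the two systems. I would handle this by tracking a single tagged job that is weakly more processed in $\tilde S$ while all other jobs occupy identical positions, and letting $\tilde\pi$ make compensating, cost-free reassignments until the tagged job departs and the systems recouple. Verifying that $\tilde n(t) \leq n(t)$ persists through every arrival, local completion ($D_L$), and cloud completion ($D_1, D_2$) during this window is the crux of the argument, and I expect it to be the most technically demanding part of the proof.
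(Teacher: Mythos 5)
Your cloud-first argument covers only the easy half of $\Sp_1$. For states $(n,0,0,0)$ the ``do the same action earlier'' exchange works (this is Lemmas~\ref{lem:no_idle_empty} and~\ref{lem:no_SM2_empty}), but $\Sp_1$ also contains the states $(n,1,0,0)$ with a local SM2 job in service, and there your proposed invariant $\tilde n(t)\le n(t)$ cannot be maintained pathwise: after the deviation the two systems reach genuinely incomparable states (for instance $S$ experiences $D_L$ and leaves $\Sp_1$ while $\tilde S$'s unit-capacity SM1 slot is already occupied), and the paper does not obtain a sample-path domination here. Instead it first establishes a threshold structure on the states $(n,0,1,0)$ (Proposition~\ref{pro:empty_behaviour}, via Lemmas~\ref{lem:par_comp_win} and~\ref{lem:eventually_SM2} --- the latter an expected-cost comparison in which the deviation is strictly \emph{worse} on one event and wins only because that event's probability vanishes as the backlog grows), and then runs a downward induction (Proposition~\ref{pro:SM1_induction}) whose cases are closed by value-function inequalities (Claims~\ref{claim:B1}--\ref{claim:B3}) rather than by pathwise domination. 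Your ``main obstacle'' paragraph correctly locates the difficulty, but the fix you sketch (a tagged job plus cost-free compensating reassignments preserving $\tilde n(t)\le n(t)$ throughout) is precisely what does not go through, which is why the paper abandons that route for these states.

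For the switch-type half you have the roles of the conjecture and the proven results reversed. Urgency-monotonicity concerns the base-queue direction for the action \texttt{idle} ($\pi^*(As)=\mathtt{idle}\Rightarrow\pi^*(s)=\mathtt{idle}$); the paper invokes it only to propagate idling downward in $n_0$ below the thresholds $N_k$ and $N_k'$. Monotonicity in the cloud-queue coordinate $x_2$, which you propose to delegate to the conjecture, is in fact proved (Proposition~\ref{pro:wall_induction} together with Lemma~\ref{lem:dec_switches}). Conversely, the $x_0$-monotonicity of your advantage function $g$, which you propose to establish by a ``backlog-augmenting coupling,'' is on the relevant states equivalent to the conjecture itself: $g$ nondecreasing in $n_0$ says exactly that \texttt{idle} optimal at $As$ forces \texttt{idle} optimal at $s$. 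So your plan assumes the part the paper proves and claims to prove by coupling the part the paper explicitly leaves open; if that coupling worked, the conjecture would be a theorem.
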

\begin{IEEEproof}
This result is established through a sequence of supporting theorems which we prove subsequently. First, Theorem~\ref{th:cloud_first} shows that the delay-optimal policy is \textit{cloud-first}. Then, Theorem~\ref{th:partial_switch} demonstrates that any optimal \textit{cloud-first} policy has a partial switch-type structure. Finally, assuming the \textit{urgency-monotonicity conjecture} (stated below), Theorem~\ref{th:switch_type} concludes that the optimal policy is fully \textit{switch-type}.
\end{IEEEproof}

\pad
Before delving into the technical details of the proofs, we provide intuition motivating these structural results.

The \textit{cloud-first} property is motivated by the assumption that the cloud has strictly higher capacity than the local server. As a result, it is never optimal to leave the cloud idle when there are jobs available in the base queue that can be offloaded. This implies the policy should always prioritize offloading to the cloud whenever it is idle.

To gain intuition for the emergence of a switch-type structure, consider the comparative analysis between service modes SM1 and SM2. Under our operational assumptions, we have $\mu_{c1} > \mu_{l2}$, implying that the total expected processing time for SM2 ($1/\mu_{l2} + 1/\mu_{c2}$) is strictly greater than for SM1 ($1/\mu_{c1}$). Based solely on processing times, SM1 appears more efficient.

However, in scenarios where the cloud is backlogged, initiating processing locally under SM2 can reduce the job's subsequent time spent in the cloud queue. Despite SM2's higher nominal processing time, it can effectively bypass some cloud-side processing delay by front-loading processing at the local server. Consequently, once assigning SM2 becomes optimal at a certain state, it remains optimal as backlog levels increase, leading to a natural switch-type policy structure.

\subsection{Structural Conjecture}
Although we do not provide a rigorous proof for this conjecture, it is consistent with structural insights from related dynamic queueing models in the existing literature (e.g.,~\cite{LinK, Hajek})
and supported by simulation evidence in our setting. A similar conjecture is employed in the related open problem of delay minimization in the M/M/3 system with heterogenous servers, where proving explicit threshold structures remains analytically challenging \cite{Koole}. 

Even without this conjecture, we can still establish \textit{cloud-first} behavior and a partial switch-type structure in the optimal policy. However, achieving a complete \textit{switch-type} characterization requires the conjecture.

\begin{conjecture*}[\normalfont\itshape Urgency-Monotonicity] 
Let $\pi^*$ be a delay-optimal policy. Then for all $s\in\mathcal S$,
\begin{align}
	\pi^*(As) = \mathtt{idle} \Rightarrow \pi^*(s) = \mathtt{idle}
\end{align} 
\end{conjecture*}

This conjecture posits that if idling is optimal in a state with more jobs, then reducing the number of jobs in the base queue does not make non-idling preferable. Intuitively, fewer jobs in the base queue reduce urgency, reinforcing the optimality of an idling action.\\
Recall that a switch-type policy exhibits switching behavior in both increasing directions of the base queue and the SM2 cloud queue when assigning SM2. The conjecture, however, only imposes monotonicity in the direction of the base queue for the action 	\texttt{idle}, offering a natural and minimal structural condition.

\subsection{Cloud-First Behavior at DMS}\label{sec:dms_policy}

In this section, we establish that an optimal policy must be \textit{cloud-first}.

\begin{theorem} \label{th:cloud_first} (Cloud-First Behaviour)
Let $\pi^*$ be a delay-optimal policy, then $\pi^*$ is \textit{cloud-first} (Def~\ref{def:cloud_first}).
\end{theorem}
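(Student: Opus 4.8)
The plan is to prove Theorem~\ref{th:cloud_first} by the coupling argument set up in Section~\ref{sec:coupling_framework}. Suppose, for contradiction, that a delay-optimal policy $\pi^*$ is \emph{not} cloud-first. Then there exists a state $s\in\Sp_1$ (so $n_0\geq 1$, $i_1=n_2=0$) at which $\pi^*(s)\neq\mathtt{SM1}$, i.e.\ $\pi^*$ either idles or assigns \texttt{SM2} while the cloud is empty and jobs wait in the base queue. I would let $S$ be the system run under $\pi^*$ and construct a coupled companion system $\tilde S$ under a modified policy $\tilde\pi$ that agrees with $\pi^*$ everywhere except that, at this first decision epoch, $\tilde\pi$ assigns the waiting job to \texttt{SM1}, thereby starting cloud service immediately. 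The goal is to show $\tilde n(t)\leq n(t)$ for all $t$, with strict inequality on a positive-probability interval, contradicting optimality of $\pi^*$.

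The core of the argument is to track the one extra job, call it $J$, that $\tilde S$ pushes into the cloud under SM1 at the decision epoch, and to maintain a pathwise invariant showing $\tilde S$ is never worse off. First I would handle the case $\pi^*(s)=\mathtt{idle}$: here $\tilde S$ commits $J$ to SM1 and clears it from the base queue, so immediately the two systems have the same total job count but $\tilde S$ has begun useful work. Since the cloud in $S$ was idle and $\tilde S$'s cloud is processing $J$ at rate $\mu_{c1}$, when $J$'s SM1 service time $\sigma_{c1}^J$ elapses, $J$ departs $\tilde S$ and $\tilde n$ drops below $n$. One would then couple all subsequent arrivals and service times and argue that $\tilde S$ can mimic $\pi^*$'s actions on every other job so that the configurations stay identical except for $J$'s head start, giving $\tilde n(t)\le n(t)$ throughout with strict inequality once $J$ departs early. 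For the case $\pi^*(s)=\mathtt{SM2}$, the comparison uses the coupling \eqref{eqn:job_self} and the assumption $\mu_{c1}>\mu_{l2}$: the same job sent to SM1 in $\tilde S$ finishes its cloud processing strictly faster than it would have cleared the local stage under SM2 in $S$ (and with SM2 it would still owe additional cloud time $\sigma_{c2}^J$), so SM1 dominates.

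The key steps, in order, are: (i) formalize the contradiction hypothesis and pin down the first state $s\in\Sp_1$ where $\pi^*$ deviates; (ii) define $\tilde\pi$ as a ``follow-$\pi^*$-after-one-correction'' coupling and specify how $\tilde S$ matches $S$'s actions on all jobs other than $J$; (iii) establish the pathwise domination $\tilde n(t)\leq n(t)$ by an induction over transition instants, using that the extra SM1 job in $\tilde S$ only ever helps drain the cloud sooner; and (iv) exhibit the positive-probability event (e.g.\ $J$ completing its SM1 service before the next arrival or competing completion) on which the inequality is strict, then integrate against $e^{-\beta t}$ to conclude $J_\beta(s;\tilde\pi)<J_\beta(s;\pi^*)=V_\beta(s)$, the desired contradiction.

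I expect the main obstacle to be step (iii): maintaining a clean pathwise invariant through the coupling when the two systems can temporarily diverge in their queue configurations. The subtlety is that giving $\tilde S$ a head start on $J$ can shift which server is busy or idle at later transition instants, so $\tilde\pi$ may not be able to replay $\pi^*$'s actions verbatim (the admissible action sets $\mathcal U(\cdot)$ differ once the states differ). The careful part is to define the coupling so that the state discrepancy is always ``one job ahead'' in a controlled sense --- for instance showing the two state vectors differ by a single job that is strictly further along its service in $\tilde S$ --- and to verify that this relationship is preserved under each of the operators $A, D_1, D_2, D_L, U_1, U_2$, with the SEPT priority and the unit-capacity constraints on the cloud SM1 queue handled consistently. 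Making this invariant precise and checking it is closed under every transition type, including the re-coupling of service times when a job switches modes across the two systems, is where the real work lies.
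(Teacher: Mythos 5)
Your high-level strategy (couple a corrected system that assigns \texttt{SM1} immediately and show it dominates) is the right one for part of the theorem, and it is essentially what the paper does for the states $(n,0,0,0)$: Lemma~\ref{lem:no_idle_empty} is exactly your ``do it now rather than later'' interchange for the \texttt{idle} case, and Lemma~\ref{lem:no_SM2_empty} is your \texttt{SM1}-vs-\texttt{SM2} comparison via the coupling \eqref{eqn:job_self} (with the extra device, which you omit, of inserting a dummy local job in $\tilde S$ to re-synchronize the two systems after the coupled job finishes its faster SM1 service, and of swapping modes if $\pi$ assigns \texttt{SM1} to some other job before that happens). So for the $i_2=0$ slice of $\Sp_1$ your proposal is sound in outline.

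The genuine gap is the other slice of $\Sp_1$, the states $(n,1,0,0)$, where the only admissible deviation is \texttt{idle} and where the difficulty you flag in your step (iii) is not a technicality but fatal to the pathwise approach. Committing a job to \texttt{SM1} in $\tilde S$ occupies the unit-capacity SM1 slot, and because the cloud's SEPT rule prioritizes SM2, that job can be blocked as soon as the local SM2 job completes and joins the cloud; meanwhile the corresponding job in $S$ sits in the base queue retaining the flexibility to be sent to \texttt{SM2} and receive local processing in parallel. The discrepancy between the two systems is therefore not ``one job strictly further along its service,'' and the invariant $\tilde n(t)\leq n(t)$ is not preserved by a transition-by-transition induction --- which is precisely why the paper abandons pathwise dominance here. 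Instead it first establishes the threshold structure of Proposition~\ref{pro:empty_behaviour} (via Lemmas~\ref{lem:empty_interval}--\ref{lem:eventually_SM2}), uses it to get \texttt{SM1} optimality at $(n,1,0,0)$ for $n\geq N_0$ (Lemma~\ref{lem:SM1_afterN_0}), and then runs a \emph{downward} induction (Proposition~\ref{pro:SM1_induction}) whose two cases are closed by one-step value-function inequalities (Claims~\ref{claim:B1}--\ref{claim:B3}) rather than by sample-path domination. Your proposal contains no substitute for this machinery, so as written it proves only Proposition~\ref{pro:offload_empty} and not Proposition~\ref{pro:SM1_full}, hence not the theorem.
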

\begin{IEEEproof}
This result follows directly from Propositions~\ref{pro:offload_empty} and~\ref{pro:SM1_full}, which we state and prove subsequently. From these propositions, we obtain:
\[\pi^*(n,0,0,0) = \mathtt{SM1}, \quad \forall n\in\mathbb N.\]
and
\[\pi^*(n,1,0,0) = \mathtt{SM1}, \quad \forall n\in\mathbb N.\]
We combine these results, we have:
\[\pi^*(s) = \mathtt{SM1}, \quad \forall s\in\Sp_1\]
where,
\[\mathcal{S}_1 \coloneqq \{s = (n_0, i_2, i_1, n_2) \in \mathcal{S} \mid n_0 \in \mathbb{N}, \, i_1 = n_2 = 0\}.\]
\end{IEEEproof}
\begin{proposition} \label{pro:offload_empty}
Let $\pi^*$ be a delay-optimal policy. Then: 
\[\pi^*(n,0,0,0) = \mathtt{SM1}, \quad \forall n\in\mathbb N.\]
\end{proposition}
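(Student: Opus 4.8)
The plan is to prove the statement by a pathwise coupling argument, using the framework of Section~\ref{sec:coupling_framework}. Suppose toward contradiction that a delay-optimal policy $\pi^*$ chooses an action other than \texttt{SM1} in some state $(n,0,0,0)$ with $n\in\mathbb N$; the available alternatives are \texttt{idle} and \texttt{SM2}. I would construct a coupled system $\tilde S$ governed by a modified policy $\tilde\pi$ that issues \texttt{SM1} at the decision epoch in question and thereafter mimics $\pi^*$ as closely as the coupling permits. The goal is to show $\tilde n(t)\le n(t)$ for all $t$, with strict inequality on a positive-probability interval, contradicting optimality of $\pi^*$.

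First I would handle the case where $\pi^*$ idles at $(n,0,0,0)$. Since the cloud SM1 queue is empty ($i_1=n_2=0$), issuing \texttt{SM1} in $\tilde S$ immediately begins cloud processing of a job at rate $\mu_{c1}$, whereas $S$ leaves that job waiting in the base queue. Using the same arrival stream and the same service-time triplets for coupled jobs, I would argue that $\tilde S$ is always ``ahead'': the job offloaded early in $\tilde S$ can only depart no later than its counterpart in $S$, and I can couple all subsequent decisions so that $\tilde S$ never falls behind. Because the cloud was idle and is now productively working, $\tilde S$ completes that job strictly earlier with positive probability, yielding $\tilde n(t)<n(t)$ on a nonempty interval.

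Second, the \texttt{SM2} case is the substantive one and, I expect, the main obstacle. Here $\pi^*$ routes the job to the local server (service rate $\mu_{l2}$) while $\tilde\pi$ routes it to the cloud (rate $\mu_{c1}$). By the self-coupling~\eqref{eqn:job_self}, $\sigma^J_{l2}=\tfrac{\mu_{c1}}{\mu_{l2}}\sigma^J_{c1}>\sigma^J_{c1}$ since $\mu_{c1}>\mu_{l2}$, so the \texttt{SM1} job in $\tilde S$ finishes its cloud stage strictly before the \texttt{SM2} job finishes its local stage in $S$. The delicate point is that in $S$ the SM2 job still requires a second stage at the cloud (rate $\mu_{c2}$), and the two systems now occupy genuinely different configurations of the state variables $(i_2,i_1,n_2)$, so I cannot maintain a naive componentwise domination. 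The right invariant to carry is the \emph{total} job count $\tilde n(t)\le n(t)$ together with a bookkeeping of which coupled jobs are ``extra'' in $S$; I would show that every service completion or arrival preserves this ordering, using the dominance $\mu_{c1}>\mu_{l2}$ and the SEPT priority of SM2 at the cloud to ensure $\tilde S$ never requires a job that $S$ has already dispatched.

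Finally I would verify that the modified policy $\tilde\pi$ is admissible throughout — in particular that issuing \texttt{SM1} is legal at $(n,0,0,0)$, which holds because $n_0=n\ge 1$ and $i_1=0$ place the state in $\mathcal U(s)$'s first case — and that the strict-inequality interval arises with positive probability from the strict gap between $\sigma^J_{c1}$ and $\sigma^J_{l2}$. Combining the two cases shows that neither \texttt{idle} nor \texttt{SM2} can be optimal at $(n,0,0,0)$, forcing $\pi^*(n,0,0,0)=\mathtt{SM1}$ for all $n\in\mathbb N$, as claimed.
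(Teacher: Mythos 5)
Your overall strategy --- rule out \texttt{idle} and \texttt{SM2} separately at $(n,0,0,0)$ via a coupled system that issues \texttt{SM1} at time $0$ --- matches the paper's decomposition (Lemmas~\ref{lem:no_idle_empty} and~\ref{lem:no_SM2_empty}). But in the \texttt{SM2} case you correctly identify the main obstacle (the two systems land in genuinely different configurations once the offloaded job's cloud stage and the local job's first stage diverge) and then do not resolve it: ``carry the invariant $\tilde n(t)\le n(t)$ with bookkeeping of which jobs are extra'' is exactly the step that needs a concrete device, because once the states differ you must still exhibit an admissible, non-anticipative $\tilde\pi$ whose decisions can be compared to $\pi$'s at every future epoch. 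The paper supplies two such devices that are absent from your plan. First, it stops the comparison at $\tau$, the first time $\pi$ itself selects \texttt{SM1}; up to $\min\{\tau,\sigma_{c1}\}$ the systems differ only by the relocation $(0,-1,1,0)$ of a single job, and if $\tau\le\sigma_{c1}$ the policies simply \emph{swap} assignments (the job $\pi$ sends to SM1 is sent to SM2 by $\tilde\pi$), after which the states coincide and the costs are equal --- no domination argument is needed on that event. Second, on the event $\sigma_{c1}<\tau$, where $\tilde S$ gets ahead by one completed job, $\tilde\pi$ inserts a \emph{dummy job} into its local SM2 queue carrying the residual service times of the real job $J$ still in service in $S$; this re-couples the two state processes exactly, at zero additional cost, so all subsequent decisions of $\pi$ can be mirrored verbatim and the strict gain is locked in. Without the swap and the dummy job, your proposed forward induction on ``every service completion or arrival preserves the ordering'' has no specified rule for what $\tilde\pi$ does when, e.g., $\pi$ assigns \texttt{SM1} while $\tilde S$'s SM1 slot is still occupied, and the argument stalls precisely where you predicted it would.

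A smaller remark on the \texttt{idle} case: the paper's argument is an exchange argument rather than a domination argument. It uses stability to define $\tau$ as the first non-idling epoch of $\pi$, observes that only arrivals occur before $\tau$ so the job $\pi$ eventually dispatches was already present at $t=0$, and concludes that advancing that same assignment to $t=0$ strictly reduces discounted waiting cost; the same dummy-job/swap machinery handles the re-coupling. Your version (``$\tilde S$ never falls behind'') is plausible but again leaves unspecified how $\tilde\pi$ tracks $\pi$ once the early \texttt{SM1} assignment makes the states diverge. Neither case contains a false step, but the proof as proposed is incomplete at its central point.
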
 
\begin{IEEEproof}
This follows directly from Lemmas~\ref{lem:no_idle_empty} and~\ref{lem:no_SM2_empty}.
\end{IEEEproof}

\begin{lemma} \label{lem:no_idle_empty}
Let $\pi^*$ be a delay-optimal policy. Then: 
\[\pi^*(n,0,0,0) \neq \mathtt{idle}, \quad \forall n\in\mathbb N.\]
\end{lemma}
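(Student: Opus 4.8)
The plan is to argue by contradiction using the path-wise coupling of Section~\ref{sec:coupling_framework}. Suppose $\pi^*(n,0,0,0)=\mathtt{idle}$ for some $n\in\N$. Let $S$ evolve under $\pi^*$ from $s(0)=(n,0,0,0)$, and construct a coupled competitor $\tilde S$ governed by a policy $\tilde\pi$ that differs from $\pi^*$ only in that at $t=0$ it applies $U_1$ (assigns \texttt{SM1}) to a distinguished \emph{lead} job $L$, so that $\tilde S(0^+)=(n-1,0,1,0)=U_1 s(0)$. The goal is to exhibit $\tilde n(t)\le n(t)$ for all $t$, with strict inequality over an interval of positive probability, contradicting the optimality of $\pi^*$.

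First I would record the key observation that, because both servers are empty in $(n,0,0,0)$ and $\pi^*$ idles, $S$ remains frozen at $(n,0,0,0)$ until the first arrival at time $T_1\sim\mathrm{Exp}(\lambda)$. During $[0,T_1)$ the cloud in $\tilde S$ processes $L$ under SM1 and, with probability $\mu_{c1}/(\lambda+\mu_{c1})>0$, completes it (applies $D_1$) before $T_1$; on that event $\tilde n$ drops to $n-1$ while $n(t)=n$, so $\tilde n(t)<n(t)$ holds over a non-empty interval with positive probability. This supplies the required strict improvement.

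Next I would define $\tilde\pi$ so as to preserve, at every transition instant, a coupling invariant stating that all jobs other than $L$ occupy identical positions and carry identical (remaining) service times in the two systems, while $L$ is ``sent ahead'' in $\tilde S$: either $L$ sits in the cloud SM1 server of $\tilde S$ while still waiting in the base queue of $S$ (so $\tilde S = U_1 S$ and $\tilde n = n$), or $L$ has already departed $\tilde S$ while still present in $S$ (so $\tilde n = n-1$). Under this invariant $\tilde n(t)\le n(t)$ automatically. To propagate it I would let $\tilde\pi$ mirror every \texttt{SM1}/\texttt{SM2}/\texttt{idle} action of $\pi^*$ on the non-lead jobs. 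Two memorylessness facts do the work: (i) when $\pi^*$ first applies $U_1$ to a non-lead job in $S$, the fresh cloud-SM1 service in $S$ and the in-progress cloud-SM1 service of $L$ in $\tilde S$ are both $\mathrm{Exp}(\mu_{c1})$, so I re-couple them and the systems re-synchronize; (ii) if $\pi^*$ instead assigns $L$ itself to SM2 in $S$ (possible when $L$ is the only base-queue job), the coupling \eqref{eqn:job_self} together with $\mu_{c1}>\mu_{l2}$ gives $\sigma^L_{l2}>\sigma^L_{c1}$, so $L$ departs $\tilde S$ via cloud SM1 no later than it even clears the local server in $S$, keeping $\tilde S$ ahead.

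The main obstacle is precisely this bookkeeping for the lead job, in two respects: handling the case where $\pi^*$ routes $L$ through SM2 in $S$ while $\tilde S$ has already placed it on the fast cloud (which forces a phase change in the invariant and relies on the cross-mode service-time domination above), and accounting for the cloud's SEPT priority of SM2 over SM1, which can delay $L$'s SM1 completion in $\tilde S$ when SM2 jobs are present. For the latter I must verify that the identically coupled SM2 jobs cannot cause $\tilde S$ to fall behind $S$, so that the count inequality survives every event type. Once the invariant is shown to persist, $\tilde n(t)\le n(t)$ for all $t$ with strict inequality on a positive-probability interval yields $J_\beta\big((n,0,0,0);\tilde\pi\big)<J_\beta\big((n,0,0,0);\pi^*\big)$, contradicting optimality and proving $\pi^*(n,0,0,0)\neq\mathtt{idle}$.
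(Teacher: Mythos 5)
Your core idea --- exchange the initial \texttt{idle} for an immediate non-idling action and show by coupling that the modified system never holds more jobs, with a strict gap on a positive-probability event --- is exactly the paper's strategy, and your two ``memorylessness facts'' (i) and (ii) correspond precisely to the paper's Case~1 and Case~2. The structural difference is how the coupling is closed. The paper does \emph{not} maintain a lead-job invariant for all time: it sets $\tau$ to be the first non-idling instant of $\pi^*$, notes that on $[0,\tau)$ the original system is frozen apart from arrivals (both servers are empty and the DMS idles), so the only activity in the modified system is the cloud working on the lead job, and then moves whichever action $\pi^*$ takes at $\tau$ back to $t=0$ and re-synchronizes by memorylessness. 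Confining the comparison to $[0,\tau]$ means none of the SEPT-priority or cross-mode bookkeeping you worry about can ever arise, because no other job is assigned anywhere before $\tau$.

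This matters because the obstacle you flag is real and your invariant, as literally stated, does not survive it. If $\pi^*$ routes $L$ through SM2 in $S$, then after $L$ clears the local server it occupies the cloud SM2 queue of $S$ but is absent from $\tilde S$; a later non-lead SM2 job then queues behind $L$ in $S$ while being served immediately in $\tilde S$, so ``all non-lead jobs occupy identical positions with identical remaining service times'' fails. You would need to either weaken the invariant to a dominance relation or have $\tilde S$ insert a dummy job tracking $L$'s residual SM2 work (the device the paper uses in the proof of Lemma~\ref{lem:no_SM2_empty}); the inequality $\tilde n(t)\le n(t)$ survives either fix, but the propagation step as written is not yet a proof. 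Adopting the paper's trick of stopping the comparison at the first non-idling instant removes the need for any of this.
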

\begin{proof}
See Appendix~\ref{app:no_idle_empty}.
\end{proof}

\begin{lemma} \label{lem:no_SM2_empty}
Let $\pi^*$ be a delay-optimal policy. Then: 
\[\pi^*(n,0,0,0) \neq \mathtt{SM2}, \quad \forall n\in\mathbb N.\]
\end{lemma}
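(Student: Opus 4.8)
The plan is to invoke the coupling framework of Section~\ref{sec:coupling_framework}: assume a policy $\pi$ with $\pi(n,0,0,0)=\mathtt{SM2}$ and construct a modified policy $\tilde\pi$ that plays $\mathtt{SM1}$ at this state, then show $\tilde n(t)\le n(t)$ for all $t$ with strict inequality on a positive-probability interval. I would first observe that, by the $\mathtt{SM1}$-first convention, if $\pi$ selects $\mathtt{SM2}$ at $(n,0,0,0)$ then it does \emph{not} also assign $\mathtt{SM1}$ at that instant, so it commits the next job $J$ to the slow local route while leaving the fast cloud idle. Under the within-job coupling \eqref{eqn:job_self}, the local-stage time of $J$ in $S$ equals $\mu_{c1}/\mu_{l2}>1$ times its cloud-SM1 time in $\tilde S$, so $J$ departs $\tilde S$ strictly before it even leaves the local server in $S$. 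The argument then maintains the invariant that $S$ and $\tilde S$ have identical base queues, identical cloud-SM2 contents, and a common arrival stream, and differ only in one \emph{distinguished} job that occupies the local (SM2) stage in $S$ while its counterpart occupies the cloud (SM1) stage in $\tilde S$ (or has already departed $\tilde S$); this invariant immediately yields $\tilde n(t)\le n(t)$.

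The steps are as follows. While the distinguished job sits on the local server in $S$, the action $\mathtt{SM2}$ is inadmissible there (local busy), so $\pi$ may only idle or assign $\mathtt{SM1}$, and $\tilde\pi$ mirrors idling and common arrivals verbatim. The delicate step is an $\mathtt{SM1}$ assignment by $\pi$ to some job $J'$ before the distinguished job clears the cloud of $\tilde S$: the cloud-SM1 slot of $\tilde S$ is occupied, so $\tilde\pi$ cannot copy the action. I would resolve this by deferring $J'$ in $\tilde S$ and re-coupling, via memorylessness, so that the distinguished job's remaining cloud-SM1 time in $\tilde S$ coincides with the fresh cloud-SM1 time of $J'$ in $S$ (both $\mathrm{Exp}(\mu_{c1})$), while the distinguished job's remaining local time in $S$ is re-coupled to $\mu_{c1}/\mu_{l2}$ times this value. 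The two cloud completions then occur simultaneously, total counts stay equal across the deferral, and the roles \emph{rotate}: $J'$ becomes the new distinguished (fast) job in $\tilde S$ and the former distinguished job remains on the slow route in $S$, restoring the invariant. Finally, with positive probability no such conflict intervenes before the distinguished job departs $\tilde S$ (e.g.\ no arrival during its $\mathrm{Exp}(\mu_{c1})$ cloud service), at which instant it has left $\tilde S$ but is still on the local server in $S$, giving $\tilde n=n-1$ over a positive-length interval; thereafter a routine one-fewer-job monotonicity coupling keeps $\tilde n\le n$ until $J$ finally clears $S$.

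Combining the cases gives $\tilde n(t)\le n(t)$ pathwise with strict inequality on an event of positive probability and positive duration, hence $J_\beta(s_0;\tilde\pi)<J_\beta(s_0;\pi)$, so $\mathtt{SM2}$ is suboptimal at $(n,0,0,0)$. I expect the main obstacle to be precisely the $\mathtt{SM1}$-conflict step: a naive ``swap'' that reassigns $J'$ to $\mathtt{SM2}$ in $\tilde S$ fails because it lengthens $J'$'s route and can drive $\tilde n$ above $n$, and a naive deferral without re-coupling fails because $J'$ may depart $S$ while still queued in $\tilde S$. The defer-and-rotate construction, which exploits exponential memorylessness together with the scaling relation implied by $\mu_{c1}>\mu_{l2}$, is the crux that prevents the single-job discrepancy from ever inflating the count in $\tilde S$.
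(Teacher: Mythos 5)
Your overall strategy is the one the paper uses: replace the initial $\mathtt{SM2}$ by $\mathtt{SM1}$, couple the two systems, and split on whether the first $\mathtt{SM1}$ assignment under $\pi$ (time $\tau$) occurs before or after the modified system's cloud completion $\sigma_{c1}$. Your ``no conflict'' branch matches the paper's Case 2 (where the paper makes the ``one-fewer-job'' step precise by inserting a dummy job into $\tilde S$ with $J$'s residual times, so the two systems evolve identically thereafter). The divergence is in the conflict branch, and that is where your argument has a genuine gap.

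First, you dismiss the swap too quickly. The paper's Case 1 does exactly what you call the ``naive swap'': at time $\tau$ it assigns the coupled job to $\mathtt{SM2}$ in $\tilde S$, and then observes that $S(\tau^+)$ and $\tilde S(\tau^+)$ are the \emph{same Markov state} with exponential (hence memoryless) residuals, so under a common policy the expected costs from $\tau$ onward coincide. Pathwise dominance is not needed on this event --- only equality in expectation --- and the strict improvement is harvested entirely from the complementary event $\{\sigma_{c1}<\tau\}$. Your objection that the swap ``can drive $\tilde n$ above $n$'' is true pathwise but irrelevant to the expected-cost comparison. Second, your replacement --- defer-and-rotate --- does not survive iteration. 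At the first conflict you may legitimately re-couple $J$'s residual local time to $(\mu_{c1}/\mu_{l2})X$, where $X$ is the common cloud time; but after the simultaneous cloud completions, $J$'s residual local time is now the \emph{pinned} quantity $(\mu_{c1}/\mu_{l2}-1)X$, and it cannot be re-coupled again to dominate the fresh $\mathrm{Exp}(\mu_{c1})$ cloud time of the newly promoted distinguished job $\tilde J'$ (adding an independent exponential would break the $\mathrm{Exp}(\mu_{l2})$ marginal). Hence nothing prevents $J$ from finishing local service in $S$ while $\tilde J'$ is still in the cloud of $\tilde S$; $J$ then enters the cloud SM2 queue of $S$, destroying your invariant of identical cloud-SM2 contents, and from there the SEPT priority and the now-different admissible action sets make the dominance $\tilde n\le n$ non-obvious. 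You would need either to terminate the argument at the first conflict (as the paper does, via the swap and equal-state identification) or to supply a genuinely new invariant covering the state in which the slow job has migrated to the cloud SM2 queue.
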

\begin{proof}
See Appendix~\ref{app:no_SM2_empty}.
\end{proof}
\pad
Proposition~\ref{pro:offload_empty} states that assigning SM1 is optimal whenever the base queue contains jobs while other queues are empty. Doing this transitions the system to a state $(n_0-1,0,1,0)$, for some $n_0\in\N_0$, where the system can now either idle or assign SM2. We now highlight in Proposition~\ref{pro:empty_behaviour} that the optimal policy at these states exhibits a threshold-based switching structure, shifting from idling to assigning SM2 as $n_0$ increases.

\begin{proposition} \label{pro:empty_behaviour}
Let $\pi^*$ be a delay-optimal policy. Then, there exists some threshold $N_0\in \mathbb \N_0$ such that:
\begin{equation}
\pi^*(n_0,0,1,0) = 
\begin{cases}
\mathtt{idle} & \text{if } n_0 < N_0\\
\mathtt{SM2} & \text{if } n_0 \geq N_0
\end{cases} \label{eqn:empty_policy}
\end{equation}
\end{proposition}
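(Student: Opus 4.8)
The plan is to reduce the claim to a single-crossing property of the optimal action along the base-queue direction, and then to establish that property by coupling. At every state $(n_0,0,1,0)$ with $n_0\ge 1$ the only admissible actions are $\mathtt{idle}$ and $\mathtt{SM2}$, with $U_2(n_0,0,1,0)=(n_0-1,1,1,0)$; at $n_0=0$ the base queue is empty and $\mathtt{idle}$ is forced. Thus the proposition is equivalent to showing that the set of base-queue lengths at which $\mathtt{SM2}$ is optimal is up-closed: if $\mathtt{SM2}$ is optimal at $(n_0,0,1,0)$, then it is optimal at $(n_0+1,0,1,0)$. Given this, $N_0\coloneqq\min\{n_0\ge 1:\mathtt{SM2}\text{ optimal at }(n_0,0,1,0)\}$ yields exactly the stated threshold form, with $n_0=0$ handled automatically so that $N_0\ge 1$. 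I also argue that $N_0$ is finite: for large $n_0$ a direct coupling (idling versus immediately running one job on the otherwise-idle local server) produces $\tilde n(t)\le n(t)$ with strict inequality on a positive-probability interval, so $\mathtt{SM2}$ is eventually strictly optimal.

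To prove up-closedness I would compare, through the Bellman equation, the optimal cost-to-go of the two actions, so that the decision at $(n_0,0,1,0)$ is governed by the sign of the gap between the $\mathtt{idle}$ and $\mathtt{SM2}$ continuation values. Up-closedness follows once I show that enlarging the base queue does not shrink this gap. Expanding the continuation values via the (uniformized) transition rates, the essential content is the submodularity-type inequality
\[
V_\beta(n_0+1,0,1,0)-V_\beta(n_0,0,1,0)\;\ge\;V_\beta(n_0,1,1,0)-V_\beta(n_0-1,1,1,0),
\]
together with standard monotonicity and convexity of $V_\beta$ in $n_0$ (the extra terms arising because the local-completion event $D_L$ is present in the $(\cdot,1,1,0)$ family but not in the $(\cdot,0,1,0)$ family). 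In words, the marginal delay cost of one extra base-queue job is larger when the local server is idle than when it is already serving an $\mathtt{SM2}$ job. This is the heart of the matter, and plain monotonicity of $V_\beta$ is insufficient, since it bounds the two sides of the inequality in opposite directions.

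I would establish these structural inequalities pathwise, in the coupling framework of Section~\ref{sec:coupling_framework}. The construction pairs a system started at $(n_0+1,0,1,0)$ with one at $(n_0,0,1,0)$, and a system at $(n_0,1,1,0)$ with one at $(n_0-1,1,1,0)$; within each pair the two systems differ by a single tagged base-queue job, and all pairs share arrivals and the coupled service times with $\sigma_{l2}=(\mu_{c1}/\mu_{l2})\sigma_{c1}$. Forcing the larger system to imitate the smaller one's assignments, I track the tagged job and compare the extra sojourn cost it incurs across the two families; the busy local server in the $(\cdot,1,1,0)$ family lets the tagged job clear at least as quickly, which delivers the inequality. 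Because the coupling yields strict domination on a positive-probability interval, the dominated action is \emph{strictly} suboptimal off the threshold, so every delay-optimal $\pi^*$ must follow the threshold, with any tie confined to the boundary $n_0=N_0$ and absorbed by the definition of $N_0$. The main obstacle is precisely this bookkeeping: one must propagate the tagged job correctly through the operator $D_L$, which converts a half-served $\mathtt{SM2}$ job into a cloud $\mathtt{SM2}$ job, and through the cloud's SEPT priority of $\mathtt{SM2}$ over $\mathtt{SM1}$, all while never creating an extra job in the dominated system. An equivalent route is induction on the value-iteration operator, where the same difficulty reappears as verifying that a cone encoding monotonicity, convexity, and the above submodularity is preserved under every event operator and under the $\min$ over $\{\mathtt{idle},\mathtt{SM2}\}$.
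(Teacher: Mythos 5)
Your high-level reduction (threshold $=$ up-closedness of the SM2 set plus finiteness of $N_0$) is sound, but both legs of your argument have gaps, and the paper takes a different route that avoids the harder one entirely. For up-closedness you correctly identify the submodularity-type inequality
\[
V_\beta(n_0+1,0,1,0)-V_\beta(n_0,0,1,0)\;\ge\;V_\beta(n_0,1,1,0)-V_\beta(n_0-1,1,1,0)
\]
as a sufficient condition, but you do not prove it; you sketch a four-system coupling and then concede that propagating the tagged job through $D_L$ and the SEPT priority is ``the main obstacle.'' That obstacle is the entire proof, so as written this is an announced lemma rather than an established one. The paper never proves any such value-function inequality. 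Instead it shows (Lemma~\ref{lem:empty_interval}) that the set $\{n:\pi^*(n,0,1,0)=\mathtt{SM2}\}$ has no gaps --- if SM2 is chosen at $m_1<m_2$ then it is chosen at every intermediate $n$, because a system started at an intermediate state only assigns SM1 until it hits one of the two boundary states, at which point it assigns SM2 to a job that was already present at $t=0$, and assigning that job at $t=0$ instead would be strictly cheaper --- and combines this with unboundedness of the SM2 set (Lemma~\ref{lem:eventually_SM2}) to get up-closedness from the minimum. This sidesteps submodularity, convexity, and the cone-preservation induction altogether.

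Second, your finiteness argument is not merely incomplete but incorrect as stated. You claim that for large $n_0$ a direct coupling of ``assign SM2 now'' against ``idle'' yields $\tilde n(t)\le n(t)$ pathwise with strict inequality on a positive-probability interval. Pathwise domination fails here: the idling system retains the option of later assigning that same job to SM1, and under the coupling $\sigma_{l2}=(\mu_{c1}/\mu_{l2})\sigma_{c1}>\sigma_{c1}$, so on sample paths where the cloud frees up quickly the idling system can clear the job sooner. This is exactly why the paper's Lemma~\ref{lem:eventually_SM2} is not a one-line coupling: it conditions on whether the local service $\sigma_{l2}$ completes before the hitting time $\tau$ of the low state, invokes Lemma~\ref{lem:par_comp_win} to obtain a gain $\epsilon_\beta$ bounded away from zero uniformly in $n_0$ on the favorable event, bounds the loss on the unfavorable event by a constant, and then lets the initial backlog $m\to\infty$ so that the unfavorable event has vanishing probability. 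You need an argument of that form to close the finiteness step.
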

\begin{IEEEproof}
The proof relies on Lemmas~\ref{lem:empty_interval} and~\ref{lem:eventually_SM2}, proven subsequently. Lemma~\ref{lem:eventually_SM2} ensures that for each $n \in \mathbb{N}$, there exists $m_n > n$ satisfying $\pi^*(m_n,0,1,0) = \mathtt{SM2}$. Define:\[
N_0 \coloneqq \min\{n\in\mathbb N \mid \pi(n,0,1,0) =\mathtt{SM2}\}.
\]
By Lemma~\ref{lem:empty_interval}, we must have $\pi(n,0,1,0) = \mathtt{idle}$ for all $n < N_0$ and $\pi(n,0,1,0) = \mathtt{SM2}$ for all $n \geq N_0$. This completes the proof.
\end{IEEEproof}
\begin{lemma} \label{lem:empty_interval}
Let $\pi^*$ be a delay-optimal policy.
If, for some $m_1,m_2\in\N_0$ with $m_1<m_2$:
\[\pi^*(m_1,0,1,0) = \pi^*(m_2,0,1,0) = \mathtt{SM2}.\]
then it follows that:
\[\pi^*(n,0,1,0) = \mathtt{SM2} \quad \forall n\in\llbracket m_1,m_2\rrbracket\]
\end{lemma}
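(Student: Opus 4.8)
The plan is to show, for each fixed $n\in\llbracket m_1,m_2\rrbracket$, that $\mathtt{idle}$ is strictly suboptimal at $(n,0,1,0)$; since the only admissible actions there are $\mathtt{idle}$ and $\mathtt{SM2}$, this forces $\pi^*(n,0,1,0)=\mathtt{SM2}$. Following the framework of Section~\ref{sec:coupling_framework}, I would let $S$ be the system that takes $\mathtt{idle}$ at $(n,0,1,0)$ and then follows $\pi^*$ (so its cost equals the idle-action value), and let $\tilde S$ be the coupled system that instead takes $\mathtt{SM2}$ — moving to $U_2(n,0,1,0)=(n-1,1,1,0)$ — and then follows $\pi^*$. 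Both start with $n(0)=\tilde n(0)=n+1$ jobs. The goal is to establish $\tilde n(t)\le n(t)$ for all $t\in\R^+$, with strict inequality on a set of positive probability, which makes the $\mathtt{SM2}$-action value strictly smaller and yields the claim.

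The two endpoint hypotheses enter precisely in controlling the trajectory of $S$. While $S$ continues to idle, cloud-first behaviour pins its state to the form $(\cdot,0,1,0)$: an arrival $A$ raises the base queue by one, while a cloud $\mathtt{SM1}$ completion $D_1$ followed by the forced cloud-first reassignment $U_1$ lowers it by one. Thus the base queue of $S$ performs a birth--death walk (up-rate $\lambda$, down-rate $\mu_{c1}$) on these states, and it keeps idling only until it reaches a level at which $\pi^*$ assigns $\mathtt{SM2}$. Because $\pi^*(m_1,0,1,0)=\pi^*(m_2,0,1,0)=\mathtt{SM2}$ with $m_1<n<m_2$, the levels $m_1$ and $m_2$ are absorbing: the walk, started strictly inside the window $\llbracket m_1,m_2\rrbracket$ and moving by $\pm 1$, cannot escape the window without first hitting one of these $\mathtt{SM2}$ levels. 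Hence it exits the idle set in a finite time $\tau<\infty$ almost surely, at which instant $\pi^*$ assigns $\mathtt{SM2}$ to some base job $J'$. This almost-sure eventual $\mathtt{SM2}$ assignment by $S$ is the key structural consequence of bracketing $n$ by two $\mathtt{SM2}$-optimal states, and it is what makes the comparison below non-vacuous (a naive comparison without this guarantee would falsely assert $\mathtt{SM2}$ is always optimal).

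I would then construct the coupling so that the $\mathtt{SM2}$ job $J$ that $\tilde S$ starts at time $0$ is matched with the job $J'$ that $S$ starts at time $\tau$, assigning them a common service triplet and using \eqref{eqn:job_self} together with $\mu_{c1}>\mu_{l2}$. On $[0,\tau]$ the two systems share the arrival stream and the cloud $\mathtt{SM1}$ stream and differ only in that $\tilde S$ has relocated one job from its base queue to the local server, where it accrues parallel progress against a cloud that remains busy throughout (since $i_1=1$ and the window is non-empty); this should give $\tilde n(t)\le n(t)$ on $[0,\tau]$, strictly once $J$ has advanced locally. After $\tau$ both systems hold one in-progress $\mathtt{SM2}$ job and identical remaining work up to the relabelling $J\leftrightarrow J'$, so their trajectories can be coupled to coincide and the inequality is preserved for all later $t$. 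Carrying this out for every $n\in\llbracket m_1,m_2\rrbracket$ gives the lemma.

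The main obstacle is the path-wise domination $\tilde n(t)\le n(t)$ across the two-stage (local-then-cloud) service of the relocated job, and in particular the boundary bookkeeping: $\tilde S$'s base queue empties one job sooner than $S$'s, so I must track the instants at which $\tilde S$'s cloud would momentarily lack an $\mathtt{SM1}$ job to reassign after a $D_1$ completion, and verify that the local progress already banked on $J$ (and its subsequent cloud $\mathtt{SM2}$ phase) compensates so that $\tilde n$ never exceeds $n$. Making the re-coupling at time $\tau$ rigorous — identifying the late-started $J'$ in $S$ with the early-started $J$ in $\tilde S$ without ever violating the domination — is the delicate step, and it is exactly where the finiteness of the idle window, hence both endpoint hypotheses, is used.
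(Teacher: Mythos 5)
Your argument is essentially the paper's own proof: both observe that, while idling between the two \texttt{SM2}-assigning levels, the base queue of an optimal policy performs a $\pm 1$ birth--death walk on states of the form $(\cdot,0,1,0)$ (up at rate $\lambda$, down at rate $\mu_{c1}$ via an SM1 departure followed by the forced reassignment from Proposition~\ref{pro:offload_empty}), so it is confined to $\llbracket m_1,m_2\rrbracket$ and must assign \texttt{SM2} at an a.s.\ finite time $\tau>0$, after which the conclusion follows from the ``assign the same job to \texttt{SM2} at $t=0$ instead of waiting'' exchange argument that the paper simply cites from Proposition~\ref{pro:offload_empty} rather than re-deriving. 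The only differences are presentational: the paper phrases the contradiction via a maximal all-idle window, whereas you treat each intermediate $n$ directly and spell out more of the coupling bookkeeping (noting, correctly, that the path-wise domination and the re-coupling at $\tau$ are where the remaining care is needed).
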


\begin{proof}
See Appendix \ref{app:empty_interval}.
\end{proof}
\begin{lemma} \label{lem:par_comp_win}
Let $\pi^*$ be optimal and satisfy
\[\pi^*(n,0,1,0) =\mathtt{idle} \quad \forall n\geq M_0\]
for some $M_0\in \mathbb N$.
Then, the value function satisfies:
\[V_\beta(n_0,0,1,0)-V_\beta(n_0,0,0,1) \geq \epsilon_\beta \quad \forall  n_0 \geq M_0,\]
where $\epsilon_\beta>0$ is a constant independent of $n_0$.
\end{lemma}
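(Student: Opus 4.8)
The plan is to bound the value-function gap by a path-wise coupling, upper-bounding $V_\beta(n_0,0,0,1)$ by the discounted cost of a conveniently chosen admissible policy $\tilde\pi$ run on a coupled system $\tilde S$ started in $(n_0,0,0,1)$, while $S$ is run under $\pi^*$ from $(n_0,0,1,0)$. Since $\pi^*$ is optimal for $S$ and $\tilde\pi$ is merely admissible for $\tilde S$, on any common probability space
\[
V_\beta(n_0,0,1,0)-V_\beta(n_0,0,0,1)\ \ge\ \E\Big[\int_0^\infty e^{-\beta t}\big(n(t)-\tilde n(t)\big)\,dt\Big].
\]
The two initial states carry the same number of jobs ($n_0+1$); they differ only in the cloud job, which is an SM1 job (rate $\mu_{c1}$) in $S$ and an SM2 job (rate $\mu_{c2}>\mu_{c1}$) in $\tilde S$. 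The whole idea is that the faster SM2 cloud job lets $\tilde S$ fall one job behind $S$ with constant probability, producing a discounted saving bounded away from $0$ uniformly in $n_0$.

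First I would set up the coupling up to the first cloud departure. I couple the arrival streams, let $\tilde\pi$ idle, and invoke the hypothesis $\pi^*(n,0,1,0)=\mathtt{idle}$ for $n\ge M_0$: since arrivals only raise the base queue, $S$ stays in states $(n_0+k,0,1,0)$ with $n_0+k\ge M_0$ and therefore idles, keeping its single SM1 cloud job in service, while $\tilde S$ keeps its SM2 cloud job. I couple the two cloud service times by the standard splitting of exponentials: a shared clock of rate $\mu_{c1}$ that completes \emph{both} cloud jobs simultaneously, and an exclusive clock of rate $\mu_{c2}-\mu_{c1}$ that completes only $\tilde S$'s SM2 job. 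Up to the first departure $T_1$ the two totals stay equal, so no cost difference accrues. With probability $\mu_{c1}/\mu_{c2}=1-f$ the shared clock fires first: both cloud jobs leave, both systems apply cloud-first, and they land in the identical state $(N-1,0,1,0)$ and remain coupled forever (gap $0$). With probability $(\mu_{c2}-\mu_{c1})/\mu_{c2}=f$ the exclusive clock fires first: only $\tilde S$'s cloud job leaves, $\tilde S$ reassigns via cloud-first to $(N-1,0,1,0)$ while $S$ is still at $(N,0,1,0)$, so $\tilde S$ now trails $S$ by exactly one job. Crucially, $T_1$ is $\mathrm{Exp}(\mu_{c2})$ and independent of which clock wins, both facts being independent of $n_0$.

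On the exclusive branch both systems now hold an SM1 cloud job, so I couple their SM1 completions by a common rate-$\mu_{c1}$ clock and keep arrivals common; neither kind of event changes the gap, so $n(t)-\tilde n(t)=1$ throughout $[T_1,T_2)$, where $T_2$ is the next transition and $T_2-T_1\sim\mathrm{Exp}(\lambda+\mu_{c1})$. Restricting the integral to this interval and using the stated independences gives
\[
\E\Big[\mathbf{1}_{\{\text{exclusive first}\}}\!\int_{T_1}^{T_2}\! e^{-\beta t}\,dt\Big]=\frac{f\,\mu_{c2}}{(\mu_{c2}+\beta)(\lambda+\mu_{c1}+\beta)}=:\epsilon_\beta,
\]
which is positive and independent of $n_0$, as required.

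The main obstacle is legitimizing the single step that drops the integral outside $[T_1,T_2)$: this is valid only if $n(t)-\tilde n(t)\ge 0$ for \emph{all} $t$, so I must extend the coupling past $T_2$ (and on the shared branch) in a way that preserves the domination $\tilde n(t)\le n(t)$ globally. I would do this by continuing to drive $\tilde S$ as a one-job-lead shadow of $S$ — common arrivals and common SM1 completions keep the gap at one — and by collapsing to the coupled (gap $0$) regime whenever a boundary state forces it, e.g.\ when $\tilde S$ empties and $S$'s remaining cloud job departs. Verifying, across every transition type and at the base-queue boundary, that this shadowing never lets $\tilde S$ overtake $S$ is the careful bookkeeping; the probability computation and the uniform-in-$n_0$ constant then follow immediately.
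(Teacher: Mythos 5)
Your overall strategy (upper-bound $V_\beta(n_0,0,0,1)$ by the cost of a hand-crafted admissible policy on a coupled copy, then extract a discounted saving that is uniform in $n_0$) is the same as the paper's, and your constant $\epsilon_\beta=\frac{f\,\mu_{c2}}{(\mu_{c2}+\beta)(\lambda+\mu_{c1}+\beta)}$ is correctly computed and does the job \emph{if} the restriction of the integral to $[T_1,T_2)$ is legitimate. That restriction is exactly where the gap is. You need $n(t)-\tilde n(t)\ge 0$ for \emph{all} $t$ almost surely, and under your splitting coupling the one-job discrepancy created on the exclusive branch persists indefinitely, so you are forced to prove a global domination property. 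Your proposed fix --- run $\tilde S$ as a ``one-job-lead shadow'' and ``collapse at boundaries'' --- is not a proof, and the boundary cases are genuinely problematic: once $S$ assigns \texttt{SM2} from a base queue of length $1$ while $\tilde S$'s base queue is empty, the two systems' job compositions diverge ($S$ acquires a local SM2 job and later an SM2 cloud job that $\tilde S$ has no counterpart for), the mirroring of decisions is no longer well defined, and showing $\tilde n\le n$ from that point on requires a multi-case analysis over all subsequent transition types that you have not carried out. As written, the proof is incomplete at its load-bearing step.

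The paper avoids this entirely by coupling the two cloud service times deterministically, $\sigma_{c2}=\frac{\mu_{c1}}{\mu_{c2}}\sigma_{c1}=(1-f)\sigma_{c1}<\sigma_{c1}$, rather than by thinning. Then $\tilde\pi$ is defined so that $\tilde S(t)=S(t)+(0,0,-1,+1)$ on $[0,\sigma_{c2}]$, $\tilde S(t)=S(t)+(0,0,-1,0)$ on $[\sigma_{c2},\sigma_{c1}]$ (the cloud in $\tilde S$ simply idles), and $\tilde S(t)=S(t)$ for $t>\sigma_{c1}$: the discrepancy is confined by construction to the interval $[\sigma_{c2},\sigma_{c1}]$ and the systems coincide exactly outside it, so the cost difference equals $\mathbb{E}\bigl[\int_{\sigma_{c2}}^{\sigma_{c1}}e^{-\beta t}\,dt\bigr]$ with no domination lemma needed; the uniform lower bound then comes from conditioning on the job finishing before the next arrival. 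If you want to keep your splitting coupling, the clean repair is the dummy-job device the paper uses elsewhere (cf.\ the proof of Lemma~\ref{lem:no_SM2_empty}): at $T_1$ on the exclusive branch, have $\tilde S$ install a dummy SM1 job with service time equal to the residual service time of $S$'s SM1 job, so that $\tilde S$ plus the dummy evolves identically to $S$ forever and $\tilde n(t)\le n(t)$ holds trivially.
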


\begin{proof}
See Appendix \ref{app:par_comp_win}.
\end{proof}
\pad
Lemma~\ref{lem:par_comp_win} states that if a job assigned SM2 completes local service, the system is in a consistently better state than if that job were still queued up and waiting to be served under SM1. This result helps us establish the inevitability of eventually assigning SM2 under an optimal policy. We formalize this in the following Lemma.

\begin{lemma} \label{lem:eventually_SM2}
Any policy $\pi$ that satisfies
\[
\pi(n,0,1,0) =\mathtt{idle} \quad \forall n\geq M_0
\]
for some $M_0\in \mathbb N$. Then, policy $\pi$ is not delay-optimal.
\end{lemma}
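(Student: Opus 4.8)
The plan is to argue by contradiction: suppose a policy $\pi$ satisfying the stated idling condition were delay-optimal, and exhibit a profitable one-shot deviation to \texttt{SM2}. Writing $\pi=\pi^*$ with value function $V_\beta$, the hypothesis is exactly the premise of Lemma~\ref{lem:par_comp_win}, which then supplies a uniform gap $V_\beta(n,0,1,0)-V_\beta(n,0,0,1)\ge \epsilon_\beta>0$ for all $n\ge M_0$. Optimality of \texttt{idle} at $(n_0,0,1,0)$ forces the Bellman inequality $V_\beta(n_0,0,1,0)\le V_\beta(n_0-1,1,1,0)$, the right-hand side being the value after the instantaneous $U_2$ transition. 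My goal is to establish the reverse strict inequality for some $n_0\ge M_0$, which immediately rules out optimality of $\pi$.

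To compare these two quantities I would use a pathwise coupling in the spirit of Section~\ref{sec:coupling_framework}. Let system $S$ start in $(n_0,0,1,0)$ and run under $\pi^*$ (which idles here, since $n_0\ge M_0$ and the base queue will be kept large), and let the coupled system $B$ start in $(n_0-1,1,1,0)$ --- the state reached by assigning \texttt{SM2} --- under the shadow policy that idles whenever $S$ idles, mirrors $S$'s cloud-first reassignment after each cloud SM1 completion, and reverts to $\pi^*$ once the systems meet. I couple the common Poisson arrivals, the in-service cloud SM1 job, and $B$'s single local SM2 job. The key bookkeeping invariant is that the base queue of $S$ exceeds that of $B$ by exactly one at all times up to the merge, so that $n_S(t)=n_B(t)$ identically: $S$ always holds exactly one cloud SM1 job, while $B$ holds one extra in-progress job (first at the local server, then waiting behind SM1 at the cloud). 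Let $T_\ell\sim\exp(\mu_{l2})$ be $B$'s local completion time, and let $\tau$ be the first cloud SM1 completion after $T_\ell$ --- the \emph{merge} instant. A direct transition check shows that at $\tau$ the SM1 job departs from both systems, $S$ reassigns cloud-first into $(m,0,1,0)$, while $B$'s finished SM2 job begins cloud service, placing $B$ in $(m,0,0,1)$ with the \emph{same} $m$; populations remain equal through $\tau$.

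Because the discounted cost accrued on $[0,\tau]$ is identical in the two systems, subtracting the cost decompositions leaves only the terminal value difference, and Lemma~\ref{lem:par_comp_win} yields $V_\beta(n_0,0,1,0)-V_\beta(n_0-1,1,1,0)\ge \E\!\left[e^{-\beta\tau}\bigl(V_\beta(m,0,1,0)-V_\beta(m,0,0,1)\bigr)\right]\ge \epsilon_\beta\,\E[e^{-\beta\tau}]>0$ on the event $\{m\ge M_0\}$. This strict gain contradicts the idle-optimality inequality, so $\pi$ cannot be optimal. The main obstacle is precisely the proviso $m\ge M_0$: since $\E[\tau]\le 1/\mu_{l2}+1/\mu_{c1}$ is bounded, only $O(1)$ arrivals and cloud completions occur before the merge, so for $n_0$ taken sufficiently large the base queue stays above $M_0$ throughout $[0,\tau]$ with probability tending to one, keeping both the idling hypothesis and Lemma~\ref{lem:par_comp_win} in force. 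The residual low-probability event where the base queue dips below $M_0$ (possible only under negative drift $\lambda<\mu_{c1}$) I would absorb with a uniform Lipschitz bound $|V_\beta(s)-V_\beta(s')|\le C$ for states of equal population, so that its contribution is $O(\Pr[\text{dip}])\to 0$ and the positive term $\epsilon_\beta\,\E[e^{-\beta\tau}]$ dominates for all large $n_0$.
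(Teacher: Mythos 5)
Your proposal is correct and follows essentially the same route as the paper's own proof: a one-shot deviation that assigns \texttt{SM2} at $t=0$, a population-preserving pathwise coupling up to a stopping time, Lemma~\ref{lem:par_comp_win} supplying the uniform gap $\epsilon_\beta$, and a large initial base queue so that the unfavourable event (the queue draining to the vicinity of $M_0$ before the coupling pays off) has vanishing probability and its contribution is absorbed by a fixed constant. The only substantive difference is in the comparison instant: the paper evaluates the value-function gap at the local completion time $\sigma_{l2}$ itself (states $(n,0,1,0)$ versus $(n-1,0,1,1)$), whereas you wait for the next cloud SM1 departure so that the compared states are literally the pair $(m,0,1,0)$ versus $(m,0,0,1)$ appearing in Lemma~\ref{lem:par_comp_win} --- a slightly more faithful invocation of that lemma, at the cost of requiring that the in-service SM1 job at $B$'s cloud not be preempted when the finished local job joins the SM2 cloud queue, a point the model's SEPT priority rule leaves ambiguous.
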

\begin{proof}
See Appendix \ref{app:eventually_SM2}
\end{proof}
\addvspace{1.5ex}
Having established a threshold structure in the optimal assignment of SM2 among states in $\{(n,0,1,0) \mid n\in\mathbb{N}\}$, we now restrict our attention exclusively to policies adhering to this structure. We explicitly define:
\begin{equation}
N_0 \coloneqq \min\{n\in\mathbb N \mid \pi^*(n,0,1,0) = \mathtt{SM2}\}. \label{eqn:N0def}
\end{equation}

Having previously shown that assigning SM1 is optimal at states in  $\{(n,0,0,0)\mid n_0\in\N\}$, it remains to show that SM1 is also optimal at states in  $\{(n,0,1,0)\mid n_0\in\N\}$. We show this in the following Proposition.
\begin{proposition}\label{pro:SM1_full}
Let $\pi^*$ be a delay-optimal policy, Then, 
\[\pi^*(n,1,0,0) = \mathtt{SM1}, \quad \forall n\in \mathbb N.\]
\end{proposition}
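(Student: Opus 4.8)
The plan is to rule out the only competing action: since $\mathcal U(n,1,0,0)=\{\mathtt{idle},\mathtt{SM1}\}$, it suffices to show that idling is strictly suboptimal, and the argument is a coupling argument that parallels Lemma~\ref{lem:no_idle_empty} (the empty-queue case) with one extra local SM2 job that is common to both systems. Suppose, for contradiction, that $\pi^*(n,1,0,0)=\mathtt{idle}$. I would introduce a coupled system $\tilde S$ run under a policy $\tilde\pi$ that deviates from $\pi^*$ only initially, assigning $\mathtt{SM1}$, so that $S$ starts at $(n,1,0,0)$ while $\tilde S$ starts at $U_1(n,1,0,0)=(n-1,1,1,0)$. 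Following Section~\ref{sec:coupling_framework}, I would give both systems a common arrival stream, couple the local SM2 service (the job with $i_2=1$ is the same in both, with a shared completion time), and couple the cloud services through the job-level service-time triplets.

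The heart of the argument is to maintain, during a first phase, the invariant $\tilde S(t)=S(t)+(-1,0,1,0)$: the offloaded job $J$ that $\tilde S$ holds at the cloud under SM1 is exactly the job that $S$ leaves in its base queue, so that $\tilde n(t)=n(t)$. I would verify that every coupled transition preserves this invariant — arrivals increment $n_0$ in both; the local completion $D_L$ moves the common SM2 job to the cloud in both; a cloud SM2 departure $D_2$ acts on both; and whenever $\pi^*$ assigns $\mathtt{SM2}$ in $S$, $\tilde\pi$ copies it, idling only when the base-queue boundary forbids the copy (which cannot increase $\tilde n$). The phase ends in one of two ways. If $S$ assigns $\mathtt{SM1}$ — which, by Proposition~\ref{pro:offload_empty}, it does on reaching a state of $\Sp_1$ — then $S$ offloads the very job $J$ and the two systems coincide, after which $\tilde\pi$ follows $\pi^*$ and $\tilde n\equiv n$. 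If instead the cloud completes $J$ in $\tilde S$, which occurs at rate $\mu_{c1}$ while no SM2 job occupies the cloud, then $\tilde S$ sheds $J$ and enters a second phase with $\tilde n(t)=n(t)-1$.

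In the second phase I would maintain the weaker, standard ``one-job-ahead'' coupling, having $\tilde\pi$ shadow $\pi^*$ and idle a server only where the missing job would have been needed; this keeps $\tilde n(t)\le n(t)$ for all later $t$, with strict inequality over the entire interval from the completion of $J$ until the systems recouple. The decisive observation is that the event ``$J$ clears the cloud before the common local SM2 job completes and before any arrival'' has probability at least $\mu_{c1}/(\mu_{c1}+\mu_{l2}+\lambda)>0$, on which the second phase is entered over a time interval of positive length. Hence $\tilde n(t)\le n(t)$ almost surely and $\tilde n(t)<n(t)$ on a set of positive probability and positive measure, giving $J_\beta((n,1,0,0);\tilde\pi)<J_\beta((n,1,0,0);\pi^*)=V_\beta(n,1,0,0)$ and contradicting the optimality of $\pi^*$. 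Therefore $\pi^*(n,1,0,0)=\mathtt{SM1}$.

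I expect the main obstacle to be the bookkeeping around the cloud's SEPT priority: once the common local SM2 job finishes and enters the cloud, it takes precedence over the offloaded job $J$, so $J$'s progress is effectively paused in $\tilde S$ exactly while the same job is processed as a fresh cloud SM2 job in $S$. I must confirm that the invariant $\tilde S=S+(-1,0,1,0)$ is preserved verbatim through this pause, so that the head-start advantage is realized only on the complementary race in which $J$ clears the idle cloud before the local SM2 job departs. I must also handle the base-queue boundary cases, where $\tilde S$ cannot literally replicate an $\mathtt{SM1}$ or $\mathtt{SM2}$ assignment of $S$ and instead idles; these only improve $\tilde S$'s position and so never endanger $\tilde n\le n$.
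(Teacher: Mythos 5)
Your overall strategy --- rule out $\mathtt{idle}$ at $(n,1,0,0)$ by a single pathwise coupling in which $\tilde S$ offloads the head-of-line job $J$ at $t=0$ and thereafter shadows $\pi^*$ --- is genuinely different from the paper's route (the paper first establishes the threshold $N_0$ at states $(n,0,1,0)$, proves the claim for $n\ge N_0$ via Lemma~\ref{lem:SM1_afterN_0}, and then inducts downward in $n$ using one-step Bellman expansions and the value-function inequalities of Claims~\ref{claim:B1}--\ref{claim:B3}). Unfortunately your coupling does not deliver the pathwise dominance $\tilde n(t)\le n(t)$ that the contradiction requires, and the failure occurs exactly at the boundary case you set aside in one clause. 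Suppose during phase 1 the system $S$ reaches a state $(1,0,0,n_2)$ with $n_2\ge 1$ and $\pi^*$ assigns $\mathtt{SM2}$ there; since the base queues of $S$ and $\tilde S$ differ only by $J$, the job being assigned is $J$ itself, and $\tilde S$ cannot mirror the move. From that moment the two systems hold the same number of jobs but in incomparable configurations: in $S$, job $J$ is served continuously at the dedicated local server and then rejoins the cloud as an SM2 job with a freshly drawn requirement $\sigma_{c2}^J$ (independent of $\sigma_{c1}^J$ and not ordered against it pathwise), whereas in $\tilde S$, job $J$ sits in the SM1 slot and, by SEPT priority, receives no service until the SM2 cloud queue empties. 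On the positive-probability event that the SM2 cloud backlog persists past $J$'s local completion in $S$ and $\sigma_{c2}^J<\sigma_{c1}^J-a$ (with $a$ the service $J$ accrued in $\tilde S$ before being blocked), $J$ finishes strictly earlier in $S$, so $n(t)<\tilde n(t)$ on a nonempty interval. Your remark that idling at the boundary ``cannot increase $\tilde n$'' is true only at the instant of divergence; it says nothing about the subsequent evolution, which is where the ordering breaks.

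This is not a cosmetic issue: comparing a job parked at the cloud SM1 slot against the same job parked at the local SM2 server, when the SM2 cloud queue is nonempty, is precisely the comparison that cannot be resolved by a sample-path or dummy-job argument, and it is why the paper retreats to expected-cost comparisons. Lemma~\ref{lem:no_SM2_empty} settles this comparison only when the rest of the system is empty; Lemmas~\ref{lem:idle_SM2} and~\ref{lem:idle_idle} handle the general case by expanding the cost over a single transition and invoking the induction hypothesis $\pi^*(As)=\pi^*(A^2s)=\mathtt{SM1}$ together with Claims~\ref{claim:B1}--\ref{claim:B3}. To salvage your direct argument you would need either to prevent the boundary case from arising (it cannot be excluded a priori: nothing forbids $\pi^*(1,0,0,n_2)=\mathtt{SM2}$) or to prove an expected-cost inequality between the two post-divergence configurations, which is tantamount to redoing the paper's analysis. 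The rest of your construction --- the phase-1 invariant $\tilde S(t)=S(t)+(-1,0,1,0)$, the race won with probability $\mu_{c1}/(\mu_{c1}+\mu_{l2}+\lambda)$, and the phase-2 dummy-job bookkeeping --- is sound.
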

\begin{IEEEproof}
We first demonstrate in Lemma~\ref{lem:SM1_afterN_0} that:
\[\pi^*(n,1,0.0) = \mathtt{SM1} \quad \text{for } n\geq N_0-1\]
Proposition~\ref{pro:SM1_induction} then inductively extends this property downward to all $n\in \mathbb{N}$.
\end{IEEEproof}

\begin{lemma}\label{lem:SM1_afterN_0}
Let $\pi^*$ be a delay-optimal policy. Then
\[\pi^*(n,1,0,0) = \mathtt{SM1} \quad \text{for } n\geq N_0.\]
\end{lemma}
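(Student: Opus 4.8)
The plan is to show that idling is strictly suboptimal at $(n,1,0,0)$ whenever $n\ge N_0$; since the only other admissible action there is $\mathtt{SM1}$ (as $\mathcal U(n,1,0,0)=\{\mathtt{idle},\mathtt{SM1}\}$), an optimal policy must then assign $\mathtt{SM1}$. I would argue by contradiction within the coupling framework of Section~\ref{sec:coupling_framework}. Observe first that in state $(n,1,0,0)$ the cloud is empty ($i_1=n_2=0$) while the local server is busy ($i_2=1$), so idling leaves the faster cloud server unused despite a nonempty base queue. Suppose $\pi^*$ idles at $(n,1,0,0)$ with $n\ge N_0$, let $S$ be the system driven by $\pi^*$, and let $\tilde S$ be a coupled system driven by a policy $\tilde\pi$ that assigns $\mathtt{SM1}$ at the initial state, moving to $(n-1,1,1,0)$, and thereafter shadows $\pi^*$. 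The two systems share the arrival process and, through the self-coupling $\sigma_{l2}^J=(\mu_{c1}/\mu_{l2})\,\sigma_{c1}^J$ of \eqref{eqn:job_self}, all service times. The goal is to establish $\tilde n(t)\le n(t)$ for all $t$, with strict inequality on a positive-probability interval, contradicting the optimality of $\pi^*$.

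The next step is to set up the invariant driving the coupling: at every transition instant the configuration of $\tilde S$ is that of $S$ with one distinguished base-queue job $J$ advanced one stage into cloud SM1 service, or, once $J$ has departed $\tilde S$, with $\tilde S$ carrying exactly one fewer job. Because the extra base-queue job in $S$ affects no service completion and only enlarges the assignment options, $\tilde\pi$ can mirror every arrival $A$ and every assignment that $\pi^*$ makes, so the invariant is preserved through these transitions. The strict gain comes from the event $\{\sigma_{c1}^J<\sigma_{l2}^L\}$, where $L$ is the local job in service: on this event the cloud in $\tilde S$ completes $J$ (via $D_1$), reaching $(n-1,1,0,0)$, before the state of $S$ changes meaningfully, yielding $\tilde n(t)<n(t)$ over a nondegenerate interval of positive probability (arrivals in the interim add equally to both systems and preserve the gap).

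The delicate step, which I expect to be the main obstacle, is the branch in which the local job completes first (the $D_L$ transition): $S$ moves to $(n,0,0,1)$ while $\tilde S$ moves to $(n-1,0,1,1)$, and the cloud's SEPT rule now prioritizes the migrated SM2 job $L$ over the offloaded SM1 job $J$ in $\tilde S$. I must verify that $J$'s head start is not erased while $L$ occupies the cloud. Since $\sigma_{l2}^L$ is common, the $D_L$ instant coincides in both systems, and since $\sigma_{c2}^L$ is common and the cloud serves $L$ in $S$ as well, the two systems advance in lockstep through $L$'s cloud service; after $L$ departs (via $D_2$) the job $J$ resumes one stage ahead. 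In the simplest sub-branch $S$ reaches $(n,0,0,0)$ and, by the cloud-first property (Proposition~\ref{pro:offload_empty}), assigns $\mathtt{SM1}$, landing on $\tilde S$'s state so the systems reconverge. The genuine difficulty is the remaining reachable comparison states, which take the form $(m,0,1,0)$, and it is precisely here that the hypothesis $n\ge N_0$ is needed: Proposition~\ref{pro:empty_behaviour} together with the definition of $N_0$ in \eqref{eqn:N0def} forces $\pi^*$ to assign $\mathtt{SM2}$ at such states, keeping the cloud busy and preventing $S$ from reclaiming the lead; tracking the boundary index $m$ (which can dip toward $N_0-1$ when $n=N_0$) is the fiddly part of closing this branch. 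Whenever the systems reach a common configuration, I would re-couple the remaining memoryless exponential service times so that the two evolve identically thereafter, freezing $\tilde n(t)=n(t)$ while the earlier strict gain persists in the discounted cost.

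Finally, combining the invariant $\tilde n(t)\le n(t)$ with the positive-probability strict gain yields $J_\beta(n,1,0,0;\tilde\pi)<J_\beta(n,1,0,0;\pi^*)$, contradicting the optimality of $\pi^*$. Hence $\pi^*(n,1,0,0)\ne\mathtt{idle}$, and since $\mathcal U(n,1,0,0)=\{\mathtt{idle},\mathtt{SM1}\}$ we conclude $\pi^*(n,1,0,0)=\mathtt{SM1}$ for all $n\ge N_0$.
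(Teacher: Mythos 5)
Your route is genuinely different from the paper's, and as written it has a real gap. The paper does not use a coupling at all here: it starts from $s=(n+1,0,0,0)$ with $n\ge N_0$, notes that Proposition~\ref{pro:offload_empty} gives $\pi^*(s)=\mathtt{SM1}$ and Proposition~\ref{pro:empty_behaviour} gives $\pi^*(U_1s)=\mathtt{SM2}$, so the optimal policy instantaneously drives $s$ to $U_2U_1s$. Since $U_1U_2s=U_2U_1s$ and each assignment is an admissible instantaneous action, one gets the sandwich $V_\beta(s)\le V_\beta(U_2s)\le V_\beta(U_1U_2s)=V_\beta(s)$, forcing $\mathtt{SM1}$ to be optimal at the intermediate state $U_2s=(n,1,0,0)$. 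This is where $n\ge N_0$ does its work --- it is exactly what licenses the second assignment at $U_1s=(n,0,1,0)$ via \eqref{eqn:N0def} --- and the whole proof is three lines with no sample-path analysis.

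The gap in your coupling is in the $D_L$ branch. After the local job $L$ completes, $S$ sits at $(n+k,0,0,1)$-type states, where $\mathtt{SM2}$ is admissible, and you silently assume $\pi^*$ idles there until $L$ clears the cloud so that the two systems ``advance in lockstep.'' If instead $\pi^*$ assigns $\mathtt{SM2}$ at such a state, $S$ sends the distinguished head-of-line job $J$ to the local server while $\tilde S$ has already committed $J$ to cloud SM1 (stuck behind $L$ under SEPT); the two systems then hold \emph{different} jobs with independent, differently coupled service requirements in corresponding slots, and the invariant ``$\tilde S$ equals $S$ with $J$ advanced one stage'' is no longer preserved by mirroring. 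The same issue arises whenever $S$ later assigns some other job to SM1 while $\tilde S$'s unit-capacity SM1 slot is still occupied by $J$. These are precisely the branches that force the paper to abandon pure pathwise coupling for $n<N_0$ and instead run the value-function induction of Proposition~\ref{pro:SM1_induction} with the dummy-job comparisons of Lemma~\ref{lem:no_SM2_empty}; your sketch, which nowhere uses $n\ge N_0$ in a load-bearing way (the role you assign it at the states $(m,0,1,0)$ is not needed for dominance, since both systems take the same action from the same nominal state there), would prove the claim for all $n$ if it closed --- a sign that the unclosed branches are where the real difficulty lives. Either handle the $\mathtt{SM2}$-assignment branch explicitly (with the dummy-job device) or switch to the paper's commutation argument.
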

\begin{proof}
Let $s = (n+1,0,0,0)$ for some $n\geq N_0$.\\
From Propositions~\ref{pro:offload_empty} and~\ref{pro:empty_behaviour}, we have:
\[\pi^*(s) = \mathtt{SM1} \quad \text{and} \quad \pi^*(U_1s) = \mathtt{SM2}\]
indicating that the optimal policy successively assigns SM1 and then SM2 at state $s$. Since assigning both SM1 and SM2 is optimal at this higher state, assigning SM1 must also be optimal at the lower state, $U_2 s$.
\end{proof}


\begin{proposition} \label{pro:SM1_induction}
Let $\pi^*$ be a delay-optimal policy. Consider a state $s = (n_0,1,0,0)$ for some $n_0 \in \mathbb N$. Then:
\[\pi^*(A^2s) = \pi^*(As) = \mathtt{SM1} \Rightarrow \pi^*(s) = \mathtt{SM1}\]
\end{proposition}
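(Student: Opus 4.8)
The plan is to establish the contrapositive by a sample-path coupling in the style of Section~\ref{sec:coupling_framework}: assuming $\pi^*(As)=\pi^*(A^2s)=\mathtt{SM1}$, I show that idling at $s=(n_0,1,0,0)$ is strictly suboptimal, so any delay-optimal $\pi^*$ must choose $\mathtt{SM1}$ at $s$. I run two coupled systems from state $s$. The system $S$ takes $\mathtt{idle}$ at $s$ and thereafter follows $\pi^*$, while the comparison system $\tilde S$ takes $\mathtt{SM1}$ at $s$, landing in $U_1 s=(n_0-1,1,1,0)$, and thereafter follows a shadow policy $\tilde\pi$ described below. I couple the arrival process and the per-job service-time triplets, using $\sigma^J_{l2}=(\mu_{c1}/\mu_{l2})\sigma^J_{c1}$, and aim for the domination $\tilde n(t)\le n(t)$ for all $t$, with strict inequality on a set of positive probability.

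The coupling is organized around the invariant that $\tilde S$ is identical to $S$ except that one job sitting in $S$'s base queue has instead been relocated to $\tilde S$'s cloud SM1 slot; equivalently $\tilde S$ and $S$ share the same $(i_2,i_1,n_2)$-profile up to that single relocated job, and $\tilde n=n$. The shadow policy $\tilde\pi$ copies whatever control $\pi^*$ issues in $S$ whenever it is admissible in $\tilde S$, and idles otherwise. I then track the four transition types. The arrival branches are where the hypotheses enter: when an arrival drives $S$ to a state of the form $(\cdot,1,0,0)$, namely $As$ or $A^2s$, we are given $\pi^*=\mathtt{SM1}$, so $S$ offloads the relocated job into exactly the cloud SM1 slot that $\tilde S$ already occupies, merging the systems ($\tilde S=S$ thereafter). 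A cloud SM2 emptying that returns $S$ to a state $(\cdot,0,0,0)$ triggers $\mathtt{SM1}$ by Proposition~\ref{pro:offload_empty}, again forcing a merge. The decisive event is completion of $\tilde S$'s cloud SM1 job, which has no counterpart in $S$ (whose cloud is idle): it removes a job only from $\tilde S$, giving $\tilde n=n-1$. Since this event occurs first with probability $\mu_{c1}/(\lambda+\mu_{l2}+\mu_{c1})>0$, it supplies the required strict gap, which shadowing then preserves.

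The main obstacle is the continuation once a local SM2 completion desynchronizes the two clouds before the relocated job clears: after $D_L$ the systems reach $\tilde S=(n_0-1,0,1,1)$ and $S=(n_0,0,0,1)$, the relocated SM1 job in $\tilde S$ is now blocked behind a cloud SM2 job by SEPT, and $\pi^*$'s action in $S$ at the non-offloading states $(\cdot,0,0,1)$ is not pinned down by any earlier result. I would handle this by showing the invariant is robust under shadowing in every case: if $S$ idles or assigns $\mathtt{SM2}$, $\tilde S$ mirrors it and the relocated-job offset persists with $\tilde n=n$; if $S$ assigns $\mathtt{SM1}$ the systems merge; and in the boundary case where $S$ can assign $\mathtt{SM2}$ but $\tilde S$'s base queue is empty, the offset migrates to a comparison between a local SM2 job in $S$ and a cloud SM1 job in $\tilde S$, which is exactly the regime the coupling $\sigma^J_{l2}=(\mu_{c1}/\mu_{l2})\sigma^J_{c1}$ with $\mu_{c1}>\mu_{l2}$ resolves in $\tilde S$'s favour, since the SM1 job departs strictly sooner. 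Verifying that no interleaving of arrivals, the two cloud completions, and the forced-idle constraint at states with $i_1=i_2=1$ can ever make $\tilde n(t)>n(t)$ is the bulk of the work; the two hypotheses $\pi^*(As)=\pi^*(A^2s)=\mathtt{SM1}$ are what close the arrival branches, guaranteeing the idling system commits the relocated job to the cloud before it can fall behind.
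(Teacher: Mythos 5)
Your overall strategy differs from the paper's: the paper rules out $\pi^*(s)=\mathtt{idle}$ by a case split on the action $\pi^*(D_Ls)$ at the post-local-completion state, handling $\pi^*(D_Ls)=\mathtt{SM2}$ via a coupling truncated at the \emph{first} transition instant (Lemma~\ref{lem:idle_SM2}) and $\pi^*(D_Ls)=\mathtt{idle}$ via a one-step Bellman expansion with value-function inequalities (Lemma~\ref{lem:idle_idle}, Claims~\ref{claim:B1}--\ref{claim:B3}). You instead attempt a single, indefinite-horizon coupling with pathwise domination $\tilde n(t)\le n(t)$. That is a legitimately different route, but it has a genuine gap exactly where you defer ``the bulk of the work.''

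The domination claim fails in the desynchronized regime you identify. Take $n_0=1$: after $D_L$ you have $S=(1,0,0,1)$, $\tilde S=(0,0,1,1)$, and if $\pi^*(1,0,0,1)=\mathtt{SM2}$ the offset job becomes a local SM2 job in $S$ versus a blocked cloud SM1 job in $\tilde S$. Your assertion that the coupling $\sigma_{l2}=(\mu_{c1}/\mu_{l2})\sigma_{c1}$ ``resolves this in $\tilde S$'s favour'' compares the wrong quantities: the relevant race is between (i) $S$'s offset job finishing local service \emph{in parallel with} the cloud, joining the SM2 cloud queue, and then receiving $\sigma_{c2}$ of service, and (ii) $\tilde S$'s offset job waiting behind \emph{all} present and future SM2 cloud jobs under SEPT before receiving $\sigma_{c1}$. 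The framework of Section~\ref{sec:coupling_framework} does not couple $\sigma_{c2}$ to $\sigma_{c1}$, and $\sigma_{c2}$ is stochastically smaller, so with positive probability the offset job clears the cloud in $S$ strictly before it does in $\tilde S$ (e.g., once both clouds finish the common SM2 job, $S$ serves the offset at rate $\mu_{c2}$ while $\tilde S$ serves it at rate $\mu_{c1}$, and moreover later-arriving SM2 assignments preempt the SM1 slot in $\tilde S$ but queue behind the offset job in $S$). At that instant $\tilde n(t)=n(t)+1$, so the pathwise inequality you are trying to propagate is false, and no amount of case checking will rescue it. To close the argument you must abandon pathwise domination at the desynchronization point and fall back on expected-discounted-cost comparisons between the two post-transition states --- which is precisely the structure of the paper's proof (dummy-job value comparisons in Lemma~\ref{lem:idle_SM2}, and the Bellman-inequality claims behind Lemma~\ref{lem:idle_idle}, the latter being where the hypothesis $\pi^*(A^2s)=\mathtt{SM1}$ is actually needed, rather than merely to ``close arrival branches'').
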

\begin{IEEEproof}
Let $s = (n_0,1,0,0)$ for some $n_0\in\mathbb{N}$. At this state, the admissible DMS controls are $\mathcal{U}(s)=\{\mathtt{idle, SM1}\}$. Thus, it suffices to show that the scenario:
\[\pi^*(s) = \mathtt{idle},\quad \pi^*(A s) = \pi^*(A^2 s) = \mathtt{SM1}\]
cannot occur under optimality. Suppose for contradiction this scenario holds. Consider the state $D_Ls$, where the admissible controls are $\mathcal U(D_Ls) = \{\mathtt{idle, SM2}\}$. We examine two cases:

\textbf{Case 1:} $\pi^*(D_Ls) = \mathtt{SM2}$\\
This case is ruled out by Lemma~\ref{lem:idle_SM2}, proven subsequently.

\textbf{Case 2:} $\pi^*(D_Ls) = \mathtt{idle}$\\
This case is ruled out by Lemma~\ref{lem:idle_idle}, also proven subsequently.
Since both cases lead to contradictions, the original assertion follows.
\end{IEEEproof}

\begin{lemma}\label{lem:idle_SM2}
Let $s = (n_0, 1, 0, 0)$ for some $n_0\in \mathbb N$, and suppose there exists a policy $\pi$ such that:
\begin{align*}
	\pi(s) = \mathtt{idle},\quad
	\pi(As) = \mathtt{SM1},\quad
	\pi(D_Ls) = \mathtt{SM2}
\end{align*}
Then, $\pi$ is not delay-optimal.
\end{lemma}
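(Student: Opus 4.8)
The plan is to invoke the coupling framework of Section~\ref{sec:coupling_framework} and exhibit an admissible policy that strictly outperforms $\pi$ started from $s$. Let $S$ run under $\pi$, so that $\pi(s)=\mathtt{idle}$ leaves the cloud idle, and let $\tilde S$ run under a modified policy $\tilde\pi$ that instead assigns $\mathtt{SM1}$ at $s$, moving immediately to $U_1 s=(n_0-1,1,1,0)$ by committing a base-queue job $J_1$ to the cloud. I would couple the two systems to share the arrival process and, through the job-based service-time assignment, to give coupled jobs identical service times. The total counts agree initially ($n(s)=n(U_1 s)=n_0+1$), and the one structural discrepancy I track is this: in $S$ the job $J_1$ sits in the base queue with the cloud $\mathtt{SM1}$ slot empty, whereas in $\tilde S$ the same job occupies the cloud $\mathtt{SM1}$ slot. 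Call this configuration relation $(\star)$. The goal is to show $\tilde n(t)\le n(t)$ for all $t$, with a strict gap on a positive-length interval of positive probability, which contradicts the optimality of $\pi$.

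The core is a first-transition analysis from $s$. Three coupled events are possible. (i) On an arrival, $S$ reaches $As$ and, by the hypothesis $\pi(As)=\mathtt{SM1}$, assigns $\mathtt{SM1}$ to land in $(n_0,1,1,0)$, while $\tilde S$ merely admits the arrival to reach the same state $(n_0,1,1,0)$; since the residual cloud service is exponential, the systems re-couple in an identical state and incur equal cost thereafter. (ii) On the local completion, $S$ reaches $D_Ls$ and, by the hypothesis $\pi(D_Ls)=\mathtt{SM2}$, assigns $\mathtt{SM2}$ to reach $U_2 D_L s=(n_0-1,1,0,1)$; letting $\tilde\pi$ likewise assign $\mathtt{SM2}$ sends $\tilde S$ to $(n_0-2,1,1,1)$, so both systems now share an identical local $\mathtt{SM2}$ job and an identical cloud $\mathtt{SM2}$ job, and relation $(\star)$ is restored with $J_1$ still the lone displaced job. (iii) The cloud $\mathtt{SM1}$ clock, active only in $\tilde S$ (where $J_1$ is in service because no $\mathtt{SM2}$ job blocks it) and precisely during the interval in which the cloud of $S$ is idle, may fire first; then $J_1$ departs $\tilde S$, giving $\tilde S=(n_0-1,1,0,0)$ against $S=(n_0,1,0,0)$, i.e. $\tilde n=n-1$.

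To finish, I would argue that $(\star)$ is a self-reproducing invariant: every shared job is driven by a shared clock of matching rate (local $\mathtt{SM2}$ at $\mu_{l2}$, cloud $\mathtt{SM2}$ at $\mu_{c2}$), while $J_1$'s cloud $\mathtt{SM1}$ service in $\tilde S$ consumes exactly the cloud-idle epochs of $S$, so the discrepancy never inflates $\tilde n$ above $n$. Two bookkeeping points close the loop. Whenever $\pi$ assigns $\mathtt{SM1}$ in $S$ (its cloud $\mathtt{SM1}$ slot being empty under $(\star)$), both systems then hold one cloud $\mathtt{SM1}$ job and one displaced base job, so a relabeling identifies their states and re-couples them. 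And once $J_1$ has departed $\tilde S$ as in (iii), $\tilde S$ carries a strict subset of the jobs of $S$, so mirroring $\pi$ on the shared jobs keeps $\tilde n\le n$ permanently. Granting the invariant, $n(t)-\tilde n(t)\ge 0$ for all $t$, and on the positive-probability event that the cloud $\mathtt{SM1}$ clock fires before the first arrival or local completion, $n(t)-\tilde n(t)=1$ over a non-degenerate interval; hence $J_\beta(s;\tilde\pi)<J_\beta(s;\pi)$, so $V_\beta(s)\le J_\beta(s;\tilde\pi)<J_\beta(s;\pi)$ and $\pi$ is not delay-optimal.

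I expect the main obstacle to be the rigorous verification that $(\star)$ is preserved under every event together with the relabeling/re-coupling step—in particular, reconciling the SEPT priority (so that $J_1$ advances in $\tilde S$ only when the cloud of $S$ would otherwise idle) and handling the boundary case where the base queue of $\tilde S$ empties one step before that of $S$, so that the claimed domination $\tilde n(t)\le n(t)$ holds on \emph{every} sample path rather than merely along the three opening branches.
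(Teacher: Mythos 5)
Your high-level strategy (couple the two systems, have $\tilde\pi$ assign \texttt{SM1} immediately, and exploit the hypotheses $\pi(As)=\mathtt{SM1}$, $\pi(D_Ls)=\mathtt{SM2}$ in a first-transition case analysis) matches the paper, and your branches (i) and (iii) correspond to the paper's Cases 1 and 2. However, there is a genuine gap at the heart of your argument: the invariant $(\star)$ is \emph{not} preserved under branch (ii). At the local completion, $S$ moves to $D_Ls$ and assigns \texttt{SM2} to its head-of-line base job --- which is precisely $J_1$ --- so $J_1$ becomes the local \texttt{SM2} job of $S$; meanwhile in $\tilde S$ the job sent to the local server is the \emph{next} base job $J_2$, since $J_1$ already sits in the cloud \texttt{SM1} slot. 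After this event the two systems no longer differ by ``one displaced job'': $J_1$ occupies different service positions in the two systems, $J_2$ does too, and under your identity coupling the two local \texttt{SM2} jobs carry independent service-time realizations, so they are not ``identical'' as you claim and the systems do not re-couple. This is exactly the difficulty the paper's proof is engineered to avoid: it \emph{cross-couples} the service-time triplet of the local \texttt{SM2} job of $S$ with that of the head-of-line base job (so that the job placed in the cloud of $\tilde S$ inherits the triplet of the job being served locally in $S$), and it abandons pathwise domination in favor of an expected-cost comparison, bounding the three branch contributions by value-function differences (established via dummy-job arguments) and closing with a recursive inequality of the form $J_\beta(s;\pi)-J_\beta(s;\tilde\pi)\geq \E[e^{-\beta\tau}]\bigl(J_\beta(s;\pi)-J_\beta(s;\tilde\pi)\bigr)$ on the third branch plus a strictly positive term on the second.

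You could in principle repair branch (ii) by relabeling (exchanging the triplets of $J_1$ and $J_2$ at that instant), but that is no longer the plain coupling you set up, and you would still owe a proof that the resulting discrepancy is a self-reproducing invariant over the whole infinite horizon --- the step you yourself flag as the ``main obstacle.'' As written, the claimed sample-path domination $\tilde n(t)\le n(t)$ is asserted, not proved, and the one branch where it is nontrivial is the one where your bookkeeping breaks. The proposal is therefore incomplete; the missing ingredient is precisely the service-time exchange (or an equivalent value-function/recursive argument) that the paper uses.
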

\begin{proof}
	See Appendix~\ref{app:idle_SM2}.
\end{proof}

\begin{lemma}\label{lem:idle_idle}
Let $s = (n_0, 1, 0, 0)$ for some $n_0\in \mathbb N$, and suppose there exists a policy $\pi$ satisfying:
\begin{align*}
	\pi(s_0) = \:&\pi(D_Ls_0) = \mathtt{idle}\\
	\pi(As_0) = \:&\pi(A^2s_0) = \mathtt{SM1}
\end{align*}
Then, $\pi$ is not delay-optimal.
\end{lemma}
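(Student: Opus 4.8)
The plan is to invoke the coupling framework of Section~\ref{sec:coupling_framework}, arguing by contradiction: I assume $\pi$ is delay-optimal (so that the structural facts already proven for optimal policies are available) and exhibit a coupled system $\tilde S$, run under a policy $\tilde\pi$ that offloads one job earlier than $\pi$, satisfying $\tilde n(t)\le n(t)$ for all $t$ with strict inequality on a positive-length interval of positive probability. Concretely, let $J^{*}$ be the head-of-line base-queue job in $s_0=(n_0,1,0,0)$, and start both systems in $s_0$ at $t=0$ with a common arrival stream and common (coupled) service times. System $S$ follows $\pi$ and idles at $s_0$ (hypothesis), leaving $J^{*}$ in the base queue; system $\tilde S$ instead assigns $\mathtt{SM1}$ to $J^{*}$, moving to $U_1 s_0=(n_0-1,1,1,0)$, so that the cloud begins serving $J^{*}$ at once. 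The two systems are then identical except for the position of $J^{*}$, which is strictly more advanced in $\tilde S$; in particular $\tilde n(0)=n(0)=n_0+1$.

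I would maintain throughout the excursion the invariant that $\tilde S$ is obtained from $S$ by advancing the single job $J^{*}$ along its path (base queue $\to$ cloud $\mathtt{SM1}$ queue $\to$ departed), with every other job in an identical position served by a coupled clock; this invariant immediately yields $\tilde n(t)\le n(t)$, since advancing $J^{*}$ to the cloud keeps the count equal while its departure lowers $\tilde n$ by one. Splitting on the first transition instant: (i) if an arrival occurs first, $S$ reaches $A s_0$ and assigns $\mathtt{SM1}$ to $J^{*}$ by the hypothesis $\pi(A s_0)=\mathtt{SM1}$, so both systems land in $(n_0,1,1,0)$ and couple at equal counts; (ii) if the local $\mathtt{SM2}$ completion occurs first, $S$ reaches $D_L s_0$ and idles (hypothesis), while in $\tilde S$ the freed job joins the cloud $\mathtt{SM2}$ queue ahead of $J^{*}$ under SEPT, leaving $\tilde S=(n_0-1,0,1,1)$ with $J^{*}$ still advanced; (iii) if the cloud completes $J^{*}$ first --- possible only in $\tilde S$, with probability $\mu_{c1}/(\lambda+\mu_{l2}+\mu_{c1})>0$ --- then $J^{*}$ departs $\tilde S$ while $S$ is unchanged at $s_0$, giving $\tilde n=n-1$ on the positive-length interval until the next event. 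Case (iii) supplies the required strict improvement.

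It remains to propagate the invariant after the first event until the systems re-couple. Whenever $S$ assigns a non-$J^{*}$ job, $\tilde\pi$ copies the action on the same job; whenever $S$ advances $J^{*}$, $\tilde\pi$ matches the advancement or, if $J^{*}$ is already further along in $\tilde S$, idles. The hypotheses $\pi(A s_0)=\pi(A^2 s_0)=\mathtt{SM1}$, together with Proposition~\ref{pro:offload_empty} (which forces $\pi(n,0,0,0)=\mathtt{SM1}$), guarantee that as arrivals accumulate or the cloud empties, $\pi$ itself eventually offloads $J^{*}$ in $S$, at which point the advanced copy in $\tilde S$ and the copy in $S$ coincide and the excursion closes at equal counts. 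Integrating $e^{-\beta t}\bigl(n(t)-\tilde n(t)\bigr)\ge 0$, which is strictly positive on the event of case (iii), then gives $J_\beta(s_0;\tilde\pi)<J_\beta(s_0;\pi)=V_\beta(s_0)$, the desired contradiction.

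The main obstacle is verifying that the advancement invariant is genuinely preserved across every coupled transition, and this is where the model's two delicate features enter. First, the cloud's SEPT priority means that while $J^{*}$ waits in the cloud $\mathtt{SM1}$ queue it can be overtaken by $\mathtt{SM2}$ jobs; I must check that such overtaking only delays $J^{*}$ at least as much in $S$ as in $\tilde S$ and never lets $S$ drop below $\tilde S$ in total count. Second, should $\pi$ route $J^{*}$ through $\mathtt{SM2}$ in $S$ while $\tilde S$ has routed it through $\mathtt{SM1}$, I must fall back on the service-time coupling $\sigma^{J^{*}}_{l2}=\tfrac{\mu_{c1}}{\mu_{l2}}\sigma^{J^{*}}_{c1}$ with $\mu_{c1}>\mu_{l2}$ to conclude that the $\mathtt{SM1}$ copy in $\tilde S$ departs no later than the $\mathtt{SM2}$ copy in $S$, so that $\tilde n\le n$ survives. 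Handling these two interactions cleanly --- rather than the routine bookkeeping of arrivals and departures --- is the crux of the argument.
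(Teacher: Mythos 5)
Your opening move (couple two copies started at $s_0$, let $\tilde\pi$ assign \texttt{SM1} at $t=0$, and compare costs) is the same as the paper's, but the engine you propose --- a pathwise dominance invariant $\tilde n(t)\le n(t)$ maintained by ``advancing $J^{*}$ along its path'' --- fails exactly at the branch you yourself flag as the crux, and the paper's proof is structured the way it is precisely to avoid relying on such an invariant. After the local completion fires first, $S$ sits at $D_Ls_0=(n_0,0,0,1)$ with $J^{*}$ still in the base queue, while $\tilde S$ sits at $D_LU_1s_0=(n_0-1,0,1,1)$ with $J^{*}$ irrevocably committed to the cloud \texttt{SM1} queue, where SEPT keeps it blocked behind every present and future \texttt{SM2} cloud job. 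In $S$ the optimal continuation may later route $J^{*}$ through \texttt{SM2}, where it enjoys cloud priority; the lemma's hypotheses say nothing about $\pi$ at $AD_Ls_0$ or any deeper state, so your claim that ``$\pi$ itself eventually offloads $J^{*}$ in $S$'' is unsupported, and the fallback via $\sigma^{J^{*}}_{l2}=\tfrac{\mu_{c1}}{\mu_{l2}}\sigma^{J^{*}}_{c1}$ does not rescue it because the \texttt{SM1} service clock of $J^{*}$ in $\tilde S$ does not even start until the \texttt{SM2} cloud queue drains. On this event the early commitment is plausibly worse, not better, so $\tilde n(t)\le n(t)$ cannot be asserted sample-path by sample-path.

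The paper's actual proof concedes this point: it performs a single one-step Bellman expansion at the first transition instant, obtaining the three differences $V_\beta(As_0)-V_\beta(AU_1s_0)$ (zero, since $\pi(As_0)=\mathtt{SM1}$), $V_\beta(D_Ls_0)-V_\beta(D_LU_1s_0)$, and $V_\beta(s_0)-V_\beta(D_1U_1s_0)$, and it never claims the second difference is nonnegative. Instead it lower-bounds the second and third terms by further one-step expansions (Claims~\ref{claim:B1} and~\ref{claim:B2}, which exploit $\pi(D_Ls_0)=\mathtt{idle}$ and Proposition~\ref{pro:offload_empty}), and then shows that the resulting $\mu_{l2}$- and $\mu_{c1}$-weighted combination is strictly positive via Claim~\ref{claim:B3}; that last step is where the hypothesis $\pi(As_0)=\pi(A^2s_0)=\mathtt{SM1}$ does its real work. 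To repair your argument you would either have to prove $V_\beta(D_Ls_0)\ge V_\beta(D_LU_1s_0)$ outright (which the paper does not establish and which is doubtful given the SEPT blocking), or abandon pathwise dominance on that branch in favor of this kind of expected-value bookkeeping.
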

\begin{proof}
See Appendix~\ref{app:idle_idle}.
\end{proof}

\noindent
This completes the analysis underpinning Theorem~\ref{th:cloud_first}, establishing that the optimal policy must be \textit{cloud-first}. 


\subsection{Switch-Type Behavior at DMS}\label{sec:dms_switch}
Recall that identifying the optimal policy as \textit{cloud-first} completely characterizes SM1 assignments, leaving only the actions of assigning SM2 or idling to be determined at states in $\Sp\setminus\Sp_1$.
In Theorem~\ref{th:partial_switch}, we now establish that an optimal, \textit{cloud-first} policy exhibits a partial \textit{switch-type} structure in this region.

\begin{theorem}[Partial Switch-Type Structure]\label{th:partial_switch}
Let $\pi^*$ be a delay-optimal, cloud-first policy. Then there exist non-increasing sequences of integers $\{N_k\}_{k \in \N_0}$ and $\{N_k'\}_{k \in \N}$ defining a family of switching states:
\[\mathcal{S}^\mathrm{sw} \coloneqq \{(N_k, 0, 1, k), \; (N_k', 0, 0, k) \mid k \in \mathbb{N} \},\]
Such that for all $s\in\mathcal{S}^\mathrm{sw}$ and $(x_0,x_2)\in\N_0^2$
\[\pi^*\big(s + (x_0, 0, 0, x_2)\big) = \mathtt{SM2}\]
\end{theorem}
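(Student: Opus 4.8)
Since \textit{cloud-first} (Theorem~\ref{th:cloud_first}) already fixes every \texttt{SM1} assignment, on the region $\Sp\setminus\Sp_1$ the DMS only ever chooses between \texttt{idle} and \texttt{SM2}, and the switch-type conclusion amounts to closing the \texttt{SM2}-optimal set under the addition of base-queue jobs ($x_0$) and cloud SM2 jobs ($x_2$). My plan is to separate these into a \emph{horizontal} monotonicity (in $n_0$) and a \emph{vertical} monotonicity (in $n_2$), to prove the horizontal structure only for the two minimal cloud configurations, and then to generate the entire switching family by propagating those base cases upward in $n_2$ using the vertical step.

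For the horizontal direction, Proposition~\ref{pro:empty_behaviour} already supplies a threshold $N_0$ for the configuration $(i_1,n_2)=(1,0)$. I would prove the analogous statement for the minimal $i_1=0$ configuration $(0,1)$ — the states $(n_0,0,0,1)$ — by re-running the two ingredients behind Proposition~\ref{pro:empty_behaviour}: an interval property (as in Lemma~\ref{lem:empty_interval}) and an eventual-\texttt{SM2} property (as in Lemma~\ref{lem:eventually_SM2}, itself resting on the value-gap of Lemma~\ref{lem:par_comp_win}). In each of these couplings the single resident cloud job is carried identically in $S$ and $\tilde S$, so it contributes the same departure epochs and cancels in the pathwise comparison $\tilde n(t)\le n(t)$.

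The vertical step is the crux. I would show that $\pi^*(n_0,0,i_1,k)=\mathtt{SM2}$ forces $\pi^*(n_0,0,i_1,k+1)=\mathtt{SM2}$: an additional backlogged SM2 job in the cloud can only make front-loading work at the local server more attractive. Concretely, at the state $(n_0,0,i_1,k+1)$ I would compare the \texttt{SM2} branch with the \texttt{idle} branch by coupling each to the corresponding branch of the smaller state $(n_0,0,i_1,k)$, where \texttt{SM2} is assumed optimal; the paired trajectories then differ only by one surplus cloud SM2 job, and the front-loading advantage — quantified by an analog of Lemma~\ref{lem:par_comp_win} — transfers the level-$k$ preference upward so that $\tilde n(t)\le n(t)$ holds with eventual strictness. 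The main obstacle is the recoupling bookkeeping: when SEPT drains the surplus SM2 job, or when a $D_L$ completion pushes the local SM2 job into the cloud and thereby entangles the $i_2$, $i_1$ and $n_2$ coordinates, the trajectories must be realigned so that both the ordering $\tilde n(t)\le n(t)$ and the \texttt{SM2}-versus-\texttt{idle} comparison survive the event.

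Finally I would assemble the pieces. Iterating the vertical implication from the two horizontal base cases propagates \texttt{SM2}-optimality throughout the upward cone: $\pi^*(n_0,0,1,k)=\mathtt{SM2}$ for every $n_0\ge N_0$ and $k\ge 0$, and $\pi^*(n_0,0,0,k)=\mathtt{SM2}$ for every $n_0\ge N_1'$ and $k\ge 1$. Taking the (non-increasing) sequences $N_k$ and $N_k'$ to be these constants — or, more sharply, the tail suprema of the per-configuration thresholds — makes $\mathcal{S}^{\mathrm{sw}}$ the set of switching states, and for any $s\in\mathcal{S}^{\mathrm{sw}}$ and $(x_0,x_2)\in\N_0^2$ the shifted state $s+(x_0,0,0,x_2)$ sits at base-queue level $\ge N_k\ge N_{k+x_2}$, hence inside the SM2-optimal cone, giving $\pi^*(s+(x_0,0,0,x_2))=\mathtt{SM2}$. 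Notably this argument never invokes horizontal monotonicity \emph{inside} the higher configurations $(i_1,k)$ with $k\ge1$; supplying that — equivalently, the urgency-monotonicity conjecture — is precisely what the full switch-type result of Theorem~\ref{th:switch_type} additionally requires.
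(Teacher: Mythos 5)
Your overall architecture --- horizontal base cases at the lowest cloud configurations, a vertical propagation step in $n_2$, and an assembly that iterates the vertical step over the base rays --- matches the paper's, and your assembly (including the observation that horizontal monotonicity inside the higher configurations is never needed) is sound. The genuine gap is in how you propose to prove the vertical step. You state it pointwise, $\pi^*(n_0,0,i_1,k)=\mathtt{SM2}\Rightarrow\pi^*(n_0,0,i_1,k+1)=\mathtt{SM2}$, and propose to prove it by coupling the \texttt{SM2} and \texttt{idle} branches at level $k+1$ to the corresponding branches at level $k$. That is an interchange argument: it requires the action-value gap between \texttt{SM2} and \texttt{idle} to be monotone in $n_2$, i.e., a supermodularity-type property of the value function, which is precisely the kind of structural fact that is hard to establish here and which you yourself flag as unresolved (``the recoupling bookkeeping''). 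The paper's Proposition~\ref{pro:wall_induction} avoids any value-function comparison by strengthening the hypothesis to the \emph{entire horizontal ray}, $\pi(s+(x_0,0,0,0))=\mathtt{SM2}$ for all $x_0\in\N_0$: starting from $s+(0,0,0,1)$, until the first \texttt{SM2} assignment the state can only change by arrivals (moving right) or by an SM2 cloud departure (dropping onto the ray, where \texttt{SM2} is assigned), so with probability one a job already present at $t=0$ is assigned \texttt{SM2} after a positive random delay, and assigning it at $t=0$ is strictly better --- a pure delayed-action argument. Since your assembly only ever applies the vertical step along full rays anyway, the fix is to restate the step with the ray hypothesis; as written, your vertical step is unproven and your proposed route to it is substantially harder than necessary.

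A secondary divergence: for the second base case $(n,0,0,1)$ you propose re-running Lemmas~\ref{lem:empty_interval}, \ref{lem:par_comp_win} and \ref{lem:eventually_SM2}. The analog of Lemma~\ref{lem:par_comp_win}, which compares a job still waiting for SM1 cloud service against the same job having completed local SM2 service, does not transfer verbatim to a configuration containing no SM1 job, so this re-run is not routine. The paper instead obtains this ray for free from the $k=0$ case: at $(N_0+1+x_0,0,0,0)$ the optimal policy assigns \texttt{SM1} and then \texttt{SM2} in succession, so the same delayed-action argument applies starting from $(N_0+1+x_0,0,0,1)$ (Proposition~\ref{pro:the_wall}, Part~(B)).
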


Theorem~\ref{th:partial_switch} highlights a near-complete \textit{switch-type} characterization. For each $k \in \mathbb N$, the states $(N_k, 0, 0, k)$ and $(N_k' - 1, 0, 1, k)$ serve as effective switching points beyond which assigning \texttt{SM2} becomes optimal in the increasing base-queue size or SM2 cloud queue size directions. For a complete \textit{switch-type} characterization, it remains to show that idling is optimal in the reverse directions, namely, as we decrease base-queue or SM2 cloud queue size. This final step employs the \textit{urgency-monotonicity} conjecture.

\begin{IEEEproof}[Proof of Theorem~\ref{th:partial_switch}]
The theorem follows directly from Lemma~\ref{lem:dec_switches}, establishing the sequences' monotonicity, and Proposition~\ref{pro:the_moving_wall}, showing optimal switching to SM2 in both increasing base-queue and SM2 cloud-queue directions. These results are stated and proved subsequently.
\end{IEEEproof}

\begin{proposition} \label{pro:wall_induction}
Let $\pi$ be a \textit{cloud-first}, delay-optimal policy.
If there exists a state $s \in \Sp$ such that:
\[\pi(s + (x_0,0,0,0)) = \mathtt{SM2} \quad \forall x_0\in \mathbb N_0\]
Then:
\[\pi(s + (0,0,0,1)) = \mathtt{SM2}\]
\end{proposition}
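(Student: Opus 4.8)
The plan is to argue by contradiction using the path-wise coupling of Section~\ref{sec:coupling_framework}. The hypothesis forces \texttt{SM2} along the whole base-queue ray at SM2-cloud level $n_2$, so $s$ must have $i_2=0$ and lie outside $\Sp_1$; write $s=(n_0,0,i_1,n_2)$. At the target state $s+(0,0,0,1)=(n_0,0,i_1,n_2+1)$ only \texttt{idle} and \texttt{SM2} are relevant, since an optimal policy never assigns \texttt{SM1} outside $\Sp_1$, so it suffices to rule out $\pi(s+(0,0,0,1))=\mathtt{idle}$. I would let $S$ be the system run under $\pi$ from $s+(0,0,0,1)$ (so $S$ idles at time $0$) and let $\tilde S$ be a coupled system that applies $U_2$ at time $0$, moving to $(n_0-1,1,i_1,n_2+1)$, and thereafter follows a mirroring policy $\tilde\pi$. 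The objective is to establish $\tilde n(t)\le n(t)$ for all $t\in\R^+$, with the inequality strict on a non-empty interval of positive probability, contradicting the optimality of $\pi$.

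The heart of the proof is an invariant: under coupled arrivals and completions the two systems stay identical except for one tagged job $J^*$, the job $\tilde S$ pulls from the head of the base queue at time $0$, which is always weakly more advanced in $\tilde S$ along the progression base queue $\to$ local \texttt{SM2} $\to$ cloud \texttt{SM2} $\to$ departed. Since $S$ idles and $J^*$ sits at the head of $S$'s base queue, $J^*$ remains there until $S$ assigns it. Here the hypothesis does the essential work: every coupled cloud \texttt{SM2} completion that brings $S$ back to level $n_2$ places $S$ on the wall $s+(x_0,0,0,0)$, forcing $\pi$ to assign \texttt{SM2} to $J^*$ and re-synchronizing the base-queue coordinate, so I never need $\pi$'s behavior off the wall. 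I would define $\tilde\pi$ to copy $S$'s action on every job other than $J^*$ and to idle on the surplus progress of $J^*$; the invariant then yields $\tilde n(t)\le n(t)$ because $J^*$ is never closer to departure in $S$ than in $\tilde S$, while all other jobs depart simultaneously.

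The strict inequality comes from the positive-probability event that $J^*$ finishes its local phase in $\tilde S$ before the first coupled cloud completion and then departs the cloud in $\tilde S$ strictly before the matching completion in $S$; on the interval between these two departures one has $\tilde n(t)=n(t)-1$. Since this interval has positive expected length, the contradiction follows and hence $\pi(s+(0,0,0,1))=\mathtt{SM2}$.

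I expect the principal obstacle to be the interaction of $J^*$ with a blocked cloud \texttt{SM1} job in the case $i_1=1$. Under SEPT the surplus \texttt{SM2} job $J^*$ in $\tilde S$ can sit ahead of the \texttt{SM1} job and delay it, whereas $S$, whose cloud \texttt{SM2} queue is momentarily shorter, may already be serving that \texttt{SM1} job; naively $S$ could then clear the \texttt{SM1} job while $\tilde S$ is still clearing $J^*$, breaking $\tilde n(t)\le n(t)$. This is exactly where the operating assumption $\mu_{c2}>\mu_{c1}$ enters: I would couple the cloud \texttt{SM2} service of $J^*$ in $\tilde S$ with the cloud \texttt{SM1} service of the blocked job in $S$ through a monotone scaling, valid precisely because $\mu_{c2}>\mu_{c1}$, so that $J^*$ clears the cloud in $\tilde S$ no later than the \texttt{SM1} job clears in $S$. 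Verifying that this cross-mode coupling remains globally consistent with the within-system service assignments, in particular when $S$'s own copy of $J^*$ finishes locally and, by memorylessness, the regime reverts to an ordinary same-job coupling, across all interleavings of arrivals and completions, is the delicate step that carries the argument.
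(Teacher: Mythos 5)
Your skeleton matches the paper's proof: both argue by contradiction from $\pi(s+(0,0,0,1))=\mathtt{idle}$, both exploit the fact that until the first \texttt{SM2} assignment the local server stays empty so the state moves only via arrivals and SM2-cloud departures, and both use the hypothesis to conclude that the first SM2-cloud departure drops the state onto the wall $s+(x_0,0,0,0)$ and therefore forces an \texttt{SM2} assignment of a job that was already present at $t=0$. At that point the paper simply invokes its earlier ``assigning an already-present job to SM2 sooner is strictly better'' exchange argument (cf.\ Lemma~\ref{lem:no_idle_empty}, Lemma~\ref{lem:empty_interval}) and stops; you instead try to prove that step from scratch via full trajectory domination, $\tilde n(t)\le n(t)$ for all $t$.

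That is where the gap is, and you have correctly located it yourself. Under the job-to-job coupling with SEPT priority of SM2 over SM1, the tagged job $J^*$, once advanced into the cloud SM2 queue of $\tilde S$, can block the SM1 job $J_1$ precisely while $S$ (whose SM2 cloud queue has already emptied) is serving $J_1$; on the event that $J_1$'s requirement is short, $J_1$ leaves $S$ before $J^*$ even reaches $S$'s cloud, and one gets $\tilde n(t)=n(t)+1$ on an interval whose length is of order $\sigma_{c2}^{J^*}$, which can far exceed the later gain, whose length is only the amount of time by which the assignment of $J^*$ was advanced. So the unqualified pathwise inequality you aim for is false under the coupling as described; the claim can only be rescued in expectation or by a modified coupling. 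Your proposed repair --- cross-coupling $J^*$'s SM2-cloud service in $\tilde S$ with $J_1$'s SM1 service in $S$ via the scaling permitted by $\mu_{c2}>\mu_{c1}$, in the spirit of \eqref{eqn:job_self} and Lemma~\ref{lem:par_comp_win} --- is the right instinct, but after that exchange the two systems carry mismatched residual-work assignments for $J_1$ and $J^*$, and you have not verified that the bookkeeping closes under arrivals, further completions, and the mirrored DMS decisions; you acknowledge as much. Until that is done the argument is incomplete. It is worth noting that the paper's own one-line reduction elides the same difficulty, since the cited argument is carried out from a state with no SM1 job present, so your proposal is more explicit about the actual crux even though it does not resolve it.
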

\begin{IEEEproof}
See Appendix~\ref{app:wall_induction}.
\end{IEEEproof}

Proposition~\ref{pro:wall_induction} is a key result supporting the construction of the partial switch-type structure.

\begin{proposition} \label{pro:the_wall}
Let $\pi^*$ be a delay-optimal policy, and define states:
\[s = (N_0, 0, 1, 0) \text{ and } s' = (N_0+1, 0, 0, 1)\]
with $N_0$ as defined in \eqref{eqn:N0def}.
Then:
\begin{enumerate}[label=(\Alph*)]
	\item \label{pro:parta}$\pi^*(s + (x_0, 0, 0, x_2)) = \mathtt{SM2} \quad \forall (x_0, x_2) \in \mathbb N_0^2$
	\item \label{pro:partb}$\pi^*(s' + (x_0, 0, 0, x_2)) = \mathtt{SM2} \quad \forall (x_0, x_2) \in \mathbb N_0^2$
\end{enumerate}
\end{proposition}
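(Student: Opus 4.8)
The plan is to prove part~\ref{pro:parta} by a two–dimensional induction over the SM2 cloud–queue index, using Proposition~\ref{pro:wall_induction} as the engine that lifts the property from one value of $n_2$ to the next, and then to deduce part~\ref{pro:partb} from part~\ref{pro:parta} through a single bridging comparison between $i_1=1$ and $i_1=0$ states. Throughout I use that $\pi^*$ is cloud-first (Theorem~\ref{th:cloud_first}), so that Proposition~\ref{pro:wall_induction} is applicable.

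For part~\ref{pro:parta} I would show, by induction on $k\in\N_0$, that $\pi^*(N_0+x_0,0,1,k)=\mathtt{SM2}$ for every $x_0\in\N_0$. The base case $k=0$ is exactly the threshold conclusion of Proposition~\ref{pro:empty_behaviour} together with the definition of $N_0$ in \eqref{eqn:N0def}, which give $\pi^*(n,0,1,0)=\mathtt{SM2}$ for all $n\geq N_0$. For the inductive step, fix $x_0$ and put $\hat s=(N_0+x_0,0,1,k)$; the induction hypothesis states $\pi^*(\hat s+(y_0,0,0,0))=\mathtt{SM2}$ for all $y_0\in\N_0$, so Proposition~\ref{pro:wall_induction} yields $\pi^*(\hat s+(0,0,0,1))=\pi^*(N_0+x_0,0,1,k+1)=\mathtt{SM2}$. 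Since $x_0$ was arbitrary, the entire level-$(k+1)$ ray is SM2, which closes the induction and establishes part~\ref{pro:parta}.

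For part~\ref{pro:partb} I would isolate the following bridging claim: for $m\geq N_0$ and $j\geq 1$,
\[
\pi^*(m,0,1,j)=\mathtt{SM2}\;\Longrightarrow\;\pi^*(m+1,0,0,j)=\mathtt{SM2}.
\]
The two states $(m,0,1,j)$ and $(m+1,0,0,j)$ have equal total occupancy $n(\cdot)$, so this implication trades the cloud SM1 job for one additional base-queue job. Granting it, part~\ref{pro:partb} follows immediately: part~\ref{pro:parta} gives $\pi^*(N_0+x_0,0,1,j)=\mathtt{SM2}$ for all $x_0\in\N_0$ and all $j\geq 1$, and the claim with $m=N_0+x_0$ then gives $\pi^*(N_0+1+x_0,0,0,j)=\mathtt{SM2}$; writing $j=1+x_2$ this is precisely $\pi^*(s'+(x_0,0,0,x_2))=\mathtt{SM2}$ for all $(x_0,x_2)\in\N_0^2$.

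The main obstacle is the bridging claim, which I expect to require the path-wise coupling of Section~\ref{sec:coupling_framework} rather than the inductive Proposition~\ref{pro:wall_induction}. I would argue by contradiction: supposing idling optimal at $(m+1,0,0,j)$, run a system $S$ that idles there and thereafter follows $\pi^*$ against a coupled system $\tilde S$ that instead assigns SM2, moving to $(m,1,0,j)$. Both systems share the arrival stream and the service-time triples, and I would use the scaling \eqref{eqn:job_self} together with the dominance $\mu_{c1}>\mu_{l2}$ to compare the front-loaded local work in $\tilde S$ with the extra base job carried by $S$, invoking the hypothesis $\pi^*(m,0,1,j)=\mathtt{SM2}$ to pin down the continuation decisions so that $\tilde S$ can shadow $S$ after the first local completion. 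The delicate part is engineering the coupling map so that $\tilde n(t)\leq n(t)$ for all $t$ — in particular reconciling the epoch of the cloud SM1 departure in $S$ with the local SM2 completion in $\tilde S$ — with strict inequality on an event of positive probability; this would contradict the optimality of idling and force $\pi^*(m+1,0,0,j)=\mathtt{SM2}$. The qualitative comparison of Lemma~\ref{lem:par_comp_win}, that an SM2 cloud job leaves the system better placed than a queued SM1 job, indicates the correct direction of this domination.
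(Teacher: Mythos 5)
Your Part~\ref{pro:parta} is correct and is essentially the paper's argument: the base ray $\pi^*(N_0+x_0,0,1,0)=\mathtt{SM2}$ comes from Proposition~\ref{pro:empty_behaviour} and \eqref{eqn:N0def}, and iterating Proposition~\ref{pro:wall_induction} over $n_2$ (your explicit induction on $k$) lifts it to all $(x_0,x_2)\in\N_0^2$. The paper states this step more tersely but the content is identical.

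Part~\ref{pro:partb} is where there is a genuine gap. Your reduction rests entirely on the bridging claim $\pi^*(m,0,1,j)=\mathtt{SM2}\Rightarrow\pi^*(m+1,0,0,j)=\mathtt{SM2}$, and you do not prove it — you only sketch a coupling and yourself flag that "engineering the coupling map so that $\tilde n(t)\leq n(t)$" is unresolved. That claim is not a routine consequence of anything established earlier: it compares an $i_1=1$ state (one job irrevocably committed to SM1 and blocked behind the SM2 cloud queue) with an $i_1=0$ state in which that job retains full flexibility, and the optimality of \texttt{SM2} does not transfer between them by a value-function comparison; moreover your proposed coupling places $S$ at $(m+1,0,0,j)$ and $\tilde S$ at $(m,1,0,j)$, so neither trajectory ever visits the state $(m,0,1,j)$ where your hypothesis applies, and it is unclear how that hypothesis would "pin down the continuation." The paper avoids any such bridge. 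It establishes the ray $\pi^*(N_0+1+x_0,0,0,1)=\mathtt{SM2}$ directly by the same waiting argument as in Proposition~\ref{pro:wall_induction}: starting from $(N_0+1+x_0,0,0,1)$ under an idling policy, the state evolves only through arrivals and SM2 cloud departures, and at the first SM2 cloud departure the system lands in $(n,0,0,0)$ with $n\geq N_0+1$, where the cloud-first property and the definition of $N_0$ force an immediate \texttt{SM1}-then-\texttt{SM2} double assignment; hence \texttt{SM2} is assigned after a positive random delay to a job already present at $t=0$, which is strictly suboptimal versus assigning it immediately. With that ray in hand, Proposition~\ref{pro:wall_induction} again lifts to all $x_2\geq 0$. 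You should either adopt this route or supply a complete proof of your bridging claim; as written, Part~\ref{pro:partb} is not established.
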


\begin{IEEEproof}
See Appendix~\ref{app:the_wall}.
\end{IEEEproof}

\addvspace{1.5ex} We now define the following thresholds for each value of the SM2 cloud queue length, $k\in\N$:
\begin{align}
	N_k &\coloneqq \min\left\{n \in \mathbb N \mid \pi^*(m,0,1,k) = \mathtt{SM2} \;\; \forall m \geq n \right\}, \label{eqn:Nkdef}\\
	N_k' &\coloneqq \min\left\{n \in \mathbb N \mid \pi^*(m,0,0,k) = \mathtt{SM2} \;\; \forall m \geq n \right\}. \label{eqn:Nk'def}
\end{align}
From Proposition~\ref{pro:the_wall}, these thresholds exist and are finite, as they are bounded above by $N_0$.

\begin{proposition}\label{pro:the_moving_wall}
Let $\pi^*$ be delay-optimal. Define:
\[s_k = (N_k, 0, 1, k), \quad s_k' = (N_k', 0, 0, k).\]
Then, for all $k \in \mathbb N$:
\begin{enumerate}[label=(\Alph*)]
    \item $\pi^*(s_k + (x_0, 0, 0, x_2)) = \mathtt{SM2} \quad \forall (x_0, x_2) \in \mathbb N_0^2$
    \item $\pi^*(s_k' + (x_0, 0, 0, x_2)) = \mathtt{SM2} \quad \forall (x_0, x_2) \in \mathbb N_0^2$
\end{enumerate}
\end{proposition}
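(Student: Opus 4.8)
The plan is to establish both parts by induction on the SM2 cloud-queue offset $x_2$, keeping the base-queue offset $x_0$ as a free parameter throughout. At each fixed level of $n_2$, the threshold definitions \eqref{eqn:Nkdef} and \eqref{eqn:Nk'def} already furnish a complete ``wall'' of SM2-optimal states in the increasing base-queue direction; the device that raises such a wall by one unit in the $n_2$ direction is Proposition~\ref{pro:wall_induction}. Since the thresholds are finite (bounded above by $N_0$ via Proposition~\ref{pro:the_wall}), the base cases are well posed, and the induction proceeds by re-applying Proposition~\ref{pro:wall_induction} uniformly across $x_0$.

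For part (A), the base case $x_2 = 0$ is immediate: by \eqref{eqn:Nkdef}, $\pi^*(N_k + x_0, 0, 1, k) = \mathtt{SM2}$ for every $x_0 \in \mathbb{N}_0$. For the inductive step I would assume the full wall at level $k + x_2$, namely $\pi^*(N_k + x_0, 0, 1, k + x_2) = \mathtt{SM2}$ for all $x_0 \geq 0$. Fixing an arbitrary $x_0 \geq 0$ and setting $s = (N_k + x_0, 0, 1, k + x_2)$, the inductive hypothesis supplies $\pi^*(s + (x_0', 0, 0, 0)) = \mathtt{SM2}$ for all $x_0' \in \mathbb{N}_0$ (as $N_k + x_0 + x_0' \geq N_k$), which is precisely the premise of Proposition~\ref{pro:wall_induction}. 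Its conclusion gives $\pi^*(s + (0,0,0,1)) = \pi^*(N_k + x_0, 0, 1, k + x_2 + 1) = \mathtt{SM2}$, and letting $x_0$ range over $\mathbb{N}_0$ reconstitutes the full wall at level $k + x_2 + 1$, closing the induction.

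Part (B) follows by the identical scheme, with the base case $x_2 = 0$ supplied by \eqref{eqn:Nk'def} and the inductive step invoking Proposition~\ref{pro:wall_induction} at the states $s = (N_k' + x_0, 0, 0, k + x_2)$; these lie in $\Sp$ with $n_2 = k \geq 1$, so the cloud-first structure rules out SM1 and the effective choice is again SM2 versus idle, matching the setting of the proposition. The conceptual weight of the whole argument sits inside Proposition~\ref{pro:wall_induction}, which is proved separately; given it, the one point demanding care is that each invocation advances only a single state in the $n_2$ direction, so the proposition must be re-applied at every base-queue offset $x_0$ to carry the entire wall --- not just its left edge --- upward and keep the inductive hypothesis self-sustaining. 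I expect this bookkeeping, rather than any substantive difficulty, to be the crux.
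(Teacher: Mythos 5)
Your proof is correct and follows the same route as the paper, which simply states that the result is a direct consequence of Proposition~\ref{pro:wall_induction}; your write-up makes explicit the induction on $x_2$ (with base cases from the threshold definitions and the key bookkeeping point that Proposition~\ref{pro:wall_induction} must be re-applied at every base-queue offset $x_0$) that the paper leaves implicit.
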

\begin{IEEEproof}
This is a direct consequence of Proposition~\ref{pro:wall_induction}.
\end{IEEEproof}

\begin{lemma}\label{lem:dec_switches}
	The sequences $\{N_k\}_{k \in \N_0}$ and $\{N_k'\}_{k \in \N}$ defined above are non-increasing.
\end{lemma}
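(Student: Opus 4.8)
The plan is to prove the two monotonicity statements $N_{k+1} \le N_k$ (for $k \in \N_0$) and $N_{k+1}' \le N_k'$ (for $k \in \N$) separately, but by the same mechanism. Each sequence records, for a fixed number $k$ of SM2 jobs already resident at the cloud, the smallest base-queue length beyond which the optimal policy switches permanently to $\mathtt{SM2}$. The intuition I want to exploit is that adding a job to the SM2 cloud queue should only make assigning $\mathtt{SM2}$ more attractive, so the switching threshold should not increase with $k$. The precise tool that encodes this ``switching propagates in the $n_2$-direction'' behavior is Proposition~\ref{pro:wall_induction}, which promotes an entire base-queue ray of $\mathtt{SM2}$ decisions at cloud level $k$ into the decision $\mathtt{SM2}$ at the corresponding state at cloud level $k+1$.

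First I would treat $\{N_k\}$. Fix $k \in \N_0$ and any $m \ge N_k$, and consider the state $s = (m, 0, 1, k)$. The defining property of $N_k$ is that $\pi^*(m', 0, 1, k) = \mathtt{SM2}$ for every $m' \ge N_k$ (this is~\eqref{eqn:Nkdef} for $k \ge 1$, and Proposition~\ref{pro:empty_behaviour} for $k = 0$). Since $m + x_0 \ge N_k$ for all $x_0 \in \N_0$, this says exactly that $\pi^*(s + (x_0, 0, 0, 0)) = \mathtt{SM2}$ for all $x_0 \in \N_0$, which is the hypothesis of Proposition~\ref{pro:wall_induction}. Applying that proposition (legitimate, since $\pi^*$ is cloud-first by Theorem~\ref{th:cloud_first}) yields $\pi^*(s + (0,0,0,1)) = \pi^*(m, 0, 1, k+1) = \mathtt{SM2}$. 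As $m \ge N_k$ was arbitrary, $\pi^*(m, 0, 1, k+1) = \mathtt{SM2}$ for all $m \ge N_k$, so $N_k$ belongs to the set whose minimum defines $N_{k+1}$, giving $N_{k+1} \le N_k$.

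The sequence $\{N_k'\}$ is handled identically, replacing the states $(m, 0, 1, k)$ by $(m, 0, 0, k)$, invoking the definition~\eqref{eqn:Nk'def}, and applying the same single step of Proposition~\ref{pro:wall_induction} to conclude $N_{k+1}' \le N_k'$. Since the threshold definitions are already known to be well posed — the minimizing sets are nonempty and the thresholds finite, being bounded above by $N_0$ via Proposition~\ref{pro:the_wall} — the only point needing any care is checking that the hypothesis of Proposition~\ref{pro:wall_induction}, namely $\mathtt{SM2}$-optimality along the \emph{entire} base-queue ray, actually holds at the state in question; this is immediate from the ``for all $m' \ge n$'' form of the threshold definitions. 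I therefore expect no substantive obstacle here: the analytical difficulty has already been absorbed into Proposition~\ref{pro:wall_induction} (equivalently into the moving-wall Proposition~\ref{pro:the_moving_wall}), and Lemma~\ref{lem:dec_switches} reduces to bookkeeping that combines the threshold definitions with that single $n_2$-propagation result.
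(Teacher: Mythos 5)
Your proof is correct and follows essentially the same route as the paper: both arguments propagate the $\mathtt{SM2}$-optimal base-queue ray at cloud level $k$ one step in the $n_2$-direction via Proposition~\ref{pro:wall_induction} to conclude $N_{k+1}\le N_k$ (and likewise for $N_k'$). If anything, your write-up is the more careful one --- the paper's two-line proof cites Proposition~\ref{pro:the_wall} and writes the resulting state as $(m,0,0,k+1)$, which only makes sense as shorthand for the $n_2$-propagation step you spell out explicitly.
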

\begin{proof}
See Appendix~\ref{app:dec_switches}.
\end{proof}

\begin{theorem} \label{th:switch_type} (Switch-Type Structure)
Let $\pi^*$ be delay-optimal and cloud-first. Under the \textit{urgency-monotonicity} conjecture, $\pi^*$ is \textit{switch-type}.
\end{theorem}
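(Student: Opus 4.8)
The plan is to reduce the switch-type property (Definition~\ref{def:switch-type}) to two one-step monotonicity claims and then compose them. Because $\pi^*$ is cloud-first and never assigns \texttt{SM1} outside $\Sp_1$, the action \texttt{SM2} is selected only at states of the form $(m,0,1,k)$ with $m\geq 1,\ k\geq 0$, or $(m,0,0,k)$ with $m\geq 1,\ k\geq 1$; at every other state the premise $\pi^*(s)=\mathtt{SM2}$ is false and the implication holds vacuously. It therefore suffices to establish, for such $s$, (i) a base-queue step $\pi^*(s)=\mathtt{SM2}\Rightarrow\pi^*(s+(1,0,0,0))=\mathtt{SM2}$, and (ii) an SM2-cloud-queue step $\pi^*(s)=\mathtt{SM2}\Rightarrow\pi^*(s+(0,0,0,1))=\mathtt{SM2}$, after which iterating and composing the two steps yields $\pi^*(s+(x_0,0,0,x_2))=\mathtt{SM2}$ for all $(x_0,x_2)\in\N_0^2$.

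First I would prove the base-queue step directly from the urgency-monotonicity conjecture. The conjecture states $\pi^*(As)=\mathtt{idle}\Rightarrow\pi^*(s)=\mathtt{idle}$; its contrapositive gives $\pi^*(s)\neq\mathtt{idle}\Rightarrow\pi^*(As)\neq\mathtt{idle}$. Now fix $s$ with $\pi^*(s)=\mathtt{SM2}$. Every such $s$ lies outside $\Sp_1$ (it has either $i_1=1$ or $n_2\geq 1$), and since $A$ increments only the base queue, $As$ also lies outside $\Sp_1$; there cloud-first forbids \texttt{SM1}, so the only admissible non-idling action is \texttt{SM2}. Hence $\pi^*(As)\neq\mathtt{idle}$ forces $\pi^*(As)=\mathtt{SM2}$, which is exactly claim (i). Iterating gives $\pi^*(s+(x_0,0,0,0))=\mathtt{SM2}$ for every $x_0\in\N_0$.

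Next I would obtain the SM2-cloud-queue step by combining the base-queue monotonicity just proved with the threshold machinery of Section~\ref{sec:dms_switch}. Because $\pi^*(m,0,i_1,k)=\mathtt{SM2}$ now implies \texttt{SM2} at every larger base-queue value on the same line, the within-line \texttt{SM2} region is an exact threshold, so $\pi^*(m,0,i_1,k)=\mathtt{SM2}$ forces $m\geq N_k$ when $i_1=1$ and $m\geq N_k'$ when $i_1=0$, by the definitions in \eqref{eqn:Nkdef}--\eqref{eqn:Nk'def}. By Lemma~\ref{lem:dec_switches} the sequences $\{N_k\}$ and $\{N_k'\}$ are non-increasing, so $m\geq N_{k+1}$ (resp.\ $m\geq N_{k+1}'$); invoking the corresponding threshold at level $k+1$ then yields $\pi^*(m,0,i_1,k+1)=\mathtt{SM2}$, i.e.\ claim (ii). Composing (i) and (ii)---climbing up in $k$ first at fixed base queue, re-deriving the threshold bound at each stage, and then climbing right in $m$---covers the full positive orthant and establishes that $\pi^*$ is switch-type.

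The main obstacle is conceptual rather than computational: it is aligning the conjecture, which is phrased as monotonicity of the \emph{idle} action in the base-queue direction, with the switch-type property, which is phrased in terms of the \texttt{SM2} action. The crux is the observation that at every state where \texttt{SM2} can possibly be optimal we are already outside $\Sp_1$, so cloud-first collapses ``non-idle'' into ``\texttt{SM2},'' letting the contrapositive of the conjecture deliver precisely the base-queue half of the switch-type structure. Once that identification is made, the SM2-cloud-queue direction is inherited from the already-established partial structure (Proposition~\ref{pro:the_moving_wall}) together with the monotonicity of the thresholds (Lemma~\ref{lem:dec_switches}), and no new coupling argument is required.
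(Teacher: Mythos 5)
Your proposal is correct and follows essentially the same route as the paper: both arguments combine the urgency-monotonicity conjecture (you via its contrapositive to push \texttt{SM2} up in $n_0$, the paper by pushing \texttt{idle} down from $N_k-1$) with the finiteness of the thresholds $N_k,N_k'$ and their monotonicity from Lemma~\ref{lem:dec_switches} to cover the $n_2$ direction. The two uses of the conjecture are logically identical, so this is the same proof in substance.
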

\begin{IEEEproof}
From Proposition~\ref{pro:the_moving_wall}, we have for states defined as $s_k = (N_k, 0, 1, k)$:
\[\pi^*(s_k + (x_0, 0, 0, x_2)) = \mathtt{SM2} \quad \forall (x_0, x_2) \in \mathbb N_0^2\]
Note that by definition, $\pi^*(N_k-1,0,1,k)$ is always \texttt{idle}. Applying the \textit{urgency-monotonicity} conjecture, we obtain for each $k\in\N$:
\begin{equation}
\pi^*(n_0,0,1,k) = \begin{cases} \mathtt{SM2} & \text{if } n_0 \geq N_k \\ \mathtt{idle} & \text{if } n_0 < N_k\end{cases} \label{eqn:switching}
\end{equation}
Similarly, for $N_k'$ defined in \eqref{eqn:Nk'def} we get:
\begin{equation}
\pi^*(n_0,0,0,k) = \begin{cases} \mathtt{SM2} & \text{if } n_0 \geq N_k' \\ \mathtt{idle} & \text{if } n_0 < N_k'\end{cases} \label{eqn:switching'}
\end{equation}

\noindent Since the policy is \textit{cloud-first}, \eqref{eqn:switching} and \eqref{eqn:switching'} fully characterize $\pi$ in states where \texttt{SM2} is admissible.

\noindent Additionally, since $\{N_k\}_{k \in \N_0}$ and $\{N_k'\}_{k \in \N}$ are non-increasing, we have:
\[\pi^*(s) = \mathtt{SM2} \Rightarrow \pi^*(s + (x_0,0,0,x_2)) = \mathtt{SM2}\]
$\forall (x_0,x_2)\in\N_0^*$, asserting a \textit{switch-type} structure.
\end{IEEEproof}

\section{Simulations}\label{sec:simulations}
In this section, we empirically validate the structural properties of the delay-optimal policy established in our analysis. We also benchmark its performance against two baseline scheduling policies.
\subsection{Simulation Setup}
Uniformization is a classical technique in the theory of continuous-time Markov decision processes (CTMDPs), which transforms a CTMDP into an equivalent discrete-time MDP \cite{lippman}. We adopt this approach to convert our discounted continuous-time model into a discounted discrete-time formulation.

To compute the value function $V(s)$, we apply the standard value iteration algorithm. At each iteration, the Bellman equation is used to update the cost-to-go for every state. Upon convergence, 
we obtain the optimal policy by selecting the action that minimizes the updated value function at each state.

Our simulation framework uses the following settings: we impose a buffer limit of $n_{\max} = 300$ on both the base queue and the SM2 cloud queue. The discount factor for the discrete-time system is set to $\alpha = 1 - 10^{-5}$. 

We define the system load as $\rho = \lambda/(K+1)\mu_0$,
and normalize the capacity of the local server by fixing $\mu_0 = 1$. As a result, the tunable parameters in our simulation are:
$\rho$, which determines system utilization; $K$, which captures the relative capacity of the cloud; and $f$, which specifies the fraction of local processing in service mode 2.


\subsection{Cloud-First and Switch-Type structure}
We now empirically verify that the optimal policy exhibits both the \textit{cloud-first} and \textit{switch-type} structural properties by examining four system configurations, labeled (a) through (d).

Configuration (a) serves as the baseline, with parameters $\rho = 0.4$, $f = 0.4$, and $K = 8$. In configuration (b), we increase system utilization to $\rho = 0.8$, while keeping $f = 0.4$ and $K = 8$ fixed. Configuration (c) increases the degree of local processing in SM2 by setting $f = 0.8$ (with $\rho = 0.4$ and $K = 8$), and configuration (d) boosts cloud capacity with $K = 15$ (keeping $\rho = 0.4$ and $f = 0.4$).

We visualize the policy using 2D grids that represent slices of the full system state space. Recall that the system state is given by the quadruple $(n_0, i_2, i_1, n_2)$. In each grid, we fix two of these variables and use the remaining two as axes to represent the queue lengths. Each cell in the grid corresponds to a unique state and is annotated with the corresponding action: \texttt{0} for \texttt{idle}, \texttt{1} for \texttt{SM1}, and \texttt{2} for \texttt{SM2}. 

In Figure~\ref{fig:cloud_first}, we show the policy restricted to states in $\Sp_1 = \{(n_0,i_2,i_1,n_2)\in\Sp\mid i_1 = n_2 = 0\}$. The figure confirms that across all system configurations, the optimal policy always assigns SM1 whenever admissible, verifying the \textit{cloud-first} property established analytically.

\begin{figure}[!b]
\centering
\includegraphics[width=0.45\textwidth]{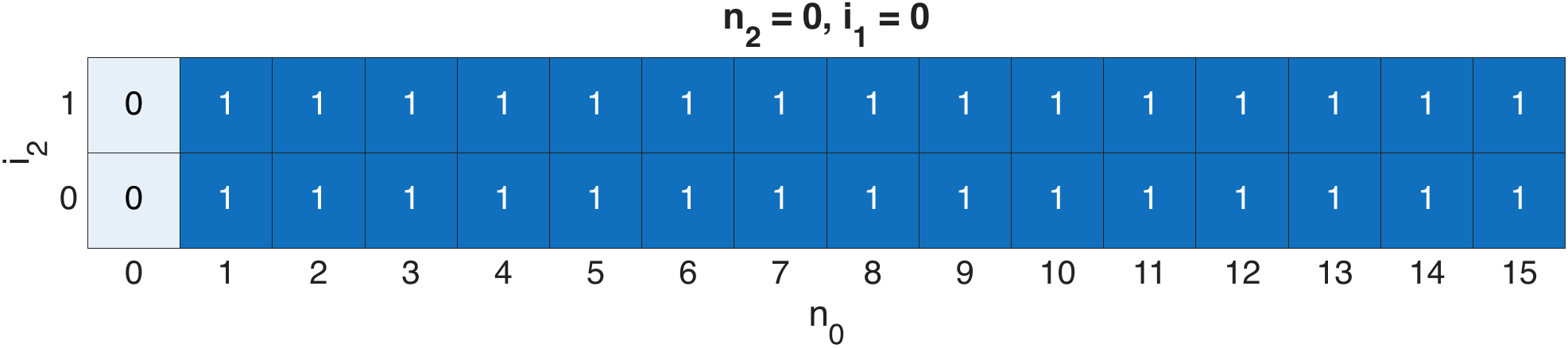}
\caption{\textit{Cloud-First} structure: Optimal policy always assigns jobs to SM1 when the cloud is idle ($n_2 = i_1 = 0$), verified consistently across all system configurations (a)--(d).}
\label{fig:cloud_first}
\end{figure}

In Figure~\ref{fig:system_policies}, we display the policy over states in $\Sp \setminus \Sp_1$. States where no jobs are available in the base queue (i.e. $n_0 = 0$) are excluded, as \texttt{idle} is trivially the only admissible action. The plots, shown for the four system configurations (a)-(d), highlight the \textit{switch-type} structure of the SM2 assignments: the optimal policy exhibits a threshold behavior along both the base and cloud queue dimensions. Together, the figures confirm that the optimal policy maintains the \textit{cloud-first} and \textit{switch-type} structures across a range of system conditions.

We observe that the threshold for assigning SM2 increases with higher values of $f$ and $K$. This aligns with the intuition developed in Section~\ref{sec:main_result}: both larger $f$ and higher $K$ raise the expected total processing time under SM2 (i.e. $1/\mu_{l2} + 1/\mu_{c2}$), making SM1 more favorable and increasing the switching threshold to SM2.
Conversely, when system utilization $\rho$ increases, the threshold for assigning SM2 decreases. This is expected, as higher utilization necessitates offloading more jobs via SM2 for load balancing and throughput stability.


\begin{figure}[!t]
    \centering
    \begin{minipage}{0.47\linewidth}
        \centering
        \begin{minipage}{0.47\linewidth}
            \includegraphics[width=\linewidth]{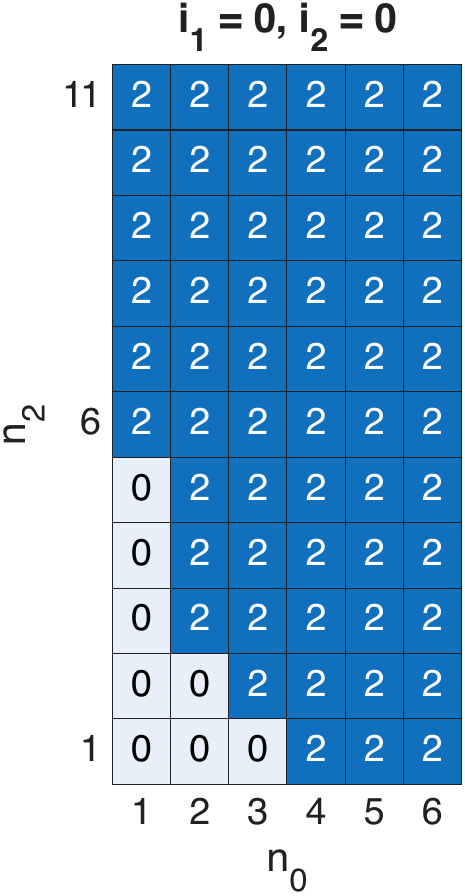}
        \end{minipage}
        \begin{minipage}{0.47\linewidth}
            \includegraphics[width=\linewidth]{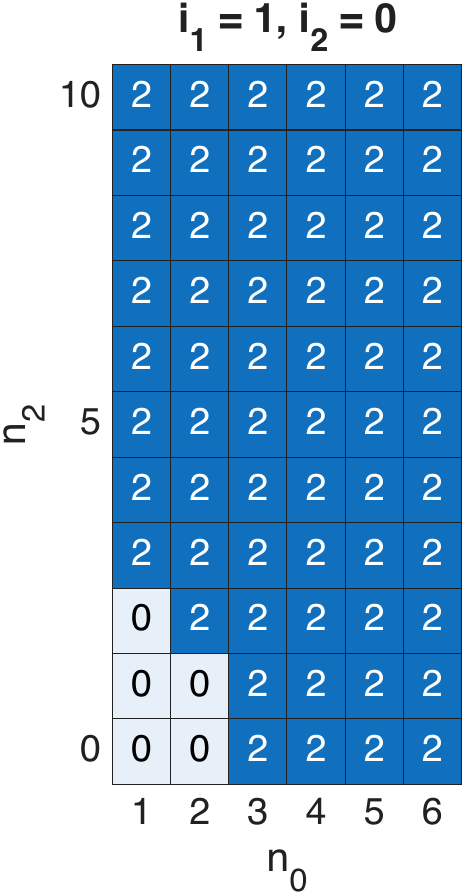}
        \end{minipage}
        \smallskip
        \small (a) Baseline: $\rho = 0.4,$ \\$f = 0.4, K = 8$
        	\vspace{2em}
    \end{minipage}
    \hfill
    \begin{minipage}{0.47\linewidth}
        \centering
        \begin{minipage}{0.47\linewidth}
            \includegraphics[width=\linewidth]{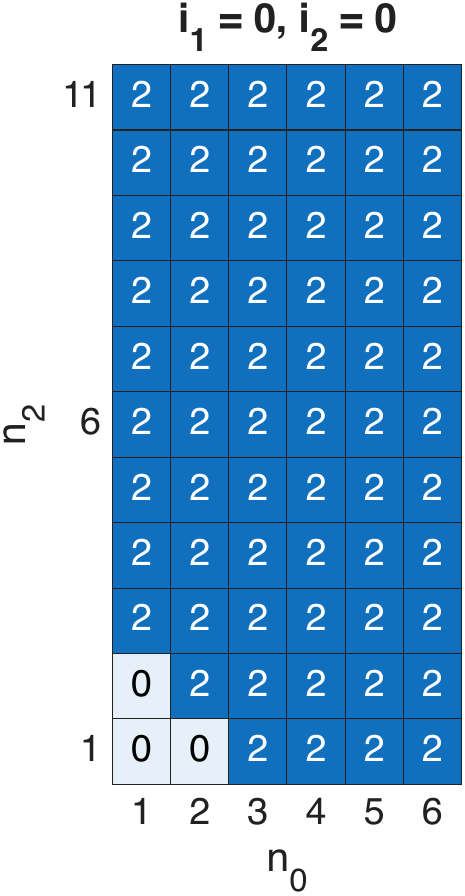}
        \end{minipage}
        \begin{minipage}{0.47\linewidth}
            \includegraphics[width=\linewidth]{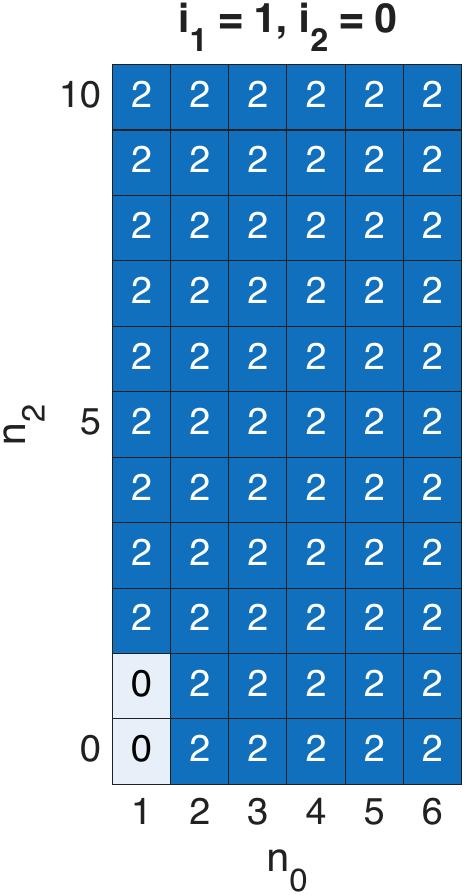}
        \end{minipage}
        \smallskip
        \small (b) High Utilization: $\rho = 0.8,$\\ $f = 0.4, K = 8$
        	\vspace{2em}
    \end{minipage}
	\begin{minipage}{0.47\linewidth}
        \centering
        \begin{minipage}{0.47\linewidth}
            \includegraphics[width=\linewidth]{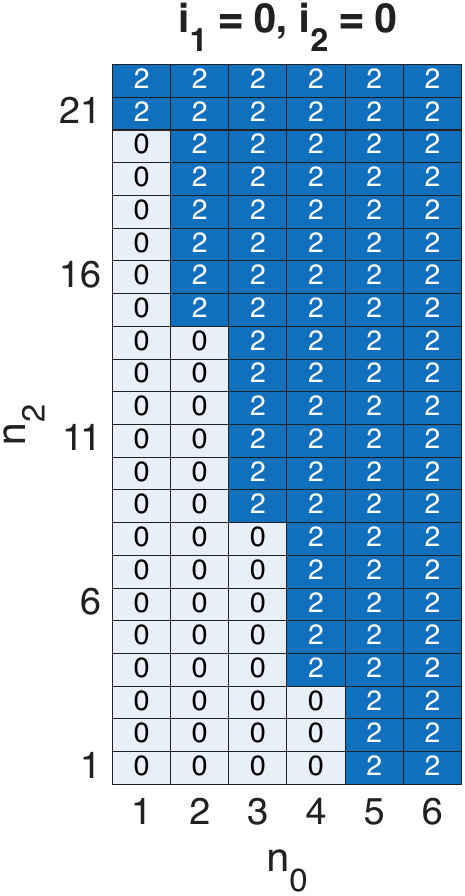}
        \end{minipage}
        \begin{minipage}{0.47\linewidth}
            \includegraphics[width=\linewidth]{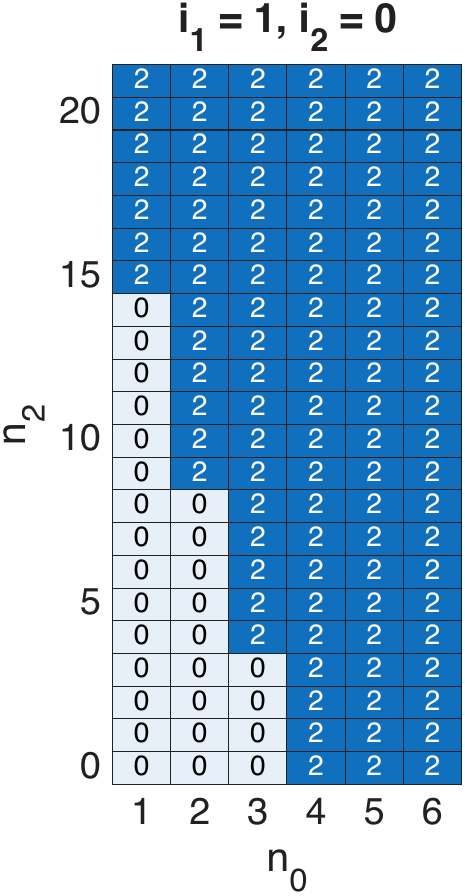}
        \end{minipage}
        \smallskip
        \small (c) More local SM2 processing: $\rho = 0.4, f = 0.8, K = 8$
       	\vspace{1em}
    \end{minipage}
    \hfill
    \begin{minipage}{0.47\linewidth}
        \centering
        \begin{minipage}{0.47\linewidth}
            \includegraphics[width=\linewidth]{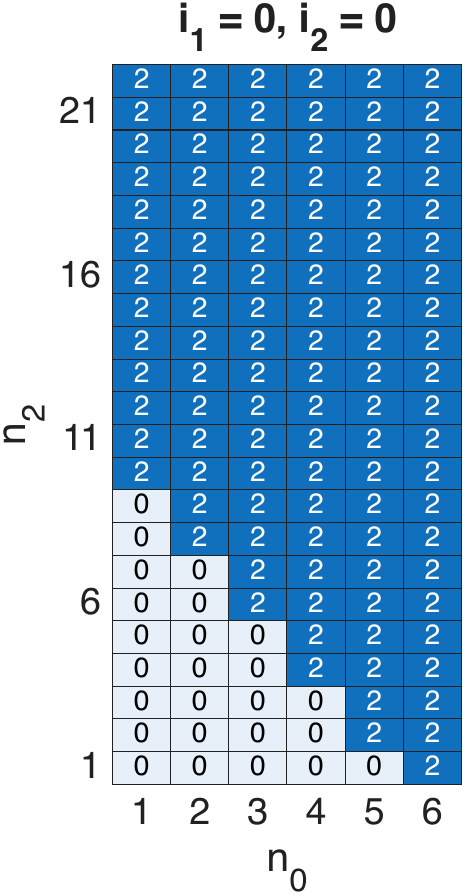}
        \end{minipage}
        \begin{minipage}{0.47\linewidth}
            \includegraphics[width=\linewidth]{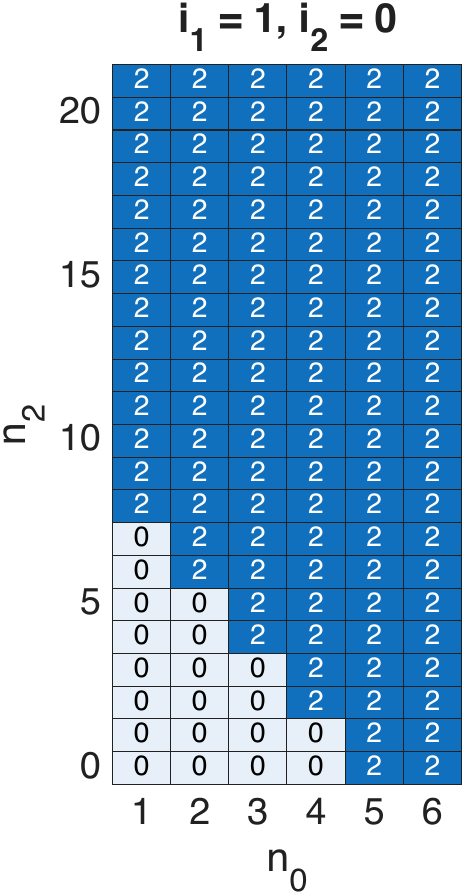}
        \end{minipage}
        \smallskip
        \small (d) Higher cloud capacity: \\ $\rho = 0.4, f = 0.4, K = 15$
        	\vspace{1em}
    \end{minipage}
    \caption{\textit{Switch-Type} structure: The optimal policy assigns jobs to SM2 in directions where queue lengths increase, beyond configuration-specific switching points, and idles in the opposite direction. Switching-points vary across configurations (a)--(d)}
    \label{fig:system_policies}
\end{figure}

\subsection{Delay Performance}
We compare the delay performance of the optimal \textit{switch-type} policy against two baseline policies:

(a) \textit{Offload-Only}, and  
(b) \textit{Non-Idling}.

The \textit{Offload-Only} policy exclusively assigns jobs to SM1, offloading all computation to the cloud while idling the local server. In contrast, the \textit{Non-Idling} policy adheres to the \textit{cloud-first} structure but aggressively utilizes SM2, assigning it whenever it is admissible.

Both baselines are \textit{cloud-first} and \textit{switch-type} but represent opposite extremes in SM2 usage. The \textit{Offload-Only} policy corresponds to the limiting case where switching thresholds are effectively infinite - SM2 is never used. On the other hand, the \textit{Non-Idling} policy represents the case where switching thresholds are zero - SM2 is always used when possible.

It is important to note that the \textit{Offload-Only} policy does not utilize local compute resources and is therefore not stabilizing at high arrival rates (specifically, for $\rho \geq K/(K+1)$). The \textit{Non-Idling} policy, however, remains stabilizing across all arrival rates.

We simulate a system with $f = 0.6$ and $K = 10$ and measure the delay performance across a range of system utilization levels, $\rho$. The results are shown in Figure~\ref{fig:performance}.

As expected, the optimal policy consistently outperforms both baselines across all values of $\rho$. At low arrival rates, the \textit{Offload-Only} policy performs close to optimal, while at high arrival rates, the \textit{Non-Idling} policy performs comparably. This trend is consistent with our earlier observation: increasing system utilization lowers the optimal switching thresholds for SM2, making its use more favorable at higher utilization levels.

\begin{figure}[!t]
\centering
\includegraphics[width=0.49\textwidth]{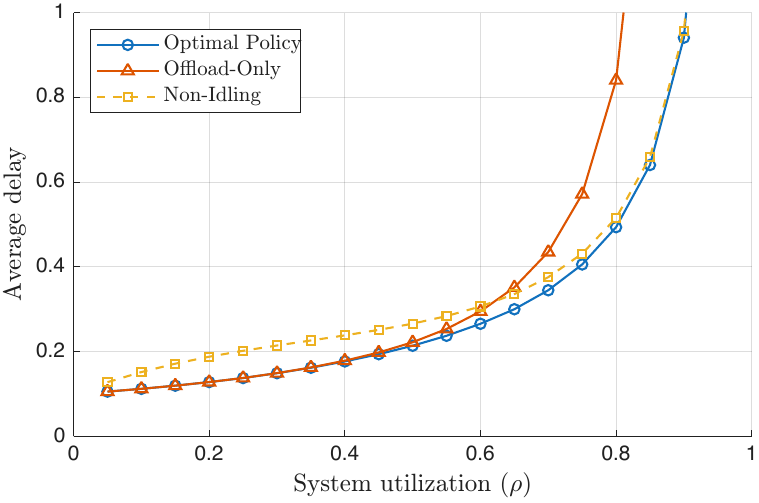}
\caption{Average delay under the optimal policy compared to two baseline policies --- \textit{Offload-Only} and \textit{Non-Idling}, as a function of system utilization $\rho$, for a fixed system with $f = 0.6$ and $K = 10$.}
\label{fig:performance}
\end{figure}

\section{Conclusion}\label{sec:conclusion}
We studied the problem of service mode assignment in a two-mode offloading system with the goal of minimizing average delay. In our model, service mode 1 fully offloads jobs to the cloud, while service mode 2 performs a combination of local and cloud processing, with the cloud assumed to have greater capacity than local resources.

We establish that when the cloud is idle, the optimal action is to immediately offload an available job. When the cloud is busy, we show under mild conditions that the optimal policy exhibits a \textit{switch-type} structure: assign a job to mixed service if the queue lengths exceed a threshold, and idle otherwise.

We verified this structural property through simulation and compared the delay performance of the optimal \textit{switch-type} policy against two baseline \textit{cloud-first} policies. 

Future directions include establishing a formal proof of the urgency-monotonicity conjecture and extending the analysis to more general multi-stage, multi-mode offloading systems.
\bibliographystyle{IEEEtran}
\bibliography{references.bib}

\begin{thebibliography}{10}
\providecommand{\url}[1]{#1}
\csname url@samestyle\endcsname
\providecommand{\newblock}{\relax}
\providecommand{\bibinfo}[2]{#2}
\providecommand{\BIBentrySTDinterwordspacing}{\spaceskip=0pt\relax}
\providecommand{\BIBentryALTinterwordstretchfactor}{4}
\providecommand{\BIBentryALTinterwordspacing}{\spaceskip=\fontdimen2\font plus
\BIBentryALTinterwordstretchfactor\fontdimen3\font minus \fontdimen4\font\relax}
\providecommand{\BIBforeignlanguage}[2]{{%
\expandafter\ifx\csname l@#1\endcsname\relax
\typeout{** WARNING: IEEEtran.bst: No hyphenation pattern has been}%
\typeout{** loaded for the language `#1'. Using the pattern for}%
\typeout{** the default language instead.}%
\else
\language=\csname l@#1\endcsname
\fi
#2}}
\providecommand{\BIBdecl}{\relax}
\BIBdecl

\bibitem{LucaLucaLuca}
L.~Ballotta, L.~Schenato, and L.~Carlone, ``Computation-communication trade-offs and sensor selection in real-time estimation for processing networks,'' \emph{IEEE Transactions on Network Science and Engineering}, vol.~7, no.~4, pp. 2952--2965, 2020.

\bibitem{LucaVish}
V.~Tripathi, L.~Ballotta, L.~Carlone, and E.~Modiano, ``Computation and communication co-design for real-time monitoring and control in multi-agent systems,'' in \emph{2021 19th International Symposium on Modeling and Optimization in Mobile, Ad hoc, and Wireless Networks (WiOpt)}, 2021, pp. 1--8.

\bibitem{Hao}
H.~Feng, J.~Llorca, A.~M. Tulino, and A.~F. Molisch, ``Optimal dynamic cloud network control,'' \emph{IEEE/ACM Transactions on Networking}, vol.~26, no.~5, pp. 2118--2131, 2018.

\bibitem{Jianan}
J.~Zhang, A.~Sinha, J.~Llorca, A.~M. Tulino, and E.~Modiano, ``Optimal control of distributed computing networks with mixed-cast traffic flows,'' \emph{IEEE/ACM Transactions on Networking}, vol.~29, no.~4, pp. 1760--1773, 2021.

\bibitem{Larsen}
\BIBentryALTinterwordspacing
R.~Larsen, ``Control of multiple exponential servers with application to computer systems,'' Doctoral Dissertation, University of Maryland, Jun. 1981, unpublished. [Online]. Available: \url{http://d-scholarship.pitt.edu/id/eprint/39658}
\BIBentrySTDinterwordspacing

\bibitem{LinK}
W.~Lin and P.~Kumar, ``Optimal control of a queueing system with two heterogeneous servers,'' \emph{IEEE Transactions on Automatic Control}, vol.~29, no.~8, pp. 696--703, 1984.

\bibitem{Walrand}
\BIBentryALTinterwordspacing
J.~Walrand, ``A note on “optimal control of a queuing system with two heterogeneous servers”,'' \emph{Systems, Control Letters}, vol.~4, no.~3, pp. 131--134, 1984. [Online]. Available: \url{https://www.sciencedirect.com/science/article/pii/S0167691184800146}
\BIBentrySTDinterwordspacing

\bibitem{Rykov}
\BIBentryALTinterwordspacing
V.~V. Rykov, ``Monotone control of queueing systems with heterogeneous servers,'' \emph{Queueing Systems}, vol.~37, no.~4, pp. 391--403, 2001. [Online]. Available: \url{https://doi.org/10.1023/A:1010893501581}
\BIBentrySTDinterwordspacing

\bibitem{Luh}
\BIBentryALTinterwordspacing
H.~P. Luh and I.~Viniotis, ``Threshold control policies for heterogeneous server systems,'' \emph{Mathematical Methods of Operations Research}, vol.~55, no.~1, pp. 121--142, 2002. [Online]. Available: \url{https://doi.org/10.1007/s001860100168}
\BIBentrySTDinterwordspacing

\bibitem{Francis}
\BIBentryALTinterwordspacing
F.~de~V{\'e}ricourt and Y.-P. Zhou, ``On the incomplete results for the heterogeneous server problem,'' \emph{Queueing Systems}, vol.~52, no.~3, pp. 189--191, 2006. [Online]. Available: \url{https://doi.org/10.1007/s11134-006-5067-8}
\BIBentrySTDinterwordspacing

\bibitem{Koole}
\BIBentryALTinterwordspacing
G.~Koole, ``The slow-server problem with multiple slow servers,'' \emph{Queueing Systems}, vol. 100, no.~3, pp. 469--471, 2022. [Online]. Available: \url{https://doi.org/10.1007/s11134-022-09761-y}
\BIBentrySTDinterwordspacing

\bibitem{Bertsekas}
D.~P. Bertsekas, \emph{Dynamic Programming and Optimal Control, Volume II}, 4th~ed.\hskip 1em plus 0.5em minus 0.4em\relax Athena Scientific, 2012.

\bibitem{Hajek}
B.~Hajek, ``Optimal control of two interacting service stations,'' \emph{IEEE Transactions on Automatic Control}, vol.~29, no.~6, pp. 491--499, 1984.

\bibitem{lippman}
S.~A. Lippman, ``Applying a new device in the optimization of exponential queuing systems,'' \emph{Operations Research}, vol.~23, no.~4, pp. 687--710, Jul.--aug 1975.

\end{thebibliography}
\appendices
\section{Proofs supporting Theorem~\ref{th:cloud_first}}
\subsection{Proofs supporting Proposition~\ref{pro:offload_empty}} 
\label{app:offload_empty}
\subsubsection{Proof of Lemma~\ref{lem:no_idle_empty}}\label{app:no_idle_empty}
Let $\pi$ be a delay-optimal policy with $\pi(s_0) = \mathtt{idle}$, where $s_0 = (n_0,0,0,0)$ for some $n_0\in\N$.
Consider a system $S$ with state process $S(t)$ starting in the state $s_0$.
Since the policy $\pi$ is stabilizing, the DMS cannot idle indefinitely while the base queue grows. Let $\tau$ be the first instant at which the DMS takes a non-idling action. We analyze this first action to show a contradiction.

\noindent\textit{Case 1:} $\pi(S(\tau)) = \mathtt{SM1}$\\ Construct a coupled system $\tilde S$ with state process $\tilde S(t)$ under policy $\tilde \pi$, which selects \texttt{SM1} at $t=0$ and replicates $\pi$ for $t>\tau$.
\[\tilde \pi(t) = \begin{cases}
    \texttt{SM1} & \text{at } t = 0 \\
    \texttt{idle} & \text{for } t \in (0,\tau]\\
    \pi(t) & \text{for } t > \tau
    \end{cases}\]
Due to the coupling, $\tilde \pi$ can infer $S(t)$ from $\tilde S(t)$ and hence $\tilde\pi$ is non-anticipatory. Since the job assigned to SM1 leaves sooner in $\tilde S$, the expected cost is strictly lower, contradicting the optimality of $\pi$.

\noindent\textit{Case 2:} $\pi(S(\tau)) = \mathtt{SM2}$\\ This case follows the same reasoning. Assigning at a job SM2 earlier reduces the expected cost.

Hence, initially idling is always suboptimal.\proofover


\subsubsection{Proof of Lemma~\ref{lem:no_SM2_empty}}\label{app:no_SM2_empty}
Let $\pi$ be a delay-optimal policy such that $\pi(s_0) = \mathtt{SM2}$, where $s_0 = (n_0,0,0,0)$ for some $n_0\in\N$.
Consider a system $S$ with state process $S(t)$ starting in the state $s_0$.
Define $\tau$ as the first time the DMS selects action \texttt{SM1} under policy $\pi$ and let $J_\beta(s;\pi, E)$ denote the expected discounted cost under policy $\pi$ from state $s$ conditional on event $E$:
\[J_\beta(s;\pi, E) \coloneqq \E_{\pi}\lsb \int_0^\infty e^{-\beta t} n(t)dt \mb S(0) = s, E \rsb\]
Now, construct a coupled system $\tilde S$ with state process $\tilde S(t)$ operating under a modified policy $\tilde \pi$, where $\tilde \pi$ assigns control \texttt{SM1} at $t = 0$.

Let $J$ denote the head-of-line job in the base queue of $S(0)$, and let $\tilde J$ denote the corresponding job in $\tilde S(0)$. By the coupling framework in Section~\ref{sec:coupling_framework}, $J$ and $\tilde J$ are associated with identical service time triplets $(\sigma_{c1}, \sigma_{l2}, \sigma_{c2})$.

Because $\tilde J$ is immediately assigned to the SM1 cloud in $\tilde S$, and $J$ is initially sent to the SM2 local server in $S$, the coupling relation implies:
\[\sigma_{l2} = \frac{\mu_{c1}}{\mu_{l2}} \:\sigma_{c1} \quad \Rightarrow \quad \sigma_{c1} < \sigma_{l2}.\]
Thus, the SM1 cloud service time for job $\tilde J$ in system $\tilde S$ is strictly shorter than the local SM2 service time of job $J$ in system $S$.

We analyze trajectories up to $t \leq \min\{\tau, \sigma_{c1}\}$. We have:
\[\tilde S(t) = S(t) + (0, -1, 1, 0), \quad \text{for } t \leq \min\{\tau, \sigma_{c1}\}.\]
Thus, both systems incur the same cost for $t \leq \min{\tau, \sigma_{c1}}$. We now distinguish two cases:
\smallskip
\noindent\textit{Case 1:} $\tau \leq \sigma_{c1}$.\\
At time $\tau$, $\pi$ assigns a job to SM1. In this case, let $\tilde \pi$ assign the corresponding coupled job to SM2 and follow an optimal policy thereafter. Since the state processes are in the same state and both follow the same policy, they incur the same expected cost for time $t \geq \sigma_{c1}$. Hence, both systems incur the same expected cost in this case, i.e.
\[J_\beta(s_0,\pi,\tau \leq \sigma_{c1}) = J_\beta(s_0,\tilde\pi,\tau \leq \sigma_{c1})\]

\smallskip
\noindent\textit{Case 2:} $\sigma_{c1} < \tau$.\\
In this case, $\tilde J$ completes cloud service before $\pi$ switches to SM1. At time $t = \sigma_{c1}$, both servers are idle in $\tilde S$ (since only one job has been processed), while in $S$, the job $J$ is still undergoing SM2 local service with residual time:
\[\sigma_{l2} - \sigma_{c1} = \sigma_{c1} \left(\frac{\mu_{l2}}{\mu_{c1}} - 1\right) > 0.\]
Since $\tilde \pi$ can infer the remaining processing time of $J$ from the coupling, it assigns a dummy job to the SM2 queue with the same remaining processing times as $J$, to couple the systems for $t > \sigma_{c1}$.
Since the dummy job does not incur any additional cost in $\tilde S$, we have
\[J_\beta(s_0,\pi,\tau > \sigma_{c1}) > J_\beta(s_0,\tilde\pi,\tau > \sigma_{c1})\]

\medskip
\noindent Finally, since the event $\{\tau > \sigma_{c1}\}$ occurs with positive probability, we have:
{\footnotesize
\begin{align*}
J_\beta(s_0,\pi) &- J_\beta(s_0,\tilde\pi)\\ 
&= \:\Prob(\tau \leq \sigma_{c1})\lsb J_\beta(s_0,\pi,\tau \leq \sigma_{c1}) - J_\beta(s_0,\tilde\pi,\tau \leq \sigma_{c1})\rsb \\
&\quad + \Prob(\tau > \sigma_{c1})\lsb J_\beta(s_0,\pi,\tau > \sigma_{c1}) - J_\beta(s_0,\tilde\pi,\tau > \sigma_{c1})\rsb	\\
&> 0
\end{align*}}
This contradicts the optimality of $\pi$. \proofover


\subsection{Proofs supporting Proposition~\ref{pro:empty_behaviour}} \label{app:empty_behaviour}
\subsubsection{Proof for Lemma~\ref{lem:empty_interval}} \label{app:empty_interval}
Let $\pi$ be optimal. Suppose for contradiction that there exist $m_1,m_2\in\N_0$ with $m_1<m_2$ such that:
\[\pi(m_1,0,1,0) = \pi(m_2,0,1,0) = \mathtt{SM2}\]
but for all intermediate states,
\[\pi(n,0,1,0) = \mathtt{idle} \quad\forall n\in {\llbracket m_1+1, m_2-1 \rrbracket}\]

Consider a system operating under $\pi$ starting at a state $(n_0,0,1,0)$ with $n_0\in\llbracket m_1+1,m_2-1\rrbracket$. This system will exclusively assign SM1 until the system either hits state $(m_1,0,1,0)$ right after an SM1 departure or state $(m_2,0,1,0)$ right after an arrival. In both cases, at this time, the system assigns a job to SM2 --  denote this job by $J$. Note however that $J$ was already present in the system at $t=0$. Similar to the argument in Proposition~\ref{pro:offload_empty}, assigning job $J$ to SM2 immediately at $t=0$, rather than waiting for some positive random amount of time, would strictly reduce the waiting cost, contradicting the optimality of $\pi$. \proofover


\subsubsection{Proof for Lemma~\ref{lem:par_comp_win}} \label{app:par_comp_win}
Let $\pi$ be optimal, satisfying
\[\pi(n,0,1,0) =\mathtt{idle} \quad \forall n\geq M_0\]
for some $M_0\in\N$. Consider a system operating under $\pi$ with state process $S(t)$ starting at $s_0 =(n_0,0,1,0)$ where $n_0 \geq M_0$\\
Define a modified policy, $\tilde\pi$, governing a system with state process $\tilde S(t)$ with initial state $\tilde s_0=(n_0,0,0,1)$. Let the cloud job present initially be denoted by $J$ and $\tilde J$ in states $s_0$ and $\tilde s_0$, respectively. \\

Couple the processes to have identical arrival processes and job sizes using the framework in Section~\ref{sec:coupling_framework}, except for $J$ and $\tilde J$. Let the service times for these jobs be $\sigma_{c1}$ and $\sigma_{c2}$ respectively. Couple them such that: $\sigma_{c2} = \frac{\mu_{c1}}{\mu_{c2}}\sigma_{c1} = (1-f)\,\sigma_{c1}$. This coupling is valid $\sigma_{c1}, \sigma_{c2}$ are exponentially distributed.

Define policy $\tilde \pi$ such that:
\[
\tilde S(t) = 
\begin{cases} 
	S(t) + (0,0,-1,+1), & \text{for } t \in [0,\sigma_{c2}], \\
	S(t) + (0,0,-1,0), & \text{for } t \in [\sigma_{c2}, \sigma_{c1}], \\
	S(t), & \text{for } t>\sigma_{c1}, \\
\end{cases}
\]
Here, $\tilde\pi$ mirrors $\pi$'s decisions for all jobs except $J$, serving $\tilde J$ when $\pi$ serves $J$. This definition of $\tilde \pi$ is valid and non-anticipative, as the service-time coupling and action mirroring ensures the $\tilde\pi$ can track $S(t)$.
We evaluate the expected cost reduction:
\begin{align*}
J_\beta(s_0;\pi) - J_\beta(\tilde s_0;\tilde\pi) 
&= \mathbb{E}\lsb\int_{\sigma_{c2}}^{\sigma_{c1}} e^{-\beta t} dt\rsb\\
&\geq \mathbb{E}\lsb e^{-\beta \sigma_{c1}} (\sigma_{c1} - \sigma_{c2}) \right]\\
&\geq f \mathbb{E} \left[e^{-\beta \sigma_{c1}} \sigma_{c1} \rsb
\end{align*}

To show that this cost reduction is bounded away from zero by a constant $\epsilon_\beta>0$ independent of $n_0$, define the event $E$ as the event where job $J$ completes service before any other state transition in $S$. Then, $\Prob(E) = \frac{\mu_{c1}}{\lambda+\mu_{c1}}$ and conditioned on $E$, $\sigma_{c1}$ is exponential with rate $\lambda + \mu_{c1}$. Hence, we have:
\begin{align*}
J_\beta(s_0;\pi) - J_\beta(\tilde s_0;\tilde\pi)
&\geq f\frac{\mu_{c1}}{\lambda+\mu_{c1}} \E\lsb e^{-\beta \sigma_{c1}} \sigma_{c1}\mb E \rsb \\
&\coloneqq \epsilon_\beta 
\end{align*}
Finally, by optimality of $\pi$, it follows that:
\[V_\beta(s_0) - V_\beta(\tilde s_0) \geq V_\beta(s_0) - J_\beta(\tilde s_0;\tilde\pi) \geq \epsilon_\beta\]
\proofover

\subsubsection{Proof for Lemma~\ref{lem:eventually_SM2}} \label{app:eventually_SM2}
Suppose for contradiction that $\pi$ is delay optimal policy and satisfies
\[\pi(n,0,1,0) = \mathtt{idle} \quad \forall n\geq M_0\]
for some $M_0\in\N$.
Consider a system under $\pi$ with state process $S(t)$ with initial state $s_m = (M_0+m,0,1,0)$, for some $m\in\N$. Define $\tau$ as the first time this system reaches state $(M_0,0,1,0)$. Because $\pi$ is stabilizing and optimal, we have $\tau<\infty$ w.p. 1.

Define a modified policy, $\tilde \pi$ with the same initial state that assigns SM2 at $t=0$, resulting in state process $\tilde S(t)$. Couple the state processes using the framework in Section~\ref{sec:coupling_framework}. Let $\sigma_{l2}$ denote the local service time for the job assigned SM2 at $t=0$ in $\tilde S$ and let $\tilde \pi$ be such that
\[\tilde S(t) = U_2[S(t)], \quad \text{ for } t \in [0,\min\{\tau,\sigma_{l2}\})\]

After time $\min\{\tau, \sigma_{l2}\}$, $\tilde{\pi}$ simply operates using the optimal policy, $\pi$.

Define \[J_\beta(s;\pi, E) \coloneqq \E_{\pi}\lsb \int_0^\infty e^{-\beta t} n(t)dt \mb S(0) = s, E \rsb\]
Then,
\begin{align*}
J_\beta&(s_m;\pi,\sigma_{l2}<\tau)-J_\beta(s_m;\tilde\pi,\sigma_{l2}<\tau)\\ 
&=
\E \lsb e^{-\beta \sigma_{l2}}\rsb\lsb V_\beta(S(\sigma_{l2})) - V_\beta(\tilde S(\sigma_{l2}))\rsb\\
&\geq e^{-\beta\mathbb E[\sigma_{l2}]}\epsilon_\beta
= e^{-\beta/\mu_{l2}} \epsilon_\beta\\
&\coloneqq \epsilon_\beta^1 
\end{align*}
Where the inequality follows using Lemma~\ref{lem:par_comp_win}. Similarly,
\begin{align*}
J_\beta&(s_m;\tilde\pi,\tau\leq\sigma_{l2})-J_\beta(s_m;\pi,\tau\leq\sigma_{l2}) \\
&=\E \lsb e^{-\beta \tau}\rsb\lsb V_\beta(M_0-1,1,1,0) - V_\beta(M_0,0,1,0)\rsb\\
&\leq \lsb V_\beta(M_0-1,1,1,0) - V_\beta(M_0,0,1,0)\rsb\\
&\coloneqq \epsilon_\beta^2
\end{align*}
Finally, note that
As $m \rightarrow \infty$, we have $\Prob (\tau < \sigma_{l2}) \rightarrow 0$ and $\Prob (\tau \geq \sigma_{l2}) \rightarrow 1$. Thus, for sufficiently large $m$.
\[J_\beta(s_m;\pi) - J_\beta(s_m;\tilde\pi) \geq \mathbb \Prob(\sigma_{l2}<\tau) \epsilon_\beta^1 - \Prob(\sigma_{l2}>\tau) \epsilon_\beta^2 > 0\]
contradicting the optimality of $\pi$.\proofover


\subsection{Proofs supporting Proposition~\ref{pro:SM1_induction}} \label{app:SM1_induciton}

\subsubsection{Proof for Lemma~\ref{lem:idle_SM2}} \label{app:idle_SM2}
We prove by contradiction. Suppose $\pi$ is optimal and satisfies:
\begin{align*}
	\pi(s_0) = \mathtt{idle},\quad
	\pi(As_0) = \mathtt{SM1},\quad
	\pi(D_Ls_0) = \mathtt{SM2}
\end{align*}
where, $s_0 = (n_0, 1, 0, 0)$ for some $n_0\in\mathbb N$.\\
Let $S(t)$ denote the state process for a system operating under $\pi$ with initial state $s_0$. Using the coupling framework described in Section~\ref{sec:coupling_framework}, consider a coupled system with state process $\tilde S(t)$, starting at the same initial state, under a modified policy, $\tilde \pi$ such that $\tilde \pi$ picks DMS control \texttt{SM1} at $t=0$ and thereafter uses the optimal policy, $\pi$.

Let $J, J'$ denote the job in the SM2 cloud queue and the HOL job from the base queue in $S(0)$. Additionally, Similarly, let $\tilde{J}$ and $\tilde{J}'$ denote the SM1 and SM2 jobs in $\tilde{S}(0^+)\;(= U_1 s_0)$ respectively. Couple the job sizes in $S,\tilde S$ by mapping $J$ to $\tilde J$ and $J'$ to $\tilde J'$ and all other jobs in $S$ to the corresponding jobs in $\tilde S$.
\begin{remark}
	We remind the reader that we have coupled the SM2 local job in $S$ to the SM1 cloud job in $\tilde S$ and \textit{not} the SM2 local job in $\tilde S$.
\end{remark}
We use notation $\sigma_{l2}^J \;(\sim\text{Exp}(\mu_{l2}))$ to denote the SM2-local service time for job $J$ and $\tau_a$ to denote the time of the first arrival in $S$ (also $\tilde S$ from coupling).
Define \[J_\beta(s;\pi, E) \coloneqq \E_{\pi}\lsb \int_0^\infty e^{-\beta t} n(t)dt \mb S(0) = s, E \rsb\]

Now, the time of the first transition instant in $\tilde S$ is given by
\[\tau \coloneqq \min\{\tau_a, \sigma_{l2}^{\tilde J'}, \sigma_{c1}^{\tilde J}\}\]
We proceed by considering cases
\begin{itemize}[leftmargin=1em]
	\item \textbf{Case 1:} $\tau = \tau_a$ \\
	This corresponds to $\tau_a$ being the first decision instant on both systems.
	Since $\pi(As_0) = \mathtt{SM1}$, we have $S(\tau^+) = U_1As_0 = \tilde S(\tau^+)$.
	In this case, the systems have the same expected cost, i.e.
	\[J_\beta(s_0;\pi,\tau=\tau_a) = J_\beta(s_0;\tilde\pi,\tau=\tau_a)\]
		
	\item \textbf{Case 2:} $\tau = \sigma_{c1}^{\tilde J}$ \\
	Now, $\tilde S(\tau) = D_1U_1s_0$ while $S(\tau) = s_0$ with $J$'s remaining service time being $\sigma_{c1}^{\tilde J} \lp\frac{\mu_{c1}}{\mu_{l2}}-1\rp$. As previously argued by comparing systems by introducing a dummy job (cf Lemma~\ref{lem:no_SM2_empty}), the cost would be strictly improved if we consider the state where $J$ completes local service.

	Since both systems incur the same cost for $t<\tau$, we take the difference to get: 
	\begin{align*}
		J_\beta&(s_0;\pi,\tau=\sigma_{c1}^{\tilde J}) - J_\beta(s_0;\tilde\pi,\tau =\sigma_{c1}^{\tilde J})\\
		&> \E[e^{-\beta\tau}]\lsb V_\beta(D_Ls_0;\pi) - V_\beta(D_1U_1s_0;\pi) \rsb
	\end{align*}
	Since $\pi(D_Ls_0) = \mathtt{SM2}$, 
	\[V_\beta(D_Ls_0;\pi) = V_\beta(U_2D_Ls_0;\pi)\]
	Now, a system in state $D_1U_1s_0$ can track a system in state $U_2D_Ls_0$ by introducing a dummy job as we argued previously (cf Lemma~\ref{lem:no_SM2_empty}). Hence:
	\[V_\beta(U_2D_Ls_0) > V_\beta(D_1U_1s_0)\]
	Finally, combining these results we get:
	\[J_\beta(s_0;\pi,\tau=\sigma_{c1}^{\tilde J}) > J_\beta(s_0;\tilde\pi,\tau=\sigma_{c1}^{\tilde J})\]
	
	\item \textbf{Case 3:} $\tau = \sigma_{l2}^{\tilde J'}$ \\
	Now, $\tilde S(\tau) = s_0$ while $S(\tau) = D_LU_1s_0$. Once again, by comparing systems by introducing a dummy job (cf Lemma~\ref{lem:no_SM2_empty}), we can show
	\[V_\beta(U_1s_0) \geq V_\beta(D_LU_1s_0)\]
	Using this, we have
	\begin{align*}
		J_\beta&(s_0;\pi,\tau=\sigma_{c1}^{\tilde J}) - J_\beta(s_0;\tilde\pi,\tau=\sigma_{c1}^{\tilde J})\\
		&= \E[e^{-\beta\tau}]\lsb V_\beta(s_0) - V_\beta(D_LU_1s_0) \rsb\\
		&\geq \E[e^{-\beta\tau}]\lsb V_\beta(s_0) - V_\beta(U_1s_0) \rsb\\
		&= \E[e^{-\beta\tau}]\lsb J_\beta(s_0;\pi) - J_\beta(s_0;\tilde\pi) \rsb
	\end{align*} 
\end{itemize}
\noindent
Finally,
\begin{align*}
	J_\beta&(s_0;\pi) - J_\beta(s_0;\tilde\pi)\\
	=&\Prob(\tau = \tau_a)\lp J_\beta(s_0;\pi,\tau=\tau_a) - J_\beta(s_0;\tilde\pi,\tau=\tau_a) \rp + \\
	& \Prob(\tau = \sigma_{c1}^{\tilde J})\lp J_\beta(s_0;\pi,\tau=\sigma_{c1}^{\tilde J}) - J_\beta(s_0;\tilde\pi,\tau=\sigma_{c1}^{\tilde J}) \rp + \\
	&\Prob(\tau = \sigma_{l2}^{\tilde J'})\lp J_\beta(s_0;\pi,\tau=\sigma_{c1}^{\tilde J}) - J_\beta(s_0;\tilde\pi,\tau=\sigma_{c1}^{\tilde J}) \rp
\end{align*}
Substituting the results shown above, we have
\[J_\beta(s_0;\pi) > J_\beta(s_0;\tilde\pi)\]
which contradicts the optimality of $\pi$. 

\subsubsection{Proof for Lemma~\ref{lem:idle_idle}} \label{app:idle_idle}
We prove by contradiction. Suppose $\pi$ is optimal and satisfies:
\[\pi(s_0) = \pi(D_Ls_0) = \mathtt{idle}, \:\: \pi(As_0) = \pi(A^2s_0) = \mathtt{SM1}\]
where, $s_0 = (n_0, 1, 0, 0)$ for some $n_0\in\N$.\\
Let $S(t)$ denote the state process for a system operating under $\pi$ with initial state $s_0$. Using the coupling framework described in Section~\ref{sec:coupling_framework}, consider a coupled system with state process $\tilde S(t)$, starting at the same initial state, under a modified policy, which picks DMS control \texttt{SM1} at $t=0$ and thereafter follows the optimal policy, $\pi$.

Let $J, J'$ denote the job in the SM2 cloud queue and the HOL job from the base queue in $S(0)$. Additionally, Similarly, let $\tilde{J}'$ and $\tilde{J}$ denote the SM1 and SM2 jobs in $\tilde{S}(0^+)\;(= U_1 s_0)$ respectively. Couple the job sizes in $S,\tilde S$ by mapping $J$ to $\tilde J$ and $J'$ to $\tilde J'$ and all other jobs in $S$ to the corresponding jobs in $\tilde S$.
\begin{remark}
	We remind the reader that we have coupled the SM2 local job in $S$ to the SM2 local job in $\tilde S$ and \textit{not} the SM2 local job in $\tilde S$ unlike in the previous proof.
\end{remark}

Similar to the previous proof, we use notation $\sigma_{l2}^J \;(\sim\text{Exp}(\mu_{l2}))$ to denote the SM2-local service time for job $J$ and $\tau_a$ to denote the time of the first arrival in $S$ (also $\tilde S$ from coupling).
Define \[J_\beta(s;\pi, E) \coloneqq \E_{\pi}\lsb \int_0^\infty e^{-\beta t} n(t)dt \mb S(0) = s, E \rsb\]

Now, the time of the first decision instant in $\tilde S$ is given by
\[\tau \coloneqq \min\{\tau_a, \sigma_{l2}^{\tilde J'}, \sigma_{c1}^{\tilde J}\}\]
Under our coupling, we have $n(t) = \tilde n(t)$ for $t<\tau$. Equivalently, both systems incur the same delay cost for time $t<\tau$. Additionally, since $\tilde\pi$ uses the optimal policy after the first decision instant the difference in the cost-to-go under $\pi$ and $\tilde\pi$ will be equal to the expected (discounted) difference in value function evaluated at the states of the systems at the time of the first transition. Mathematically,
\begin{align}
&J_\beta(s_0;\pi) - J_\beta(s_0;\tilde\pi)\nonumber\\
& =\Prob(\tau = \tau_a)\lsb J_\beta(s_0;\pi,\tau = \tau_a) - J_\beta(s_0;\tilde\pi,\tau = \tau_a) \rsb \nonumber\\
&\quad + \Prob(\tau = \sigma_{l2}^{\tilde J'})\lsb J_\beta(s_0;\pi,\tau = \sigma_{l2}^{\tilde J'}) - J_\beta(s_0;\tilde\pi,\tau = \sigma_{l2}^{\tilde J'}) \rsb \nonumber\\
&\quad + \Prob(\tau = \sigma_{c1}^{\tilde J})\lsb J_\beta(s_0;\pi,\tau = \sigma_{c1}^{\tilde J}) - J_\beta(s_0;\tilde\pi,\tau = \sigma_{c1}^{\tilde J}) \rsb \nonumber\\
& =\Prob(\tau = \tau_a)\E\lsb e^{-\beta\tau}\rsb\lsb V_\beta(As_0) - V_\beta(AU_1s_0) \rsb \nonumber\\
&\quad + \Prob(\tau = \sigma_{l2}^{\tilde J'})\E\lsb e^{-\beta\tau}\rsb\lsb V_\beta(D_Ls_0) - V_\beta(D_LU_1s_0) \rsb \nonumber\\
&\quad + \Prob(\tau = \sigma_{c1}^{\tilde J})\E\lsb e^{-\beta\tau}\rsb\lsb V_\beta(s_0) - V_\beta(D_1U_1s_0) \rsb \nonumber\\
& = \frac{\lambda\lsb V_\beta(As_0) - V_\beta(AU_1s_0) \rsb}{\lambda + \mu_{l2} + \mu_{c1} + \beta} \nonumber\\
&\quad + \frac{\sigma_{l2}\lsb V_\beta(D_Ls_0) - V_\beta(D_LU_1s_0)\rsb}{\lambda + \mu_{l2} + \mu_{c1} + \beta}\nonumber\\
&\quad + \frac{\mu_{c1}\lsb V_\beta(s_0) - V_\beta(D_1U_1s_0) \rsb}{\lambda + \mu_{l2} + \mu_{c1} + \beta}
 \label{eqn:pro4.6_rhs}
\end{align}

We now look at each term from the RHS in \eqref{eqn:pro4.6_rhs} under the following cases

\begin{itemize}[leftmargin=1em]
	\item \textbf{Term 1:} $V_\beta(As_0) - V_\beta(AU_1s_0)$ \\
	Since $\pi(As_0) = \mathtt{SM1}$, we have $V_\beta(AU_1s_0) = V_\beta(As_0)$ and the first term is zero.

	\item \textbf{Term 2:} $V_\beta(D_Ls_0) - V_\beta(D_LU_1s_0)$ \\
	Since $\pi(D_Ls_0) = \mathtt{idle}$, we have from Claim~\ref{claim:B1}
	\begin{align*}
	V_\beta(D_Ls_0) &- V_\beta(D_LU_1s_0) \\&> \dfrac{\lambda\lsb V_\beta(AD_Ls_0) - V_\beta(AD_LU_1s_0) \rsb}{\lambda + \mu_{c2} + \beta}
	\end{align*}

	\item \textbf{Term 3:} $V_\beta(s_0) - V_\beta(D_1U_1s_0)$ \\
	Since $AD_1U_1s_0$ is simply $s_0$, we have from Claim~\ref{claim:B2}
	\begin{align*}
	V_\beta(s_0) &- V_\beta(D_1U_1s_0) \\&> \dfrac{\lambda\lsb V_\beta(As_0) - V_\beta(s_0) \rsb}{\lambda + \mu_{c2} + \beta}
	\end{align*}
\end{itemize}
Substituting these in \eqref{eqn:pro4.6_rhs}, we get

\begin{align*}
&J_\beta(s_0;\pi) - J_\beta(s_0;\tilde\pi)\\
&>\frac{\lambda \mu_{l2}\lsb V_\beta(AD_Ls_0) - V_\beta(AD_LU_1s_0) \rsb}{\lambda + \mu_{l2} + \mu_{c1} + \beta}\\
&\quad + \frac{\lambda \mu_{c1}\lsb V_\beta(As_0) - V_\beta(s_0) \rsb}{\lambda + \mu_{l2} + \mu_{c1} + \beta}
\end{align*}
\noindent
Finally, since $\pi(As_0) = \pi(A^2s_0) = \mathtt{SM1}$, we have from Claim~\ref{claim:B3}
{\footnotesize
\[\mu_{l2} \lsb V_\beta(AD_Ls_0)-V_\beta(AD_LU_1s_0) \rsb + \mu_{c1} \lsb V_\beta(As_0)-V_\beta(s_0) \rsb > 0\]}
and Hence,
\[J_\beta(s_0;\pi) > J_\beta(s_0;\tilde\pi)\]
showing $\pi$ is not optimal. \proofover

\subsubsection{Supporting Claims}
\begin{claim}\label{claim:B1}
Let $\pi$ be a delay-optimal policy with 
\[\pi(s_0) = \mathtt{idle}\] 
where $s_0 = (n_0,0,0,1)$ for some $n_0\in\mathbb N$. Then,
\begin{align*}
V_\beta(s_0)-V_\beta(U_1s_0) \geq \frac{\lambda \lsb V_\beta(As_0) - V_\beta(AU_1s_0) \rsb}{\lambda + \mu_{l2} + \beta} \end{align*}
\end{claim}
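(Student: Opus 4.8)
The plan is to bound the quantity $g(s) := V_\beta(s) - V_\beta(U_1 s)$ by a pathwise coupling, exploiting the hypothesis that idling is optimal at $s_0$ together with the cloud-first property already established in Proposition~\ref{pro:offload_empty}. Concretely, the target is $g(s_0) \ge \frac{\lambda}{\lambda + \mu_{l2} + \beta}\, g(As_0)$, where $s_0 = (n_0,0,0,1)$, so that $U_1 s_0 = (n_0-1,0,1,1)$, $As_0 = (n_0+1,0,0,1)$, and $AU_1 s_0 = U_1 As_0 = (n_0,0,1,1)$.

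First I would couple a system $S$ started at $s_0$ under the optimal $\pi^*$ (which idles there) with a system $\tilde S$ started at $U_1 s_0$ under a policy $\tilde\pi$ that idles until the first transition and then reverts to $\pi^*$; arrivals and the single SM2 cloud service are coupled across both systems. The key observation is that under idling the only active transitions in both systems are an arrival (rate $\lambda$) and the SM2 cloud completion (rate $\mu_{c2}$): the local server is idle ($i_2=0$) and the SM1 job in $\tilde S$ is blocked behind the SM2 job ($n_2=1$). Up to the first event $T_1 \sim \mathrm{Exp}(\lambda+\mu_{c2})$, both systems carry $n_0+1$ jobs and accrue identical cost. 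On the arrival branch the systems land in $As_0$ and $AU_1 s_0$, contributing $g(As_0)$ to the continuation difference; on the completion branch $S$ reaches $(n_0,0,0,0)$, where Proposition~\ref{pro:offload_empty} forces $\pi^*$ to assign SM1, sending $S$ to $(n_0-1,0,1,0)$ — exactly the state $\tilde S$ then occupies — after which the two coupled chains coalesce and contribute $0$. Taking discounted expectations yields $J_\beta(s_0;\pi^*) - J_\beta(U_1 s_0;\tilde\pi) = \frac{\lambda}{\lambda + \mu_{c2}+\beta}\, g(As_0)$, and since $\tilde\pi$ is merely feasible (hence suboptimal) at $U_1 s_0$ while $\pi^*$ is optimal at $s_0$, the left-hand side lower-bounds $g(s_0)$.

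The remaining step is to pass from the exit rate $\mu_{c2}$ to the stated $\mu_{l2}$. For this I would first record that $g(As_0) \le 0$: assigning SM1 is an admissible, instantaneous, cost-free action at $As_0 = (n_0+1,0,0,1)$ that moves the system to $U_1 As_0 = AU_1 s_0$, so $V_\beta(As_0) \le V_\beta(AU_1 s_0)$. Because $\mu_{l2} < \mu_{c1} < \mu_{c2}$, the coefficients obey $\frac{\lambda}{\lambda+\mu_{c2}+\beta} \le \frac{\lambda}{\lambda+\mu_{l2}+\beta}$, and multiplying the non-positive quantity $g(As_0)$ reverses the inequality, giving $\frac{\lambda}{\lambda+\mu_{c2}+\beta}\, g(As_0) \ge \frac{\lambda}{\lambda+\mu_{l2}+\beta}\, g(As_0)$. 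Chaining this with the coupling bound delivers exactly the claimed inequality with denominator $\lambda+\mu_{l2}+\beta$.

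I expect the delicate part to be the bookkeeping on the service-completion branch: the argument works only because cloud-first optimality lets $S$ re-commit its freed base-queue job to SM1 at precisely the instant $\tilde S$'s blocked SM1 job becomes active, so the coupled chains merge with no residual cost. Confirming that $\tilde\pi$ stays non-anticipating through this merge, and that the running-cost accounting $n(t)=\tilde n(t)$ holds up to $T_1$, is where the care is needed; the final rate substitution is then just the sign-and-monotonicity observation $g(As_0)\le 0$ above.
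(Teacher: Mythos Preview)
Your argument is correct and follows essentially the same coupling as the paper: start one system at $s_0$ under $\pi^*$ (idling) and a coupled system at $U_1 s_0$ idling until the first transition, then use Proposition~\ref{pro:offload_empty} to show the $D_2$-branch contributes zero. Your additional step---showing $V_\beta(As_0)\le V_\beta(AU_1s_0)$ and then weakening the denominator from $\mu_{c2}$ to $\mu_{l2}$---is valid but unnecessary: the paper's own derivation naturally produces $\lambda+\mu_{c2}+\beta$ in the denominator (the $\mu_{l2}$ in the statement is a typographical slip, and indeed the claim is applied in the proof of Lemma~\ref{lem:idle_idle} with $\mu_{c2}$).
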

\begin{proof}
	Consider a system $S$, operating under policy $\pi$ with state process $S(t)$ and initial state $s_0$.
	Using the coupling framework described in Section~\ref{sec:coupling_framework}, consider a coupled system, $\tilde S$, operating under a modified policy $\tilde\pi$, with state process $\tilde S(t)$ and initial state $U_1s_0$. Let $\tilde\pi$ pick DMS action \texttt{idle} till the first decision instant, after which it follows an optimal policy. Let $J,\tilde J$ denote the SM2 cloud jobs in $S(0), \tilde S(0)$ repectively. Couple the state processes to have the same arrival process and job sizes (mapping the SM1 cloud job in $\tilde S(0)$ to a job form the base queue). Let $\tau_a$ denote the time of the first arrival.\\
	The time of the first decision instant (for both systems) is given by
	\[\tau \coloneqq \min\{\sigma_{c2}^J, \tau_a\}\]
	Then $N(t) = \tilde N(t)$ for $t<\tau$.
	Defining 
	\[J_\beta(s;\pi, E) \coloneqq \E_{\pi}\lsb \int_0^\infty e^{-\beta t} n(t)dt \mb S(0) = s, E \rsb\]
	and proceeding similar to previous analysis, we have:
	{\footnotesize
	\begin{align}
	& J_\beta(s_0;\pi) - J_\beta(U_1s_0;\tilde\pi)\nonumber\\
	&= \E \lsb e^{-\beta\tau}\rsb 
	\big{\{} 
	\Prob(\tau = \tau_a) \lsb V_\beta(As_0) - V_\beta(AU_1s_0) \rsb \nonumber\\
	&\qquad\qquad\qquad\qquad+ \Prob(\tau = \sigma_{c2}^J) \lsb V_\beta(D_2s_0) - V_\beta(D_2U_1s_0) \rsb
	\big{\}} \nonumber\\
	&= \frac{\lambda \lsb V_\beta(As_0) - V_\beta(AU_1s_0) \rsb + \mu_{c2} \lsb V_\beta(D_2s_0) - V_\beta(D_2U_1s_0) \rsb}{\lambda + \mu_{l2} + \beta} \label{eqn:B1_rhs}
	\end{align}}
	From Proposition~\ref{pro:offload_empty}, we have $\pi(D_2s_0) = \mathtt{SM1}$. Hence,
	\[V_\beta(D_2s_0) = V_\beta(D_2U_1s_0)\]
	Removing this term from \eqref{eqn:B1_rhs}, we get
	{\footnotesize
	\begin{align*}
	V_\beta(s_0) - V_\beta(U_1s_0) &\geq J_\beta(s_0;\pi) - J_\beta(U_1s_0;\tilde\pi)\\
	&= \frac{\lambda \lsb V_\beta(As_0) - V_\beta(AU_1s_0) \rsb}{\lambda + \mu_{l2} + \beta} 
	\end{align*}}
	which is the desired result.
\end{proof}
\begin{claim}\label{claim:B2}
Let $\pi$ be a delay-optimal policy with
\[\pi(As_0) = \mathtt{idle}\]
where $s_0 = (n_0,1,0,0)$ for some $n_0\in\mathbb N$. Then
\begin{align*}
V_\beta(As_0)-V_\beta(s_0) > \frac{\lambda \lsb V_\beta(A^2s_0) - V_\beta(As_0) \rsb}{\lambda + \mu_{c2} + \beta} \end{align*}
\end{claim}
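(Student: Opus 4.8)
The plan is to mirror the path-wise coupling used for Claim~\ref{claim:B1}, but now comparing the idling state $As_0=(n_0+1,1,0,0)$ against $s_0=(n_0,1,0,0)$, which differ by a single job in the base queue. I would run a system $S$ under the optimal policy $\pi$ started at $As_0$ and a shadow system $\tilde S$ under a mimicking policy $\tilde\pi$ started at $s_0$, coupling arrivals and all service times (in particular identifying the single local SM2 job and its completion time $\sigma_{l2}$ across the two systems so that $\tilde\pi$ can track $S(t)$ and is non-anticipative). Since $V_\beta(As_0)=J_\beta(As_0;\pi)$ while $V_\beta(s_0)\le J_\beta(s_0;\tilde\pi)$, it suffices to lower-bound $J_\beta(As_0;\pi)-J_\beta(s_0;\tilde\pi)$.

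Because $\pi(As_0)=\mathtt{idle}$, the system $S$ idles at $t=0$; I let $\tilde\pi$ also idle until the first transition instant $\tau=\min\{\tau_a,\sigma_{l2}\}$ (an arrival at rate $\lambda$ or the local SM2 completion at rate $\mu_{l2}$ --- the only active transitions in the state family $(\cdot,1,0,0)$, whose cloud is empty), then revert to the optimal policy. Throughout $[0,\tau)$ the extra base-queue job of $S$ simply persists, so $\tilde n(t)=n(t)-1$ and $S$ accrues a strictly positive extra holding cost $H\coloneqq\E\lsb\int_0^\tau e^{-\beta t}\,dt\rsb>0$. Conditioning on $\tau$, the arrival branch sends the two systems to $A^2s_0$ and $As_0$, while the local-completion branch sends them to $D_LAs_0=(n_0+1,0,0,1)$ and $D_Ls_0=(n_0,0,0,1)$; in each case the states still differ by exactly one base-queue job and both then proceed optimally.

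Collecting the post-$\tau$ contributions yields
\[J_\beta(As_0;\pi)-J_\beta(s_0;\tilde\pi)=H+\E\lsb e^{-\beta\tau}\rsb\Big\{\tfrac{\lambda}{\lambda+\mu_{l2}}\lsb V_\beta(A^2s_0)-V_\beta(As_0)\rsb+\tfrac{\mu_{l2}}{\lambda+\mu_{l2}}\lsb V_\beta(D_LAs_0)-V_\beta(D_Ls_0)\rsb\Big\}.\]
I would then invoke monotonicity of $V_\beta$ in the base-queue length to conclude both bracketed differences are non-negative, drop the local-completion term, and use $\E\lsb e^{-\beta\tau}\rsb\Prob(\tau=\tau_a)=\lambda/(\lambda+\mu_{l2}+\beta)$ together with $H>0$ to obtain
\[V_\beta(As_0)-V_\beta(s_0)>\frac{\lambda\lsb V_\beta(A^2s_0)-V_\beta(As_0)\rsb}{\lambda+\mu_{l2}+\beta}\ge\frac{\lambda\lsb V_\beta(A^2s_0)-V_\beta(As_0)\rsb}{\lambda+\mu_{c2}+\beta},\]
where the final step uses $\mu_{l2}<\mu_{c2}$ (equivalently $Kf>1-f$, which holds since $f>1/K$) and the non-negativity of the numerator, recovering the stated bound.

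The main obstacle is the monotonicity input $V_\beta(A^2s_0)-V_\beta(As_0)\ge0$ and $V_\beta(D_LAs_0)-V_\beta(D_Ls_0)\ge0$: without it the retained term could carry the wrong sign and the final inequality would fail. I expect to discharge it by a standard stochastic-coupling monotonicity argument --- a system carrying one additional base-queue job can shadow the lighter system action-for-action, so its discounted cost is at least that of the lighter one --- or by the same ``phantom/dummy job'' comparison already used in Lemma~\ref{lem:no_SM2_empty}. Everything else is the routine bookkeeping of conditioning on $\tau$ and evaluating the elementary expectations $\E\lsb e^{-\beta\tau}\rsb$ and $\Prob(\tau=\tau_a)$ for the two competing exponentials.
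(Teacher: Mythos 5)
Your proposal is correct and follows essentially the same route as the paper's proof: couple $S$ started at $As_0$ under $\pi$ with $\tilde S$ started at $s_0$ under a policy that idles until the first transition $\tau=\min\{\tau_a,\sigma_{l2}\}$, condition on which event fires, drop the $D_L$-branch using monotonicity of $V_\beta$ in the base queue (established via the dummy-job comparison of Lemma~\ref{lem:no_SM2_empty}), and enlarge the denominator from $\lambda+\mu_{l2}+\beta$ to $\lambda+\mu_{c2}+\beta$ using $\mu_{l2}<\mu_{c2}$. The only (cosmetic) difference is where strictness comes from: you retain the extra holding-cost term $H>0$ accrued by the heavier system on $[0,\tau)$, whereas the paper omits that term and instead gets strictness from the strict inequalities $V_\beta(A^2s_0)>V_\beta(As_0)$ and the denominator enlargement --- your accounting is, if anything, the more careful of the two.
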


\begin{proof}
	This follows similarly to Claim \ref{claim:B1}, by considering a coupled systems $S,\tilde S$ starting from initial state $As_0, s_0$ respectively, where $S$ evolves under $\pi$ and $\tilde S$ under $\tilde \pi$ which picks DMS action $\mathtt{idle}$ at $t=0$ and uses an optimal policy after the first decision instant, we get:
	{\footnotesize
	\begin{align}
	& J_\beta(As_0;\pi) - J_\beta(s_0;\tilde\pi)\nonumber\\
	&= \frac{\lambda \lsb V_\beta(A^2s_0) - V_\beta(As_0) \rsb + \mu_{l2} \lsb V_\beta(D_LAs_0) - V_\beta(D_Ls_0) \rsb}{\lambda + \mu_{l2} + \beta} \label{eqn:B2_rhs}
	\end{align}}
	As previously argued by comparing systems by introducing a dummy job (cf Lemma~\ref{lem:no_SM2_empty}), we can show with strict inequality:
	{\footnotesize
	\begin{align}
	V_\beta(A^2s_0) > V_\beta(As_0) \quad \text{ and } \quad V_\beta(D_LAs_0) > V_\beta(D_Ls_0)
	\end{align}}
	By removing the second term from \eqref{eqn:B2_rhs} and multiplying the first term by $\frac{\lambda + \mu_{l2} + \beta}{\lambda + \mu_{c2} + \beta} < 1$, we get
	{\footnotesize
	\begin{align*}
	V_\beta(As_0) - V_\beta(s_0) &\geq J_\beta(As_0;\pi) - J_\beta(s_0;\tilde\pi)\\
	&> \frac{\lambda \lsb V_\beta(A^2s_0) - V_\beta(As_0) \rsb}{\lambda + \mu_{c2} + \beta} 
	\end{align*}}
	which is the desired result.
\end{proof}
\begin{claim}\label{claim:B3}
Let $\pi$ be a delay-optimal policy with
\[\pi(s_0) = \pi(As_0) = \mathtt{SM1}\]
where $s_0 = (n_0,1,0,0)$ for some $n_0\in\mathbb N$. Then
{\footnotesize
\[
\mu_{l2} \lsb V_\beta(D_Ls_0)-V_\beta(D_LU_1s_0) \rsb + \mu_{c1} \lsb V_\beta(s_0)-V_\beta(D_1U_1s_0) \rsb > 0 \]}
\end{claim}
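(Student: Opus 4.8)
The plan is to read the claimed combination off the dynamic-programming optimality equations at $U_1s_0$ and $s_0$, using the two hypotheses to control the leftover terms. Write $s_0=(n_0,1,0,0)$, so that $U_1s_0=(n_0-1,1,1,0)$, $D_1U_1s_0=(n_0-1,1,0,0)$, $D_Ls_0=(n_0,0,0,1)$, $D_LU_1s_0=(n_0-1,0,1,1)$, and note $c(U_1s_0)=c(s_0)=n_0+1$. At $U_1s_0$ both indicators equal $1$, so \texttt{idle} is the only admissible action and the active transitions are an arrival (rate $\lambda$), a local SM2 completion (rate $\mu_{l2}$), and a cloud SM1 completion (rate $\mu_{c1}$); the optimality equation reads
\[(\beta+\lambda+\mu_{l2}+\mu_{c1})V_\beta(U_1s_0)=c(U_1s_0)+\lambda V_\beta(AU_1s_0)+\mu_{l2}V_\beta(D_LU_1s_0)+\mu_{c1}V_\beta(D_1U_1s_0).\]

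Next I would exploit $\pi(s_0)=\mathtt{SM1}$, which gives $V_\beta(s_0)=V_\beta(U_1s_0)$ and, since \texttt{SM1} is no worse than \texttt{idle} at $s_0$ (whose pre-control transitions are only an arrival and a local SM2 completion),
\[(\beta+\lambda+\mu_{l2})V_\beta(U_1s_0)\le c(s_0)+\lambda V_\beta(As_0)+\mu_{l2}V_\beta(D_Ls_0).\]
Substituting the first (exact) equation into this inequality and cancelling the common cost term regroups the surviving terms exactly into the two brackets of the claim, yielding
\[\mu_{l2}\bigl[V_\beta(D_Ls_0)-V_\beta(D_LU_1s_0)\bigr]+\mu_{c1}\bigl[V_\beta(s_0)-V_\beta(D_1U_1s_0)\bigr]\ \ge\ \lambda\bigl[V_\beta(AU_1s_0)-V_\beta(As_0)\bigr].\]
To kill the right-hand side I would use the second hypothesis: $\pi(As_0)=\mathtt{SM1}$ forces $V_\beta(As_0)=V_\beta(U_1As_0)$, and since $A$ and $U_1$ commute on $s_0$ we have $U_1As_0=AU_1s_0$, so the bracket vanishes and the claimed quantity is $\ge 0$.

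The main obstacle is promoting this to the strict inequality asserted by the claim. The only inequality used above is the comparison of \texttt{SM1} against \texttt{idle} at $s_0$, so it suffices to show that idling is \emph{strictly} suboptimal at $s_0=(n_0,1,0,0)$. This should follow because the cloud is idle there ($i_1=n_2=0$) while the base queue is nonempty, exactly the configuration handled in Lemma~\ref{lem:no_idle_empty}: I would couple a system that offloads the head-of-line base job at $t=0$ against one that idles until its first non-idle instant, and argue that offloading earlier strictly reduces the discounted sojourn cost. The subtlety, absent from Lemma~\ref{lem:no_idle_empty}, is that a local SM2 completion can spawn an SM2 cloud job that preempts the freshly offloaded SM1 job under SEPT; the coupling must therefore track the offloaded job's accumulated cloud service and extract the strict gain on the positive-probability event that this job completes before any such preemption. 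Establishing this strict gap makes the displayed bound strict and completes the proof.
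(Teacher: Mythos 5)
Your derivation is essentially the paper's own argument: the one-step Bellman expansion you write at $U_1s_0$ together with the idle-vs-\texttt{SM1} comparison at $s_0$ is exactly the paper's coupling computation of $J_\beta(s_0;\tilde\pi)-J_\beta(s_0;\pi)$ (where $\tilde\pi$ idles once and then acts optimally), and both arguments cancel the $\lambda$-term using $\pi(As_0)=\mathtt{SM1}$ and $AU_1s_0=U_1As_0$. The one place you go beyond the paper is the strictness: the paper simply asserts that optimality makes the quantity positive, which by itself only yields $\geq 0$, so your observation that one must separately show idling is \emph{strictly} suboptimal at $(n_0,1,0,0)$ --- via a coupling in the spirit of Lemma~\ref{lem:no_idle_empty}, minding SEPT preemption by a completed local SM2 job --- correctly identifies and patches a step the paper glosses over.
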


\begin{proof}
	This follows similarly to Claim \ref{claim:B1}, by considering a coupled systems $S,\tilde S$ both starting from initial state $s_0$, where $S$ evolves under $\pi$ and $\tilde S$ under $\tilde \pi$ which picks DMS action $\mathtt{idle}$ at $t=0$ and uses an optimal policy after the first decision instant, we get:
	{\footnotesize
	\begin{align}
	& J_\beta(s_0;\tilde\pi) - J_\beta(s_0;\pi)\nonumber\\
	&= \frac{\lambda \lsb V_\beta(As_0) - V_\beta(AU_1s_0) \rsb + \mu_{l2} \lsb V_\beta(D_Ls_0) - V_\beta(D_LU_1s_0) \rsb}{\lambda + \mu_{l2} + \mu_{c1} + \beta}\nonumber\\
	&\qquad+ \frac{\mu_{c1} \lsb V_\beta(s_0) - V_\beta(D_1U_1s_0) \rsb}{\lambda + \mu_{l2} + \mu_{c1} + \beta}\nonumber\\
	&= \frac{\mu_{l2} \lsb V_\beta(D_Ls_0) - V_\beta(D_LU_1s_0) \rsb + \mu_{c1} \lsb V_\beta(s_0) - V_\beta(D_1U_1s_0) \rsb}{\lambda + \mu_{l2} + \mu_{c1} + \beta}\label{eqn:B3_rhs}
	\end{align}}
	where the second equality follows since we have $\pi(As_0) = \mathtt{SM1}$, and hence
	$V_\beta(As_0) = V_\beta(AU_1s_0)$.
	Since $\pi$ is optimal, the term on the RHS of \eqref{eqn:B3_rhs} must be positive.
\end{proof}

%
%

\section{Proofs supporting Theorem~\ref{th:partial_switch}} 
\subsection{Proof of Proposition~\ref{pro:wall_induction}} \label{app:wall_induction}
Let $\pi$ be a delay-optimal policy and \textit{cloud-first}. Suppose for some $s\in\Sp$, we have
\[\pi(s + (x_0,0,0,0)) =\mathtt{SM2} \quad \forall x_0\in\N\]
Suppose, contrary to the claim, that:
\[\pi(s + (0,0,0,1)) = \mathtt{idle}.\]
Consider a system with state process $S(t)$, initial state $s + (0,0,0,1)$, operating under policy $\pi$. Let job $J$ be the job at the tail of the base queue at $t=0$. Since the SM2 cloud queue has at least one job $t=0$ and the cloud scheduling is SM2-prioritizing, the cloud will exclusively serve SM2 jobs until this queue is empty.

Define $\sigma_{c2}$ as the time of the first SM2 cloud departure (with $\sigma_{c2} \sim \text{Exp}(\mu_{c2})$), and $\tau$ as the first time a job is assigned SM2. Since the SM2 local queue is empty at $t=0$, for $t \leq \tau$, the state evolves only via arrivals or SM2 cloud departures. Thus, we have $\tau < \sigma_{l2}$, because at the first SM2 cloud departure, the system transitions into a state $s + (x_0,0,0,0)$ for some $x_0 \in \mathbb N_0$, where the policy assigns SM2.

As the base queue only grows until $t=\tau$, we may assume job $J$ is assigned SM2 at time $\tau$. However, note that job $J$ was already in the system at $t=0$. As previously argued (cf. Proposition~\ref{pro:offload_empty}), the cost incurred by waiting before assigning job $J$ to SM2 could have been strictly reduced by immediately assigning SM2 at $t=0$ instead of waiting a positive, random time before taking this action anyway. Hence, policy $\pi$ is not optimal, contradicting our assumption.\proofover


\subsection{Proof of Proposition~\ref{pro:the_wall}} \label{app:the_wall}
From Lemma~\ref{lem:SM1_afterN_0}, we know that
\[\pi^*(s + (x_0, 0, 0, 0)) = \mathtt{SM2} \quad \forall x_0 \in \mathbb N_0\]
Applying Proposition~\ref{pro:wall_induction} gives us Part~\ref{pro:parta}. For Part~\ref{pro:partb} observe that for all states $(N_0+1, 0, 0, 0) + (x_0, 0, 0, 0)$ for $x_0\in\N_0$ the optimal policy assigns both SM1 and SM2 in succession. Since SM2 is always assigned, the argument used in the proof of Proposition~\ref{pro:wall_induction} extends to this case giving:
\[\pi^*(s' + (x_0, 0, 0, 0)) = \mathtt{SM2} \quad \forall x_0 \in \mathbb N_0\]
The result follows from applying Proposition~\ref{pro:wall_induction}.\proofover


\subsection{Proof of Lemma~\ref{lem:dec_switches}} \label{app:dec_switches}
By construction, for all $k \in \mathbb N$:
\[\pi^*(m, 0, 1, k) = \mathtt{SM2} \quad \forall m \geq N_k.\]
From Proposition~\ref{pro:the_wall},
\[\pi^*(m, 0, 0, k+1) = \mathtt{SM2} \quad \forall m \geq N_k,\]
implying the sequence is non-increasing:
\[N_0 \geq N_1 \geq N_2 \geq \dots\]
A similar argument holds for $\{N_k'\}_{k \in \N}$.\proofover

\end{document}